%% file: paper_sicon.tex
\let\SIAMOriginalLabel\label
\let\SIAMOriginalRefstepcounter\refstepcounter
\AddToHook{package/hyperref/before}{%
  \let\label\SIAMOriginalLabel
  \let\refstepcounter\SIAMOriginalRefstepcounter
}

\documentclass[final]{siamart190516}

\usepackage{amssymb}
\usepackage{mathtools}
\usepackage{booktabs}
\usepackage{graphicx}
\usepackage{float}
\usepackage{pgfplots}
\pgfplotsset{compat=1.16, table/search path={figures_py/}}
\usepackage{xspace}
\usepackage[normalem]{ulem}

\newsiamremark{remark}{Remark}
\newsiamremark{assumption}{Assumption}
\newsiamremark{example}{Example}

\newcommand{\X}{\mathsf{X}}
\newcommand{\Ux}{\mathsf{U}}
\newcommand{\E}{\mathbb{E}}
\newcommand{\Prob}{\mathbb{P}}
\newcommand{\R}{\mathbb{R}}

\newcommand{\Tcal}{\mathcal{T}}
\newcommand{\PSi}{\Psi}
\newcommand{\Phix}{\Phi}
\newcommand{\bbeta}{\beta}
\newcommand{\norminfty}[1]{\lVert #1\rVert_\infty}
\newcommand{\calH}{\mathcal{H}}

\newcommand{\calC}{\mathcal{C}}
\newcommand{\calJ}{\mathcal{J}}
\newcommand{\calM}{\mathcal{M}}

\newcommand{\norm}[1]{\lVert #1\rVert}
\newcommand{\veps}{\varepsilon}
\newcommand{\one}{\mathbf{1}}

\DeclareMathOperator*{\argmin}{arg\,min}

\DeclareMathOperator{\dist}{dist}

\graphicspath{{figures_py/}{tables_py/}}

\begin{document}

\title{Poisson Tangent Limits and Critical Policy Switching for Sampled
Bellman Operators}

\author{Ming-Zhe Dai\thanks{School of Automation, Central South University,
Changsha 410083, China (\email{mingzhe\_dai@csu.edu.cn}).}
\and
Chengxi Zhang\thanks{School of IoT Engineering, Jiangnan University, Wuxi,
Jiangsu 214000, China (\email{dongfangxy@163.com}). Corresponding author.}}

\maketitle

\begin{abstract}
Consider a discounted Markov decision process with continuous action
space in which, at each state visit, the controller draws a random pool
of $N$ candidate actions and selects among them.  When the optimal
action set has zero mass under the sampling distribution, the value of
this random-candidate model converges to the optimal value as $N$ grows,
but the rate of convergence and the asymptotic selection rule are
governed jointly by the geometry of the optimal set and by the
transition kernels of the near-optimal candidates.  This paper develops
an exact first-order theory of both.  The rescaled near-optimal
candidates converge to a marked Poisson point process, and the leading
term of the value gap is the fixed point of a nonlinear tangential
Bellman operator, a stochastic generalization of the classical resolvent
that emerges when several optimal actions compete.  The asymptotic
selection at exact ties is genuinely dynamic, driven by the transition
kernels through the fixed point, and admits an explicit Mecke integral formula at the limiting
Poisson level; perturbing the tie at the critical rate yields a
switching layer in which the value gap and the selection interpolate
continuously between the branch regimes, and in the unique-optimum
heterogeneous case the critical class is propagated at the slowest
global scale through the discounted reachability of the limit policy.  Numerical
experiments confirm the predicted rates, constants, and selection
probabilities.
\end{abstract}

\begin{keywords}
Markov decision processes, random action sets, extreme value theory,
Poisson point processes, critical policy switching
\end{keywords}

\begin{AMS}
90C40, 60G70, 60G55, 49L20
\end{AMS}

\section{Introduction}
\label{sec:intro}

Consider a discounted Markov decision process (MDP) with a continuous
action space, and suppose that at each visit to a state the controller does
not optimize over the full action space but rather over a random set of
$N$ candidate actions drawn i.i.d.\ from a state-dependent sampling
distribution.  This ``random candidate action'' interface is a stylized
abstraction of the rank-and-select step of several sampling-based
schemes: a cheap upstream mechanism (a generative model, a black-box
simulator, a motion planner, or a screening score) produces a finite
menu of actions that a downstream evaluator then ranks and selects.
Motivating analogues include sampling-based stochastic optimal control
\cite{Theodorou10, Williams18}, model-based random search (the
cross-entropy method \cite{deBoer05}, model-reference adaptive search
\cite{HuFuMarcus07}), and sampling-based motion planning
\cite{KaramanFrazzoli11}.  Although these methods differ materially in
pool generation, adaptation, and evaluation, the present model isolates
their common per-decision rank-and-select interface.

The object of this paper is the \emph{random-candidate Bellman model}: the
controller observes the entire candidate set before choosing, so the
one-step value of a state is the expectation of the maximum over the $N$
offered actions of the $Q$-function.  Two questions are natural and
coupled: how fast does the discounted value $V_N$ converge to
$V^\star$ as $N\to\infty$, and which near-optimal candidate does the
finite-pool controller asymptotically select?  Both are answered
exactly in terms of the \emph{local geometry
of the optimal action set}: the value gap $V^\star - V_N$ decays at a
polynomial rate set by the flatness of the Bellman deficiency and the
effective codimension of the optimal set under the sampling measure,
and tie-breaking among several optimal actions is a genuinely dynamic
phenomenon governed by the transition kernels through a nonlinear
stochastic Bellman equation.

The classical contraction theory of discounted dynamic programming
\cite{Blackwell65,Denardo67,Puterman94}, its convergence refinements
\cite{SantosRust04,Munos07}, and the stochastic approximation machinery
underlying $Q$-learning-type methods
\cite{Jaakkola94,WatkinsDayan92,NeufeldSester24} provide the framework
for the analysis here.  For continuous
action spaces, adaptive discretization has been analyzed in the online
reinforcement learning literature \cite{Sinclair23,CaoKrishnamurthy20},
whose near-optimality dimension (introduced for X-armed bandits
\cite{BubeckMunosStoltzSzepesvari11}) plays the role of the critical
exponent here, along with kernel-based approximation \cite{OrmoneitSen02}
and sensitivity analysis of the optimal value \cite{Kern20}.  The
random-candidate interface
differs: the ``grid'' is random and regenerated at every state visit;
the approximation error is governed by extreme value theory rather than
by mesh width; and, crucially for policy selection, the limiting
error and the limiting policy are coupled through the transition
kernels of the near-optimal candidates.  The closest antecedents are
the stochastic-action-set MDP of Boutilier et al.\ \cite{Boutilier18},
of which Section~2 is the continuous-action specialization, and the
EMaQ operator \cite{Ghasemipour21}, with random action sets also
studied game-theoretically \cite{Flesch24}; stochastic Q-learning over
random subsets of a large \emph{discrete} action space is analyzed in
\cite{Fourati24}, whose convergence results concern the finite action
set, whereas the present paper treats continuous action spaces,
optimal sets of zero sampling mass, exact Poisson asymptotics, and the
resulting selection law.  The volume and
sample-maximum laws are classical
\cite{ZhigljavskyZilinskas,DavidNagaraja03,deHaanFerreira06,Resnick87};
the novelty of the marked Poisson limit of
Proposition~\ref{prop:poisson} lies in the marks (the transition
kernels), not in the point-process convergence itself.  The closest
algorithmic precedents, Sampled MuZero \cite{Hubert21} and
doubly-asynchronous value iteration \cite{TianYoungSutton22}, establish
almost-sure convergence or non-asymptotic bounds for the
\emph{sampled solution itself}, whereas the present analysis treats
the value of the random-candidate interface for a \emph{fixed} model
at the exact rate and with the asymptotic selection rule; limit
theorems for the estimated optimum of a noise-corrupted function under
non-adaptive random search \cite{ChiaGlynn13} quantify an
exploration-estimation trade-off between sampled points and
simulations per point, absent here ($Q$ is evaluated exactly), and
concern a single function's sampled optimum rather than the
random-candidate Bellman value gap and selection rule.

What none of these analyses addresses is the \emph{asymptotic regime} of
this paper: for a finite (hence positive-mass) candidate space the value
gap is governed by the probability of missing a single optimal action and
vanishes exponentially.  The complementary regime considered here is that
of continuous
action spaces, in which the optimal action set has zero mass under the
sampling measure:
the sampled maximum still converges to $V^\star$ almost surely (the pool
eventually contains actions arbitrarily close to the optimal set), but
the value gap vanishes only at the rate governed by the extreme values of
the sampled deficiency \emph{toward} the optimal set.  In that regime the
rescaled near-optimal candidates converge to a marked Poisson point
process carrying the transition kernels of the candidate actions, and the
leading constant of the value gap is the fixed point of a nonlinear
tangential Bellman operator rather than a product of a tail constant and
a resolvent factor; the resolvent appears only in the single-optimum
case, where the nonlinear Poisson envelope collapses to an affine
resolvent.  The leading term is \emph{not} the Hadamard semidifferential of the
Bellman map $\Tcal$ at $V^\star$ \cite{AkianGN16}: that map
$h\mapsto\bigl(\bbeta\max_{u\in\mathsf{A}_x^\star}P_{x,u}h\bigr)_{x\in\X}$,
nonlinear exactly when several optimal actions with different kernels
coexist, has the unique fixed point $h\equiv 0$ and carries no
information about the leading constant of the value gap.  The
tangential operator is instead the limit of the rescaled deficiency
operators of the family $\{\Tcal_N\}_{N\ge1}$: an affine resolvent with
an extreme-value constant in the single-optimum case, a genuinely
nonlinear stochastic envelope in the multiple-optimum case.

\paragraph{Contributions}
The main results of the paper are the following.
\begin{enumerate}
\item[(i)] \emph{Exact asymptotics of the sampled Bellman fixed point.}
Under a geometric regularity condition on the optimal action set
(finite unions of separated smooth strata with homogeneous normal
deficiency and power-law sampling density), the rescaled near-optimal
candidates converge to a marked Poisson point process
whose marks are the transition kernels, and the rescaled value gap
converges in sup norm to the unique fixed point of a nonlinear
tangential operator: the classical resolvent with an explicit
extreme-value constant when the optimal action is unique, a genuinely
nonlinear Poisson envelope when several optimal actions coexist.
\item[(ii)] \emph{Dynamic geometric tie-breaking and the critical
switching layer.}  For isolated competing optimal actions: when
several optimal actions of a state tie exactly,
the empirical optimal candidate converges to a random optimal action
whose law is governed jointly by the local volume constants and the
future-cost differences propagated through the transition kernels (a
symmetric two-state example selects the dynamic $0.75$ against the
static $1/2$); for a general compact optimal set, the limiting Poisson
selector admits the Mecke representation \eqref{eq:mecke}.  Exact
ties, nongeneric under generic perturbations,
arise structurally under symmetries of the model data; perturbing the
tie at the critical rate produces a switching layer in which value gap
and selection rule interpolate continuously between the two branch
regimes.
\item[(iii)] \emph{Propagation of the slowest critical scale under
heterogeneous exponents.}  When every state has a unique optimal
action and the local exponents differ across
states, the value gap is governed at the slowest global scale by the
critical class of states with the largest exponents, propagated
through the discounted reachability of the limit policy; finer layers
require additional transition-order assumptions.
\end{enumerate}

\paragraph{Organization and notation}
Section~2 introduces the random-candidate model and its contraction
properties; Section~3 develops the geometry of near-optimal actions;
Sections~4--5 prove the marked Poisson limit theorems and state the
first main theorem; Section~6 proves dynamic geometric tie-breaking
for exactly tied optimal actions under Assumption~\ref{ass:isolated}
(Theorem~\ref{thm:tiebreak}), derives the selection law as an explicit
Mecke integral representation on general compact optimal sets, and
perturbs the tie at the critical rate into a critical switching layer
(Theorem~\ref{thm:switching}); Section~7 treats
heterogeneous exponents; Section~8 the numerical experiments; Section~9
concludes.  Technical details are collected in the Supplement.

\emph{Notation.}  Throughout, $\X = \{1,\dots,n\}$ is a finite state
space.  For each $x\in\X$, the action space $\Ux_x\subset\R^{d_x}$ is
nonempty and compact.  The discount factor is $\bbeta\in(0,1)$.  Norms are
sup norms: $\norminfty{V} = \max_{x\in\X}|V(x)|$ for $V:\X\to\R$.  The
candidate sampling measures are $\mu_x$, $x\in\X$.  The value functions of
the continuous-action and random-candidate models are $V^\star$ and $V_N$,
respectively.  The Bellman deficiency is
$\Delta_x(u) = V^\star(x) - Q_{V^\star}(x,u)$, and the optimal action set
is $\mathsf{A}_x^\star = \{u: \Delta_x(u) = 0\}$, where $Q_V$ is the
Bellman operator of the continuous-action model (Section~2).  Write
$\calH^k$ for the $k$-dimensional Hausdorff measure and $B(x,r)$ for the
open Euclidean ball of radius $r$ around $x$.

\section{Random-Candidate Bellman Model}
\label{sec:model}

\subsection{The Continuous-Action MDP}
\label{sec:model:mdp}

The underlying model is a discounted Markov decision process with
finite state space $\X$ (Section~1).  For each state $x\in\X$, the
action space $\Ux_x\subset\R^{d_x}$ is nonempty and compact, equipped
with its Borel $\sigma$-algebra.  The immediate reward $r_x:\Ux_x\to\R$ is
continuous and bounded, and the transition probabilities
$p_{xy}:\Ux_x\to[0,1]$, $y\in\X$, are continuous with
$\sum_{y\in\X}p_{xy}(u)=1$ for all $u\in\Ux_x$.  The discount factor is
$\bbeta\in(0,1)$.

For a value function $V:\X\to\R$, define the $Q$-function
\begin{equation}
Q_V(x,u) \;=\; r_x(u) + \bbeta \sum_{y\in\X} p_{xy}(u)\, V(y),
\qquad x\in\X,\; u\in\Ux_x,
\label{eq:Q}
\end{equation}
and the Bellman operator
\begin{equation}
(\Tcal V)(x) \;=\; \max_{u\in\Ux_x} Q_V(x,u), \qquad x\in\X.
\label{eq:Bellman}
\end{equation}
By compactness of $\Ux_x$ and continuity of $r_x$ and $p_{xy}$, the
maximum in \eqref{eq:Bellman} is attained for every $x$, and
$\Tcal$ maps $\R^n$ into itself.  Since
\begin{equation}
|Q_V(x,u) - Q_W(x,u)| \;\le\; \bbeta\norminfty{V-W}
\qquad \forall\,u\in\Ux_x,
\end{equation}
the operator $\Tcal$ is a $\bbeta$-contraction in the sup norm
\cite{Denardo67,Blackwell65,Puterman94}, and by Banach's fixed point
theorem it has a unique fixed point
\begin{equation}
V^\star \;=\; \Tcal V^\star,
\label{eq:fixedstar}
\end{equation}
which is the infinite-horizon discounted value of the
continuous-action model.

\subsection{The Random-Candidate Interface}
\label{sec:model:candidates}

At each visit to state $x$, an upstream interface generates
\begin{equation}
U_{x,1},\dots,U_{x,N} \;\overset{\mathrm{i.i.d.}}{\sim}\; \mu_x,
\label{eq:candidates}
\end{equation}
where $\mu_x$ is a Borel probability measure on $\Ux_x$ with full support
in the sense that $\mu_x(B(u,\varepsilon))>0$ for every $u\in\Ux_x$ and
every $\varepsilon>0$.  The controller observes the entire candidate set
before selecting one action, and the candidates are regenerated
independently at each visit.  The decision problem is then a Markov
decision process whose value function satisfies the \emph{expected sampled
Bellman equation}
\begin{equation}
(\Tcal_N V)(x) \;=\; \E\left[\, \max_{1\le i\le N} Q_V(x, U_{x,i}) \right],
\qquad x\in\X,
\label{eq:BellmanN}
\end{equation}
where the expectation is with respect to the i.i.d.\ candidates
\eqref{eq:candidates}.  Let
\begin{equation}
V_N \;=\; \Tcal_N V_N
\label{eq:fixedN}
\end{equation}
be the unique fixed point of $\Tcal_N$, which exists by the same
contraction argument (Proposition~\ref{prop:contraction} below).  The
random-candidate model \eqref{eq:BellmanN} corresponds to the protocol
regenerating the pool at every stage; a pool fixed over the system
lifetime leads to a random Bellman operator whose distributional limit
can be developed within the point-process framework of Section~4, but
is not pursued here.

\begin{proposition}[Contraction, comparison, and uniform convergence]
\label{prop:contraction}
For every $N\ge 1$, the following hold.
\begin{enumerate}
\item[(a)] $\Tcal$ and $\Tcal_N$ are $\bbeta$-contractions on
$(\R^n,\norminfty{\cdot})$; in particular, $V^\star$ and $V_N$ are the
unique fixed points of $\Tcal$ and $\Tcal_N$, respectively.
\item[(b)] $\Tcal_N V \le \Tcal_{N+1} V \le \Tcal V$ componentwise for
every $V\in\R^n$, and $V_N \le V_{N+1} \le V^\star$ componentwise.
\item[(c)] If each $\mu_x$ has full support, then
$\lim_{N\to\infty}\norminfty{V_N - V^\star} = 0$.
\end{enumerate}
\end{proposition}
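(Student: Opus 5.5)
For (a), the plan is to work pointwise in the candidate sample and use the $Q$-Lipschitz bound stated just before \eqref{eq:fixedstar}. For fixed candidates $U_{x,1},\dots,U_{x,N}$, the elementary inequality $|\max_i a_i - \max_i b_i|\le \max_i|a_i-b_i|$ gives
\begin{equation*}
\Bigl|\max_{i} Q_V(x,U_{x,i}) - \max_i Q_W(x,U_{x,i})\Bigr| \le \bbeta\norminfty{V-W}.
\end{equation*}
Taking expectations, which are finite because $Q_V$ is bounded on the compact set $\Ux_x$ and measurable by continuity, gives $|(\Tcal_N V)(x)-(\Tcal_N W)(x)|\le\bbeta\norminfty{V-W}$. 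The same inequality with the maximum over $\Ux_x$ in place of the sample maximum gives the contraction property of $\Tcal$. Banach's fixed point theorem on the complete space $(\R^n,\norminfty{\cdot})$ then yields existence and uniqueness of $V^\star$ and $V_N$.

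For (b), I will realize the $N$- and $(N+1)$-pools on one probability space by appending $U_{x,N+1}$ to the first $N$ draws. Pointwise, the sample maximum over $N+1$ candidates is at least the maximum over the first $N$, and both are at most $\max_{u\in\Ux_x}Q_V(x,u)$. Taking expectations gives $\Tcal_N V\le\Tcal_{N+1}V\le\Tcal V$. To pass to fixed points I will use monotonicity: if $V\le W$ componentwise then $Q_V\le Q_W$ because $p_{xy}\ge0$, so each of $\Tcal_N$ and $\Tcal$ is monotone. Then $V_N=\Tcal_N V_N\le \Tcal_{N+1}V_N$, and iterating the monotone map $\Tcal_{N+1}$ gives $V_N\le \Tcal_{N+1}^k V_N\to V_{N+1}$. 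The inequality $V_{N+1}\le V^\star$ follows in the same way.

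For (c), I will bound $\norminfty{V^\star-V_N}\le \norminfty{\Tcal V^\star-\Tcal_N V^\star}/(1-\bbeta)$, using $V^\star-V_N=(\Tcal V^\star-\Tcal_N V^\star)+(\Tcal_N V^\star-\Tcal_N V_N)$ and the contraction from (a). It then suffices to show that $\E[\max_i Q_{V^\star}(x,U_{x,i})]\to \max_u Q_{V^\star}(x,u)$ for each of the finitely many $x$. This is the main, though still mild, step. Fix $\veps>0$ and a maximizer $u^\star$. By continuity of $Q_{V^\star}(x,\cdot)$ there is a ball $B(u^\star,\delta)$ on which $Q_{V^\star}\ge V^\star(x)-\veps$, and full support gives $q=\mu_x(B(u^\star,\delta))>0$. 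The gap $V^\star(x)-\max_iQ_{V^\star}(x,U_{x,i})$ is nonnegative and bounded by $2\norminfty{Q_{V^\star}}$ (finite by continuity on the compact set). It is at most $\veps$ except on the event that no candidate falls in the ball, which has probability $(1-q)^N$. Hence the expected gap is at most $\veps+2\norminfty{Q_{V^\star}}(1-q)^N$, so $\limsup_N\le\veps$, and letting $\veps\downarrow0$ proves (c).
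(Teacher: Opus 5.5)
Your proposal is correct and follows the paper's proof essentially step for step. Part (a) uses the pointwise max-Lipschitz bound and Banach's theorem; part (b) uses pool nesting and monotone iteration of $\Tcal_{N+1}$ and $\Tcal$; part (c) uses the bound $\norminfty{V_N-V^\star}\le(1-\bbeta)^{-1}\norminfty{\Tcal_N V^\star-\Tcal V^\star}$ together with a full-support hitting argument. The only cosmetic difference is that you use a ball around a single maximizer with the explicit bound $\veps+2\norminfty{Q_{V^\star}}(1-q)^N$, whereas the paper works with a tube around the whole argmax set and its modulus of continuity.
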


\begin{proof}
(a) For $V,W\in\R^n$ and $u\in\Ux_x$, the bound
$|Q_V(x,u)-Q_W(x,u)|\le\bbeta\norminfty{V-W}$ gives, for the pointwise
maxima,
\begin{equation}
\Bigl| \max_i Q_V(x,U_{x,i}) - \max_i Q_W(x,U_{x,i}) \Bigr|
\;\le\; \bbeta\norminfty{V-W}.
\end{equation}
Taking expectations yields
$\norminfty{\Tcal_N V - \Tcal_N W}\le\bbeta\norminfty{V-W}$,
and the same argument applies to $\Tcal$.  Banach's fixed point theorem
gives uniqueness of the fixed points.

(b) The maximum over $N+1$ candidates exceeds the maximum over the
first $N$, so $\Tcal_N V \le \Tcal_{N+1} V$ pointwise, and
$\{U_{x,i}\}_{i=1}^N \subseteq \Ux_x$ gives
$\Tcal_{N+1} V \le \Tcal V$.  Applying the first inequality with
$V = V_N$ and monotonicity of $\Tcal_{N+1}$ (coordinatewise
nondecreasing, since $Q_V$ increases in $V$),
$V_N = \Tcal_N V_N \le \Tcal_{N+1} V_N$; iterating the contraction
$\Tcal_{N+1}$ from $V_N$ gives $\Tcal_{N+1}^k V_N \to V_{N+1}$, so
$V_N \le V_{N+1}$, and iterating $\Tcal$ from $V_N$ likewise gives
$V_N \le V^\star$.

(c) Fix $x\in\X$ and $V\in\R^n$.  Let
$g_x(u) = Q_V(x,u)$ and $g_x^\star = \max_{\Ux_x} g_x$, attained on the
compact set $\mathsf{A} = \{u: g_x(u) = g_x^\star\}$.  For
$\varepsilon>0$, the tube $A^\varepsilon = \{u:
\dist(u,\mathsf{A}) < \varepsilon\}$ has $\mu_x(A^\varepsilon) > 0$ by
full support, and $g_x^\star - g_x$ is positive on the compact
complement, with positive minimum there.  Hence, with probability at
least $1 - e^{-N\mu_x(A^\varepsilon)}$, at least one candidate lies in
$A^\varepsilon$, on which event
$\max_i g_x(U_{x,i}) \ge g_x^\star - \omega_x(\varepsilon)$, where
$\omega_x(\varepsilon) = \max\{g_x^\star - g_x(u): \dist(u,\mathsf{A})
\le \varepsilon\}\downarrow 0$ by uniform continuity of $g_x$.
Therefore
\begin{equation}
\E\left[\max_i g_x(U_{x,i})\right] \to g_x^\star
\qquad\text{as } N\to\infty.
\end{equation}
Since $\X$ is finite, the componentwise convergence just established
gives $\norminfty{\Tcal_N V^\star - \Tcal V^\star}\to 0$.  Finally,
the triangle inequality and contraction give
\begin{equation}
\norminfty{V_N - V^\star}
\;\le\; \frac{1}{1-\bbeta}\,
\norminfty{\Tcal_N V^\star - \Tcal V^\star} \;\to\; 0.
\end{equation}
\end{proof}

\begin{remark}
Proposition~\ref{prop:contraction} guarantees that the asymptotics of
$V^\star - V_N$ are meaningful.  It is, however, deliberately coarse: the
bound carries no information about the scaling rate, the constants, or
the selection rule; all of these are determined by the geometric and
probabilistic structure developed in the next three sections, and the
exact rate is a fixed-point question, not a horizon question.
\end{remark}

\section{Geometry of Near-Optimal Actions}
\label{sec:geometry}

\subsection{Bellman Deficiency and the Optimal Action Set}
\label{sec:geometry:deficiency}

The distance of a candidate action from optimality is measured by the
\emph{Bellman deficiency}
\begin{equation}
\Delta_x(u) \;=\; V^\star(x) - Q_{V^\star}(x,u) \;\ge\; 0,
\qquad u\in\Ux_x,
\label{eq:deficiency}
\end{equation}
and the set of exactly optimal actions is
\begin{equation}
\mathsf{A}_x^\star \;=\; \bigl\{u\in\Ux_x : \Delta_x(u) = 0\bigr\}.
\label{eq:optimalset}
\end{equation}
By continuity of $\Delta_x$ and compactness of $\Ux_x$, the set
$\mathsf{A}_x^\star$ is nonempty and compact.  The value gap
$V^\star(x) - V_N(x)$ is ultimately determined by the distribution of the
random variable $\Delta_x(U)$ for $U\sim\mu_x$ in a shrinking neighborhood
of $\mathsf{A}_x^\star$; in the terminology of extreme value theory
\cite{deHaanFerreira06,DavidNagaraja03}, what matters is the Weibull-type
tail of $\Delta_x(U)$ near zero, and the purpose of this section is to
derive that tail law from three objects hidden inside a scalar tail
exponent: the \emph{geometry} of $\mathsf{A}_x^\star$, the
\emph{flatness order} of $\Delta_x$ normal to the optimal set, and the
\emph{sampling density} of $\mu_x$.

\subsection{Geometric Regularity: Stratified Optimal Sets with Homogeneous Normal
Deficiency}
\label{sec:geometry:assumption}

The standing structural assumption is the following.

\begin{assumption}[Stratified optimal set and homogeneous normal
deficiency]
\label{ass:G}
For each state $x\in\X$, the optimal action set is a finite union of
pairwise separated compact embedded $C^2$ submanifolds without
boundary:
\begin{equation}
\mathsf{A}_x^\star \;=\; \bigcup_{j=1}^{J_x} \calM_{xj},
\qquad \dim\calM_{xj} = k_{xj},
\label{eq:stratification}
\end{equation}
where ``separated'' means $\dist(\calM_{xj},\calM_{x\ell})>0$ for
$j\ne\ell$.  Each $\calM_{xj}$ lies in the interior of $\Ux_x$, with a
tubular neighborhood contained in $\Ux_x$ (no boundary sampling), and
$k_{xj} < d_x$ (positive codimension, so $\mathsf{A}_x^\star$ has zero
$\mu_x$-measure); boundary, angular, and self-intersecting
configurations are excluded, see Remark~\ref{rem:boundary}.  Write
$m_{xj} = d_x - k_{xj}$ for the normal codimension.
In a tubular neighborhood of $\calM_{xj}$, parameterize actions as
$u = \Theta_{xj}(z,v)$ with $z\in\calM_{xj}$ and normal vector
$v\in N_z\calM_{xj}$.  There exist $r_{xj}>0$ and functions
$q_{xj,z}: N_z\calM_{xj}\to\R$ such that the map
$(z,v)\mapsto q_{xj,z}(v)$ is continuous on the normal bundle
$N\calM_{xj}$, and:
\begin{enumerate}
\item[(i)] $q_{xj,z}(v) > 0$ for $v\ne 0$, and
$q_{xj,z}(av) = a^{r_{xj}} q_{xj,z}(v)$ for all $a>0$
($r_{xj}$-homogeneity);
\item[(ii)] uniformly in $z$,
\begin{equation}
\Delta_x\bigl(\Theta_{xj}(z,v)\bigr)
\;=\; q_{xj,z}(v) + o\bigl(\norm{v}^{r_{xj}}\bigr),
\qquad v\to 0.
\label{eq:homog}
\end{equation}
\end{enumerate}
Moreover, $\mu_x$ is absolutely continuous in a tubular neighborhood of
$\calM_{xj}$ with density $f_x$ satisfying, for
$v = \rho\theta$, $\rho>0$, $\theta$ a unit normal vector,
\begin{equation}
f_x\bigl(\Theta_{xj}(z,\rho\theta)\bigr)
\;=\; \rho^{\eta_{xj}}\, g_{xj}(z,\theta)
+ o\bigl(\rho^{\eta_{xj}}\bigr),
\label{eq:samplingdensity}
\end{equation}
uniformly in $z$ and $\theta$, where $\eta_{xj} > -m_{xj}$ and
$g_{xj}$ is continuous, nonnegative, and not identically zero.  All
remainders in \eqref{eq:homog}-\eqref{eq:samplingdensity} are uniform in
$z$ (and in $\theta$, for the density).
\end{assumption}

Assumption~\ref{ass:G} is a restrictive but verifiable local
regularity condition on the geometry of the optimal set.  It
is satisfied by the standard model in which each state has a unique
interior optimal action with smooth quadratically nondegenerate
deficiency and a sampling density continuous and positive in a
neighborhood of it ($k_{xj}=0$, $r_{xj}=2$, $\eta_{xj}=0$,
$m_{xj}/r_{xj}=d_x/2$, the classical rate of the Euclidean volume of a
$t$-level set of a smooth maximum).  It also covers the case in which
the optimal actions form a decision boundary of codimension one with a
nondegenerate linear deficiency normal to it ($k_{xj}=d_x-1$,
$r_{xj}=1$), where $m_{xj}/r_{xj} = 1$ irrespective of $d_x$.
Boundary-angular and self-intersecting optimal sets are excluded; see
Remark~\ref{rem:boundary}.

\subsection{The Near-Optimal Volume Law}
\label{sec:geometry:volume}

\begin{proposition}[Geometric near-optimal volume law]
\label{prop:volume}
Under Assumption~\ref{ass:G}, for each stratum $\calM_{xj}$ define
\begin{equation}
\alpha_{xj} \;=\; \frac{m_{xj} + \eta_{xj}}{r_{xj}}.
\label{eq:alpha}
\end{equation}
Then there exists a constant $C_{xj}\in(0,\infty)$ such that
\begin{equation}
\mu_x\bigl\{u : \Delta_x(u) \le t,\; u \text{ near } \calM_{xj}\bigr\}
\;\sim\; C_{xj}\, t^{\alpha_{xj}},
\qquad t\downarrow 0,
\label{eq:volume-law}
\end{equation}
where ``near $\calM_{xj}$'' means within a fixed tubular neighborhood
whose radius does not depend on $t$, and the constant is
\begin{equation}
C_{xj}
\;=\;
\int_{\calM_{xj}}
\int_{N_z\calM_{xj}}
\one\bigl\{q_{xj,z}(w)\le 1\bigr\}
\, g_{xj}\!\left(z,\tfrac{w}{\norm{w}}\right)
\, \norm{w}^{\eta_{xj}}\, dw\, d\calH^{k_{xj}}(z),
\label{eq:Cxj}
\end{equation}
with the direction term at $w=0$ defined arbitrarily (it is a
$\mu_x$-null set).  Set
\begin{equation}
\alpha_x \;=\; \min_j \alpha_{xj},
\qquad
\calJ_x^{\mathrm{crit}} \;=\; \{j : \alpha_{xj} = \alpha_x\},
\qquad
C_x \;=\; \sum_{j\in\calJ_x^{\mathrm{crit}}} C_{xj}.
\label{eq:critical}
\end{equation}
Then the overall near-optimal volume satisfies
\begin{equation}
\mu_x\bigl\{u : \Delta_x(u) \le t\bigr\}
\;\sim\; C_x\, t^{\alpha_x},
\qquad t\downarrow 0.
\label{eq:overall-volume}
\end{equation}
\end{proposition}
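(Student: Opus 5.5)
The plan is to localize to a single stratum, change variables into tubular coordinates, and rescale the normal variable so that the sublevel set $\{\Delta_x\le t\}$ becomes a fixed region $\{q_{xj,z}(w)\le 1\}$ up to vanishing errors. The overall law \eqref{eq:overall-volume} will then follow by summing over the finitely many separated strata.

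\emph{Step 1: localization.} Fix a tubular radius $\delta>0$ small enough that the tubes $T_{xj}=\{\Theta_{xj}(z,v):\norm{v}<\delta\}$ are pairwise disjoint, contained in $\Ux_x$, carry the density expansion \eqref{eq:samplingdensity}, and carry the expansion \eqref{eq:homog}. On the compact set $\Ux_x\setminus\bigcup_j T_{xj}$ the continuous function $\Delta_x$ is strictly positive, so it has a positive minimum $\kappa$. Hence for $t<\kappa$ the set $\{\Delta_x\le t\}$ is the disjoint union of its intersections with the $T_{xj}$. It therefore suffices to prove \eqref{eq:volume-law} for each $j$. The final statement \eqref{eq:overall-volume} then follows because the strata with $\alpha_{xj}>\alpha_x$ contribute $o(t^{\alpha_x})$, and $C_x>0$.

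\emph{Step 2: change of variables and scaling.} In the tube, use the tubular diffeomorphism $\Theta_{xj}$, whose Jacobian is $J(z,v)=1+O(\norm{v})$ uniformly; this uses the $C^2$ assumption and compactness. This gives
\[
\mu_x(\{\Delta_x\le t\}\cap T_{xj})=\int_{\calM_{xj}}\int_{\norm{v}<\delta}\one\{\Delta_x(\Theta_{xj}(z,v))\le t\}\,f_x(\Theta_{xj}(z,v))\,J(z,v)\,dv\,d\calH^{k_{xj}}(z).
\]
Substitute $v=t^{1/r}w$ with $r=r_{xj}$. Then $dv=t^{m/r}dw$ and, writing $\eta=\eta_{xj}$, $f_x=t^{\eta/r}\norm{w}^{\eta}\bigl(g_{xj}(z,w/\norm{w})+o(1)\bigr)$. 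Also $t^{-1}\Delta_x(\Theta_{xj}(z,t^{1/r}w))=q_{xj,z}(w)+o(1)\norm{w}^{r}$ by homogeneity, uniformly in $z$ and in $w$ on the region $\norm{t^{1/r}w}<\delta$. Factoring out $t^{\alpha_{xj}}$ leaves an integral that should converge to \eqref{eq:Cxj} by dominated convergence.

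\emph{Step 3: domination and boundary negligibility (main obstacle).} Two facts are needed.

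First, a uniform confinement of the rescaled sublevel set. Continuity and positivity of $q$ on the compact unit normal sphere bundle give $q_{xj,z}(w)\ge c\norm{w}^{r}$ with $c>0$. Shrinking $\delta$ so that the remainder in \eqref{eq:homog} is at most $\tfrac c2\norm{v}^r$ yields $\Delta_x\ge\tfrac c2\norm{v}^r$ on the tube. Hence the indicator forces $\norm{w}\le(2/c)^{1/r}$. The integrand is then dominated by $\mathrm{const}\cdot\norm{w}^{\eta}\one\{\norm{w}\le R\}$, which is integrable since $\eta>-m$. Here the density remainder must be handled as $o(\rho^{\eta})$ uniformly, i.e.\ bounded by $\epsilon\rho^{\eta}$ for small $\rho$.

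Second, pointwise convergence of the indicators. This requires the level set $\{q_{xj,z}(w)=1\}$ to be null. Homogeneity gives this: in polar coordinates the boundary is the graph $\rho=q_{xj,z}(\theta)^{-1/r}$, which has Lebesgue measure zero in $N_z\calM_{xj}$. Off this graph the indicators converge because of the $o(1)\norm{w}^r$ bound.

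\emph{Step 4: conclusion.} Finiteness of $C_{xj}$ follows from the same domination. Positivity follows since $g_{xj}\not\equiv0$ is continuous, so it is positive on an open set of $(z,\theta)$, and $\{q\le1\}$ contains a small ball. Summing over the strata as in Step 1 completes the argument.
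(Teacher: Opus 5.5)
Your proposal is correct and follows essentially the same route as the paper. Both arguments use tubular coordinates and the rescaling $v=t^{1/r_{xj}}w$, then dominated convergence based on the confinement $q_{xj,z}(w)\ge c\norm{w}^{r_{xj}}$ and the integrability of $\norm{w}^{\eta_{xj}}$ from $\eta_{xj}>-m_{xj}$, then a positive minimum of $\Delta_x$ off the tubes, and finally summation over the separated strata with the smallest exponent dominating. The differences are cosmetic: you pass to the limit through a.e.\ convergence of the indicators, using that $\{q_{xj,z}(w)=1\}$ is a null radial graph, whereas the paper sandwiches the sublevel set between $\{q_{xj,z}(w)\le 1\pm\delta\}$ and lets $\delta\downarrow 0$; you also verify $C_{xj}>0$, which the paper leaves implicit.
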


\begin{proof}
The contribution of one stratum $\calM_{xj}$ is computed; the summation
over $j$ at the end uses the separation assumption.  In tubular
coordinates $u = \Theta_{xj}(z,v)$ with $z\in\calM_{xj}$ and
$v\in N_z\calM_{xj}$ restricted to $\norm{v}\le\rho_0$ (a uniform tubular
radius exists because $\calM_{xj}$ is compact and $C^2$), the Euclidean
volume element satisfies
\begin{equation}
du \;=\; \bigl(1 + O(\norm{v})\bigr)\, dv\, d\calH^{k_{xj}}(z).
\label{eq:volume-element}
\end{equation}
For $t\le t_0$ small, the region
$\{\Delta_x(u)\le t,\; u \text{ near } \calM_{xj}\}$ is contained in the
tubular neighborhood, and the uniformity in \eqref{eq:homog} gives the
sandwich
\begin{equation*}
\{q_{xj,z}(v)\le (1-\delta)t\}
\;\subseteq\;
\{\Delta_x(u)\le t,\; u\text{ near }\calM_{xj}\}
\;\subseteq\;
\{q_{xj,z}(v)\le (1+\delta)t\},
\end{equation*}
in tubular coordinates, for every $\delta>0$ and all sufficiently
small $t$.  By $r_{xj}$-homogeneity, $q_{xj,z}(t^{1/r_{xj}}w) =
t\,q_{xj,z}(w)$, so in the $w$ coordinates $v = t^{1/r_{xj}}w$ the two
outer regions are $\{q_{xj,z}(w)\le 1\pm\delta\}$, with
$t^{-\alpha_{xj}}\mu_x$-masses converging, as $t\downarrow 0$, to
$(1\pm\delta)^{\alpha_{xj}}$ times the right-hand side of
\eqref{eq:volume-limit}; sandwiching and letting $\delta\downarrow 0$
identifies the middle limit.  The normal volume element contributes a
factor
$t^{m_{xj}/r_{xj}}$, and the density asymptotics \eqref{eq:samplingdensity}
contributes $t^{\eta_{xj}/r_{xj}}$, for a total exponent
$\alpha_{xj} = (m_{xj}+\eta_{xj})/r_{xj}$.  Concretely,
\begin{align}
& t^{-\alpha_{xj}}\,
\mu_x\bigl\{u : \Delta_x(u)\le t,\; u\text{ near }\calM_{xj}\bigr\} \notag\\
&\qquad \longrightarrow
\int_{\calM_{xj}} \int_{N_z\calM_{xj}}
\one\bigl\{q_{xj,z}(w)\le 1\bigr\}
\, g_{xj}\!\left(z,\tfrac{w}{\norm{w}}\right)
\, \norm{w}^{\eta_{xj}}\, dw\, d\calH^{k_{xj}}(z),
\label{eq:volume-limit}
\end{align}
where \eqref{eq:volume-element}, the homogeneity factors, and the
uniformity of the remainders in
\eqref{eq:homog}-\eqref{eq:samplingdensity} were used; dominated
convergence justifies the limit uniformly in the sandwich parameter
$\delta$, and the sandwich identifies the limit as $\delta\downarrow 0$.
To check finiteness: by continuity and strict positivity of $q_{xj,z}$
on the unit normal sphere there are uniform constants $0<c<C<\infty$
with
\begin{equation}
c\norm{w}^{r_{xj}} \;\le\; q_{xj,z}(w) \;\le\; C\norm{w}^{r_{xj}},
\qquad \forall\, z,\ w,
\label{eq:qcC}
\end{equation}
so $\{q_{xj,z}(w)\le 1\}$ lies in the ball $\norm{w}\le c^{-1/r_{xj}}$
and $g_{xj}$ is bounded; $\eta_{xj} > -m_{xj}$ makes
$\int_{\norm{w}\le R}\norm{w}^{\eta_{xj}}dw$ finite, and $\calM_{xj}$
has finite $k_{xj}$-dimensional Hausdorff measure.  This proves
\eqref{eq:volume-law}-\eqref{eq:Cxj}.

It remains to handle the region outside the union of the tubular
neighborhoods and the summation over strata.  The complement
$\Ux_x\setminus\bigcup_j N_j$ is compact, and $\Delta_x$ attains a
strictly positive minimum $\delta_0>0$ there (the $\calM_{xj}$ are
exactly the zero set of $\Delta_x$, finitely many of them), so for
$t<\delta_0$ no mass from the complement contributes to
\eqref{eq:overall-volume}.  For the summation, the $j$-stratum
contribution scales as $t^{\alpha_{xj}}$; as $t\downarrow 0$ the
smallest exponent dominates and only the strata with
$\alpha_{xj}=\alpha_x$ survive, giving $C_x$ in \eqref{eq:critical}.
This proves \eqref{eq:overall-volume}.
\end{proof}

\subsection{Interpretation}
\label{sec:geometry:interpretation}

Under the common-exponent condition \eqref{eq:common-alpha} of
Section~5, the candidate-pool value gap scales as
$V^\star(x) - V_N(x) \asymp N^{-1/\alpha_x}$, the exponent set by the
\emph{critical} strata (noncritical strata give slower local rates) as
the ratio of the \emph{effective codimension} $m_{xj}+\eta_{xj}$ (the
normal codimension $m_{xj} = d_x-k_{xj}$ plus the sampling-density
exponent $\eta_{xj}$ of \eqref{eq:samplingdensity}) to the
\emph{flatness order} $r_{xj}$ of the homogeneity \eqref{eq:homog},
i.e. $N^{-1/\alpha_x} = N^{-r_{xj}/(m_{xj}+\eta_{xj})}$ for
$j\in\calJ_x^{\mathrm{crit}}$.  Canonical instances: the isolated
quadratic maximum gives
$\alpha_x = d_x/2$ (rate $N^{-2/d_x}$); a sharp decision boundary
gives $\alpha_x = 1$ (rate $N^{-1}$) regardless of $d_x$; a sampling
density vanishing linearly in the normal direction ($\eta_{xj} = 1$)
adds one to the effective codimension.

\begin{remark}[Excluded boundary and angular cases]
\label{rem:boundary}
If an optimal stratum lies on the boundary of $\Ux_x$, or two strata
intersect at an angle, the volume law \eqref{eq:volume-law} acquires
boundary or conic correction factors; a complete treatment requires
stratified integration on manifolds with corners
\cite{deHaanFerreira06} and is deferred.
\end{remark}

\section{Marked Poisson Limits}
\label{sec:poisson}

The scalar deficiency $\Delta_x(U)$ does not propagate the gap through
the dynamics: near-optimal candidates may have very different
transition kernels, which matters at the leading order.  Each candidate
therefore carries as a mark its \emph{projection onto the optimal
action set}, which determines its transition kernel.

\subsection{Projection and Intensity Measures}
\label{sec:poisson:measure}

For each state $x$, let $\pi_x$ denote the nearest-point projection onto
$\mathsf{A}_x^\star$: in a tubular neighborhood, $\pi_x(u)$ is the unique
point of $\mathsf{A}_x^\star$ realizing $\dist(u,\mathsf{A}_x^\star)$;
outside these neighborhoods $\pi_x$ is extended arbitrarily to a
measurable map $\Ux_x\to\mathsf{A}_x^\star$, the choice being irrelevant
for all limits below (only candidates with $\Delta_x(u)\le t$,
$t\downarrow 0$, contribute, lying in the tubes eventually).

By Proposition~\ref{prop:volume}, the same tubular-coordinate
calculation localizes the near-optimal volume law to the optimal set:
for every Borel set $B\subseteq\mathsf{A}_x^\star$ with
$\sigma_x(\partial B)=0$,
\begin{equation}
t^{-\alpha_x}\;\mu_x\bigl\{u : \Delta_x(u)\le t,\; \pi_x(u)\in B\bigr\}
\;\longrightarrow\; \sigma_x(B),
\qquad t\downarrow 0,
\label{eq:local-volume}
\end{equation}
where the finite measure $\sigma_x$ on $\mathsf{A}_x^\star$ is defined on
each stratum $\calM_{xj}$ as the $z$-integration in
\eqref{eq:Cxj}, restricted to the critical strata:
\begin{equation}
\sigma_x(\calM_{xj}) \;=\; C_{xj},
\qquad
\sigma_x(\mathsf{A}_x^\star) \;=\; C_x
\;=\; \sum_{j\in\calJ_x^{\mathrm{crit}}} C_{xj}.
\label{eq:sigma}
\end{equation}
The identity \eqref{eq:local-volume} is proved in the Supplement: it
follows from the same tubular-coordinate calculation as
Proposition~\ref{prop:volume}, with the $z$-integration restricted to
$B$.

\subsection{Convergence to a Marked Poisson Point Process}
\label{sec:poisson:convergence}

\begin{proposition}[Marked extreme point process]
\label{prop:poisson}
Fix a state $x$ and let $\alpha_x$ and $\sigma_x$ be as in
\eqref{eq:critical}-\eqref{eq:sigma}.  Define the point process on
$[0,\infty)\times\mathsf{A}_x^\star$ by
\begin{equation}
\Pi_{x,N}
\;=\;
\sum_{i=1}^{N}
\delta_{\bigl(N^{1/\alpha_x}\Delta_x(U_{x,i}),\,
\pi_x(U_{x,i})\bigr)}.
\label{eq:PiN}
\end{equation}
Then $\Pi_{x,N}$ converges in distribution, in the vague topology on
locally finite counting measures on $[0,\infty)\times\mathsf{A}_x^\star$,
to a Poisson point process $\Pi_x$ with intensity measure
\begin{equation}
\Lambda_x(dz,du) \;=\; \alpha_x z^{\alpha_x-1}\, dz\, \sigma_x(du),
\qquad z\ge 0,\; u\in\mathsf{A}_x^\star,
\label{eq:Lambda}
\end{equation}
where the $z$-coordinate is integrated against Lebesgue measure and
$\Lambda_x(\{0\}\times\mathsf{A}_x^\star) = 0$ (no atom at $z=0$).
Equivalently,
\begin{equation}
\Lambda_x\bigl([0,z]\times B\bigr)
\;=\; z^{\alpha_x}\, \sigma_x(B),
\qquad z\ge 0,\; B\subseteq\mathsf{A}_x^\star \text{ Borel}.
\label{eq:Lambda-z}
\end{equation}
\end{proposition}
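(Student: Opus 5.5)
The plan is to use the standard criterion for binomial point processes converging to a Poisson limit (Resnick \cite{Resnick87}, Prop.~3.21; Kallenberg's theorem). Write $\Pi_{x,N}=\sum_{i=1}^N\delta_{Y_{N,i}}$ with $Y_{N,i}=(N^{1/\alpha_x}\Delta_x(U_{x,i}),\pi_x(U_{x,i}))$ i.i.d. in $E=[0,\infty)\times\mathsf{A}_x^\star$. For such a triangular array, vague convergence in distribution to a Poisson process with intensity $\Lambda_x$ is equivalent to
\[
N\,\Prob(Y_{N,1}\in\cdot)\to\Lambda_x
\]
vaguely on $E$. Here $E$ is locally compact because $\mathsf{A}_x^\star$ is compact, and relatively compact sets are exactly those contained in some $[0,z_0]\times\mathsf{A}_x^\star$. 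The whole proof therefore reduces to checking this mean-measure convergence on a determining class.

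For the determining class I will take rectangles $[0,z]\times B$ with $z>0$ and $B\subseteq\mathsf{A}_x^\star$ Borel with $\sigma_x(\partial B)=0$. Finite unions and differences of such rectangles form a DC-semiring of $\Lambda_x$-continuity sets, which suffices for vague convergence. On a rectangle I set $t=zN^{-1/\alpha_x}\downarrow 0$ and compute
\[
N\,\Prob\bigl(\Delta_x(U)\le zN^{-1/\alpha_x},\ \pi_x(U)\in B\bigr)
= z^{\alpha_x}\,t^{-\alpha_x}\mu_x\{\Delta_x\le t,\ \pi_x\in B\}\to z^{\alpha_x}\sigma_x(B),
\]
using the localized volume law \eqref{eq:local-volume}. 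This is exactly \eqref{eq:Lambda-z}. The same identity, read as a distribution function in $z$, shows that the $z$-marginal has density $\alpha_x z^{\alpha_x-1}$ with no atom at $0$; this gives \eqref{eq:Lambda}. Because the limit is continuous in $z$, every $z>0$ is a continuity level, so no $z$ has to be excluded.

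Two points need care. First, on noncritical strata $\sigma_x$ vanishes. If $B$ meets a stratum with $\alpha_{xj}>\alpha_x$, Proposition~\ref{prop:volume} gives that stratum a mass $O(t^{\alpha_{xj}})=o(t^{\alpha_x})$, so it contributes nothing in the limit. This is consistent with \eqref{eq:local-volume}, which I take as given (the paper proves it in the Supplement). Second, candidates outside the tubes carry arbitrary projections. For $t<\delta_0$ they have $\Delta_x\ge\delta_0$ and never enter $[0,z]\times\mathsf{A}_x^\star$, so they are irrelevant; this is the same complement argument used in Proposition~\ref{prop:volume}.

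I expect the main obstacle to be the justification that rectangles with $\sigma_x$-continuity sets $B$ determine vague convergence on $E$. This matters particularly when $\sigma_x$ charges several separated strata of different dimensions. The first thing I would try is Kallenberg's criterion: show $\E\,\Pi_{x,N}(I)\to\Lambda_x(I)$ and $\Prob(\Pi_{x,N}(I)=0)\to e^{-\Lambda_x(I)}$ for finite unions $I$ of disjoint rectangles. The void probability is computed directly as
\[
\bigl(1-N^{-1}\cdot N\Prob(Y_{N,1}\in I)\bigr)^N\to e^{-\Lambda_x(I)},
\]
which follows from the mean convergence because the $Y_{N,i}$ are independent.
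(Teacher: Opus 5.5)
Your proposal is correct and is essentially the paper's argument: both treat $\Pi_{x,N}$ as a binomial process of i.i.d.\ marked points and reduce, via Kallenberg's criterion, to the convergence $N\,\Prob(Y_{N,1}\in[0,z]\times B)\to z^{\alpha_x}\sigma_x(B)$ on rectangles of $\Lambda_x$-continuity sets, which the localized volume law \eqref{eq:local-volume} supplies (the paper's Supplement does exactly this cellwise check). The only difference is the form of the criterion: the paper uses Laplace functionals, approximating $1-e^{-f}$ by cellwise constants, whereas you use mean measures and void probabilities on a dissecting ring, which additionally needs the Poisson limit to be simple; that holds here because $\Lambda_x$ is diffuse in $z$.
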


\begin{proof}
By Kallenberg's criterion \cite{Kallenberg21} it suffices to prove
Laplace-functional convergence for nonnegative continuous functions
with compact support.
Let $f:[0,\infty)\times\mathsf{A}_x^\star\to[0,\infty)$ be continuous
with support contained in $[0,L]\times\mathsf{A}_x^\star$.  By
\eqref{eq:local-volume} applied with $B$ the projection of the support,
and the change of variables $z = N^{1/\alpha_x} t$,
\begin{align}
& N\, \E\left[1 - e^{-f\bigl(N^{1/\alpha_x}\Delta_x(U),\,\pi_x(U)\bigr)}\right]
\notag\\
&\qquad = N \int_0^\infty \int_{\mathsf{A}_x^\star}
\left(1 - e^{-f(z,u)}\right)
\, d\mu_x\Bigl\{u : N^{1/\alpha_x}\Delta_x(u)\in dz,\; \pi_x(u)\in du\Bigr\} \notag\\
&\qquad \longrightarrow
\int_0^\infty \int_{\mathsf{A}_x^\star}
\left(1 - e^{-f(z,u)}\right)
\, \alpha_x z^{\alpha_x-1}\, dz\, \sigma_x(du),
\label{eq:laplace}
\end{align}
where the limit uses the pointwise convergence of the rescaled volume law
\eqref{eq:local-volume}, boundedness of $f$, and dominated convergence
(the integrand is bounded by $1$ and the intensity measure is finite on
the support of $f$).  The independence of the candidates then gives the
Laplace functional of $\Pi_{x,N}$:
\begin{align}
\E\exp\Bigl\{-\bigl\langle f,\Pi_{x,N}\bigr\rangle\Bigr\}
&=\;
\left[1 - \E\left(1 - e^{-f\bigl(N^{1/\alpha_x}\Delta_x(U),\,\pi_x(U)\bigr)}\right)
\right]^N \notag\\
&\;\longrightarrow\;
\exp\Bigl\{-\int_{\R_+\times\mathsf{A}_x^\star}
\left(1 - e^{-f(z,u)}\right)\,\Lambda_x(dz,du)\Bigr\},
\label{eq:laplace-limit}
\end{align}
which is the Laplace functional of the Poisson point process with
intensity $\Lambda_x$ (\cite{DaleyVereJones03}).  The vague-convergence statement follows (topological details and the
extension to all bounded continuous functions are in the Supplement),
and \eqref{eq:Lambda-z} is immediate from \eqref{eq:Lambda} by
Fubini.
\end{proof}

\subsection{Minimal Deficiencies and Static Geometric Selection}
\label{sec:poisson:orderstats}

For a fixed state $x$, let
\begin{equation}
D_{x,1:N}\;\le\; D_{x,2:N}\;\le\;\cdots\;\le\; D_{x,N:N}
\end{equation}
be the ascending order statistics of $\Delta_x(U_{x,1}),\dots,
\Delta_x(U_{x,N})$.

\begin{proposition}[Minimal deficiencies, order statistics, and static selection]
\label{prop:orderstats}
Under Assumption~\ref{ass:G}, for each fixed $j\ge 1$ and $N\ge j$,
\begin{equation}
N^{1/\alpha_x} D_{x,j:N} \;\Rightarrow\;
C_x^{-1/\alpha_x}\, \Gamma_j^{1/\alpha_x},
\label{eq:orderstat-limit}
\end{equation}
where $\Gamma_j = E_1 + \cdots + E_j$ with $E_i$ i.i.d.\ unit
exponentials.  Moreover,
\begin{equation}
N^{1/\alpha_x}\, \E\, D_{x,j:N}
\;\longrightarrow\;
C_x^{-1/\alpha_x}\,
\frac{\Gamma(j + 1/\alpha_x)}{\Gamma(j)},
\label{eq:orderstat-moment}
\end{equation}
If the critical optimal set consists of several separated strata, the
stratum hosting the least deficient candidate satisfies
\begin{equation}
\Prob\{J_{x,N} = j\}
\;\longrightarrow\;
\frac{C_{xj}}{\sum_{\ell\in\calJ_x^{\mathrm{crit}}} C_{x\ell}},
\qquad j\in\calJ_x^{\mathrm{crit}},
\label{eq:branch-prob}
\end{equation}
while strata with $\alpha_{xj}>\alpha_x$ are selected with probability
tending to zero.
\end{proposition}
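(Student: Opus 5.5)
We prove the three parts in order, using two tools: the marked Poisson limit of Proposition~\ref{prop:poisson}, and the explicit binomial representation of the order statistics, which gives the moment convergence.

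\emph{Distributional limit.} The projection of $\Pi_{x,N}$ onto the first coordinate, $\sum_i\delta_{N^{1/\alpha_x}\Delta_x(U_{x,i})}$, converges by Proposition~\ref{prop:poisson} (continuous mapping, since $\mathsf{A}_x^\star$ is compact) to a Poisson process on $[0,\infty)$. Its mean measure is $\Lambda_x([0,z]\times\mathsf{A}_x^\star)=C_x z^{\alpha_x}$. We then use the mapping $z\mapsto C_x z^{\alpha_x}$. It sends this process to a unit-rate Poisson process on $[0,\infty)$, whose ordered points are $\Gamma_1<\Gamma_2<\cdots$. Hence the $j$-th smallest point of the limit is $C_x^{-1/\alpha_x}\Gamma_j^{1/\alpha_x}$.

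To pass from the point processes to the order statistics, note that $\{N^{1/\alpha_x}D_{x,j:N}\le z\}=\{\Pi_{x,N}([0,z]\times\mathsf{A}_x^\star)\ge j\}$. The set $[0,z]\times\mathsf{A}_x^\star$ has $\Lambda_x$-null boundary, so its counts converge in distribution. This gives \eqref{eq:orderstat-limit}.

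A direct route is also available. Write $F(t)=\mu_x\{\Delta_x\le t\}$. The count $\#\{i:\Delta_x(U_{x,i})\le N^{-1/\alpha_x}z\}$ is Binomial$(N,F(N^{-1/\alpha_x}z))$, and $NF(N^{-1/\alpha_x}z)\to C_xz^{\alpha_x}$ by \eqref{eq:overall-volume}. This is the Poisson limit of binomials.

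\emph{Moments.} Distributional convergence gives \eqref{eq:orderstat-moment} once we have uniform integrability of $N^{1/\alpha_x}D_{x,j:N}$. We control the tail by
\[
\Prob\{N^{1/\alpha_x}D_{x,j:N}>z\}=\Prob\{\mathrm{Bin}(N,F(N^{-1/\alpha_x}z))<j\}.
\]
\eqref{eq:overall-volume} gives $F(t)\ge c\,t^{\alpha_x}$ for $t\le t_0$. For $t>t_0$ we use $F(t)\ge F(t_0)>0$. Together these give $F(t)\ge c\min(t,t_0)^{\alpha_x}$. A Chernoff bound on the binomial then yields
\[
\Prob\{N^{1/\alpha_x}D_{x,j:N}>z\}\le C_j(1+z^{\alpha_x})^{j}e^{-cz^{\alpha_x}}
\quad\text{for }z\le t_0N^{1/\alpha_x}.
\]
For larger $z$ the tail is at most $C_j N^{j}(1-F(t_0))^{N-j}$. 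This is exponentially small in $N$, and since $\Delta_x$ is bounded it beats the factor $N^{1/\alpha_x}$. The two ranges together give uniform integrability.

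The limiting moment is
\[
\E\,C_x^{-1/\alpha_x}\Gamma_j^{1/\alpha_x}=C_x^{-1/\alpha_x}\frac{\Gamma(j+1/\alpha_x)}{\Gamma(j)},
\]
using the Gamma$(j)$ density of $\Gamma_j$. We expect this uniform-integrability step to be the main obstacle. The difficulty is the range $z\sim N^{1/\alpha_x}$, where the local volume law no longer applies and we must fall back on the crude positive lower bound on $F$.

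\emph{Selection.} Let $J_{x,N}$ be the stratum of the least deficient candidate. The candidates near different strata form thinnings of the sample; the strata are separated, so for small $t$ each candidate with $\Delta_x\le t$ lies near exactly one of them. By Proposition~\ref{prop:poisson} with $B=\calM_{xj}$, which is clopen in $\mathsf{A}_x^\star$ and so has $\sigma_x$-null boundary, the limit splits into independent Poisson processes with intensities $\alpha_xz^{\alpha_x-1}C_{xj}\,dz$ for $j\in\calJ_x^{\mathrm{crit}}$.

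Noncritical strata are handled by \eqref{eq:volume-law}: their count of candidates with $N^{1/\alpha_x}\Delta_x\le L$ has mean $NC_{xj}(LN^{-1/\alpha_x})^{\alpha_{xj}}\to0$, because $\alpha_{xj}>\alpha_x$.

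The event $\{J_{x,N}=j\}$ is determined by the smallest points of each sub-process. On an event whose probability tends to one (all minima in $[0,L]$, and no ties), it is a continuity set of the limit. The limit minima are independent. Their transforms $C_{xj}m_j^{\alpha_x}$ are independent exponentials with rates $C_{xj}$, and the monotone map $m\mapsto m^{\alpha_x}$ preserves which one is smallest. The smallest among independent exponentials is the $j$-th with probability $C_{xj}/\sum_{\ell\in\calJ_x^{\mathrm{crit}}}C_{x\ell}$, which is \eqref{eq:branch-prob}. The noncritical strata receive probability at most $\Prob\{\text{a noncritical candidate with } N^{1/\alpha_x}\Delta_x\le L\}+\Prob\{N^{1/\alpha_x}D_{x,1:N}>L\}$. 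Letting $N\to\infty$ and then $L\to\infty$, this bound tends to zero.
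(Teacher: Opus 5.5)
Your proposal is correct and follows the paper's proof essentially step for step. The order-statistic limit comes from Proposition~\ref{prop:poisson} and the map $z\mapsto C_xz^{\alpha_x}$ to a unit-rate process, uniform integrability from the exact binomial tail of $D_{x,j:N}$ split at $z\sim t_0N^{1/\alpha_x}$ (the content of Lemma~\ref{lem:ui-min} and Lemma~S.1), and the branch law from independent Poisson restrictions to the separated critical strata, with noncritical strata removed by $N^{1-\alpha_{xj}/\alpha_x}\to 0$. One small slip: it is $m_j^{\alpha_x}$ that is exponential with rate $C_{xj}$, whereas $C_{xj}m_j^{\alpha_x}$ is a unit exponential, and rescaling by $j$-dependent constants would not preserve which stratum is smallest; the exponential race should therefore be run on $m_j^{\alpha_x}$, as your monotonicity remark intends.
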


\begin{proof}
From \eqref{eq:overall-volume},
\begin{equation}
\Prob\bigl(N^{1/\alpha_x} D_{x,1:N} > z\bigr)
\;=\;
\left[1 - \Prob\bigl(\Delta_x(U) \le z N^{-1/\alpha_x}\bigr)\right]^N
\;\longrightarrow\;
e^{-C_x z^{\alpha_x}},
\label{eq:weibull}
\end{equation}
so $N^{1/\alpha_x}D_{x,1:N}$ converges in distribution to a
Weibull-type minimum law with tail $e^{-C_x z^{\alpha_x}}$.  For the joint law of the leading order
statistics, apply Proposition~\ref{prop:poisson}: the ordered points of
$\Pi_x$ have the representation $Z_j = C_x^{-1/\alpha_x}
\Gamma_j^{1/\alpha_x}$, $j\ge 1$, where $\Gamma_j$ is the $j$-th arrival
time of a unit-rate Poisson process, since the map
$z\mapsto C_x z^{\alpha_x}$ converts the intensity
\eqref{eq:Lambda} to unit rate.  This gives \eqref{eq:orderstat-limit}.

For the moment convergence \eqref{eq:orderstat-moment}, the family
$\{N^{1/\alpha_x}D_{x,j:N}\}$ must be uniformly integrable: the
minimum ($j=1$) is so by Lemma~\ref{lem:ui-min}; for $j\ge 2$ the
minimum tail does not control $D_{x,j:N}$ (order statistics move only
upward), and Lemma~S.1 of the Supplement supplies the uniform
integrability for every fixed $j$ through the exact binomial tail of
the $j$-th order statistic.  Hence the family is uniformly
integrable, and the distributional limit gives the moment limit, the
gamma moments of \eqref{eq:orderstat-limit}:
\begin{equation*}
\E\,\Gamma_j^{1/\alpha_x} \;=\; \frac{\Gamma(j+1/\alpha_x)}{\Gamma(j)}.
\end{equation*}
For \eqref{eq:branch-prob}: the limit Poisson processes on the
separated strata are independent, on stratum $j$ the minimal
$z$-coordinate $Z_j$ has tail $\Prob(Z_j > z) = e^{-C_{xj} z^{\alpha_x}}$
by \eqref{eq:Lambda-z}; since the strata are separated, the global
minimum lies on the stratum with the smallest $Z_j$, and
\begin{align}
\Prob\bigl(Z_j < \min_{\ell\ne j} Z_\ell\bigr)
&=\;
\int_0^\infty \alpha_x C_{xj} z^{\alpha_x-1}\,
e^{-\bigl(\sum_\ell C_{x\ell}\bigr) z^{\alpha_x}}\, dz
\;=\;
\frac{C_{xj}}{\sum_\ell C_{x\ell}},
\label{eq:branch-integral}
\end{align}
which gives the limit \eqref{eq:branch-prob}.  A stratum with
$\alpha_{xj}>\alpha_x$ cannot host the minimum with positive
probability in the limit: its expected number of candidates within
rescaled deficiency $zN^{-1/\alpha_x}$ is
\begin{equation*}
N\,\Prob\bigl(\Delta_x(U)\le zN^{-1/\alpha_x}\bigr)
\;\sim\; C_{xj} z^{\alpha_{xj}} N^{1-\alpha_{xj}/\alpha_x}
\;\longrightarrow\; 0,
\end{equation*}
since $\alpha_{xj}>\alpha_x$.
\end{proof}

\begin{remark}
Proposition~\ref{prop:orderstats} exhibits a \emph{static} geometric
selection: among exactly tied optimal actions, the finite random pool
favors the strata that are easier to sample (larger $C_{xj}$).  In the
dynamic model the strata carry different transition kernels, and
selection enters the Bellman fixed point instead (Section~6).
\end{remark}

\begin{lemma}\label{lem:ui-min}
Under Assumption~\ref{ass:G}, for every $x\in\X$ and every
$\eta>0$,
\begin{equation}
\sup_{N\ge 1}\;\E\Bigl[\bigl(N^{1/\alpha_x}D_{x,1:N}\bigr)^{1+\eta}\Bigr]
\;<\;\infty,
\label{eq:ui-min}
\end{equation}
where $D_{x,1:N} := \min_{1\le i\le N}\Delta_x(U_{x,i})$ is the
minimal deficiency among the sampled candidates; and the same bound
holds, uniformly over $\norminfty{h}\le R$ for every $R<\infty$, for
the shifted minima
$m_N(h) := \min_{1\le i\le N}\{N^{1/\alpha_x}\Delta_x(U_{x,i})
+ \bbeta P_{x,U_{x,i}}h\}$, where
$P_{x,u}h = \sum_{y\in\X}p_{xy}(u)h(y)$.
\end{lemma}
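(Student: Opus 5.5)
The plan is to bound the tail of $N^{1/\alpha_x}D_{x,1:N}$ uniformly in $N$ by a stretched-exponential function and then integrate it. By independence,
\begin{equation*}
\Prob\bigl(N^{1/\alpha_x}D_{x,1:N}>s\bigr)=\bigl[1-F_x(sN^{-1/\alpha_x})\bigr]^N\le \exp\bigl\{-N F_x(sN^{-1/\alpha_x})\bigr\},
\end{equation*}
where $F_x(t)=\mu_x\{\Delta_x\le t\}$. The key input is a \emph{global} lower bound on $F_x$. It should have the form
\begin{equation*}
F_x(t)\ge c_0\min\{t,t_0\}^{\alpha_x}\qquad\text{for all } t\ge 0,
\end{equation*}
with constants $c_0,t_0>0$.

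This bound comes in two pieces. For $t\le t_0$ it follows from Proposition~\ref{prop:volume}, since $F_x(t)\sim C_xt^{\alpha_x}$ with $C_x>0$; shrinking $t_0$ gives $F_x(t)\ge \tfrac12 C_xt^{\alpha_x}$. For $t>t_0$ we simply use monotonicity, $F_x(t)\ge F_x(t_0)$.

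Split the tail according to whether $sN^{-1/\alpha_x}\le t_0$. In that regime $NF_x(sN^{-1/\alpha_x})\ge c_0 s^{\alpha_x}$, which gives the tail bound $e^{-c_0s^{\alpha_x}}$. In the other regime $NF_x\ge Nc_0t_0^{\alpha_x}$. There is also a trivial cap: $\Delta_x\le \bar\Delta:=\max\Delta_x<\infty$ by compactness and continuity, so the tail vanishes for $s>N^{1/\alpha_x}\bar\Delta$. On the range $t_0N^{1/\alpha_x}<s\le\bar\Delta N^{1/\alpha_x}$ the tail is at most $e^{-c_0t_0^{\alpha_x}N}$. Hence
\begin{equation*}
\E\bigl[(N^{1/\alpha_x}D_{x,1:N})^{p}\bigr]=\int_0^\infty ps^{p-1}\Prob(\cdot>s)\,ds
\le\int_0^\infty ps^{p-1}e^{-c_0s^{\alpha_x}}ds+(\bar\Delta N^{1/\alpha_x})^{p}e^{-c_0t_0^{\alpha_x}N},
\end{equation*}
with $p=1+\eta$. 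Both terms are bounded uniformly in $N$, because polynomial growth is beaten by exponential decay. This proves \eqref{eq:ui-min}; in fact all moments are bounded.

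For the shifted minima $m_N(h)$, note that $|\bbeta P_{x,u}h|\le \bbeta R$ because $P_{x,u}$ is a stochastic vector. Therefore
\begin{equation*}
N^{1/\alpha_x}D_{x,1:N}-\bbeta R\;\le\; m_N(h)\;\le\; N^{1/\alpha_x}D_{x,1:N}+\bbeta R
\end{equation*}
pointwise: the minimum of the shifted quantities lies within $\bbeta R$ of the unshifted minimum. It follows that
\begin{equation*}
|m_N(h)|^{p}\le 2^{p-1}\bigl[(N^{1/\alpha_x}D_{x,1:N})^{p}+(\bbeta R)^p\bigr],
\end{equation*}
and taking the supremum over $N$ and over $\norminfty{h}\le R$ gives the uniform bound. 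No measurability issue arises, since the bound holds for each $h$ with a constant that does not depend on $h$.

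The only real obstacle is making the lower bound on $F_x$ global in $t$. The asymptotic volume law only controls small $t$, so large deviations of the rescaled minimum (of order $N^{1/\alpha_x}$) must be handled separately. The plan does this with the monotonicity floor $F_x(t_0)>0$, which gives exponential-in-$N$ decay, together with boundedness of $\Delta_x$ to truncate the range. One point needs care: positivity of $F_x(t_0)$ is guaranteed by $C_x>0$, and this holds because $g_{xj}$ is not identically zero.
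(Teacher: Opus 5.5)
Your proposal is correct and follows essentially the same route as the paper. Both arguments get a Weibull tail $e^{-cz^{\alpha_x}}$ from the small-$t$ volume law up to $z_N=N^{1/\alpha_x}t_0$, use an $e^{-c'N}$ bound on the range capped at $N^{1/\alpha_x}\max\Delta_x$, integrate by layer-cake, and then pass to the shifted minima via the almost-sure bound $|m_N(h)-N^{1/\alpha_x}D_{x,1:N}|\le\bbeta R$. Your global bound $F_x(t)\ge c_0\min\{t,t_0\}^{\alpha_x}$ is just a compact way of packaging the paper's two tail regimes.
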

\begin{proof}
By \eqref{eq:volume-law} there are constants $0<c<C<\infty$ such that
$c\,t^{\alpha_x}\le\Prob(\Delta_x(U)\le t)\le C\,t^{\alpha_x}$ for
all sufficiently small $t$.  Writing $Y_N := N^{1/\alpha_x}D_{x,1:N}$
and $\Delta_x^{\max} := \max_{u\in\Ux_x}\Delta_x(u)<\infty$, the
Weibull tail gives $\Prob(Y_N>z)\le e^{-c z^{\alpha_x}}$ for $z\le
z_N := N^{1/\alpha_x}x_0$ ($x_0$ small), while for
$z_N<z\le N^{1/\alpha_x}\Delta_x^{\max}$,
$\Prob(Y_N>z)\le\Prob(D_{x,1:N}>x_0)\le e^{-c' N}$.  The layer-cake representation
$\E[Y_N^{1+\eta}]\le 1+(1+\eta)\int_1^\infty z^{\eta}\Prob(Y_N>z)\,dz$
splits the integral at $z_N$: the Weibull regime contributes the
gamma integral $\int_1^\infty z^{\eta}e^{-c z^{\alpha_x}}dz<\infty$,
and the intermediate regime at most
$(N^{1/\alpha_x}\Delta_x^{\max})^{1+\eta}e^{-c'N}\to 0$.  Finally,
$|m_N(h)-m_N|\le\bbeta R$ almost surely, so
$\E[|m_N(h)|^{1+\eta}]\le 2^{1+\eta}(\E[Y_N^{1+\eta}]+(\bbeta
R)^{1+\eta})$.
\end{proof}

\section{The Tangent Extreme-Value Bellman Equation}
\label{sec:tangent}

\subsection{Common Critical Exponent}
\label{sec:tangent:common}

The main theorem requires that all states share a common critical
exponent:
\begin{equation}
\alpha_x \;\equiv\; \alpha
\qquad \forall\, x\in\X.
\label{eq:common-alpha}
\end{equation}
This does not require the action dimensions, optimal-set dimensions, or
flatness orders to coincide across states, only
$(d_x - k_{xj} + \eta_{xj})/r_{xj} = \alpha$ for the critical strata of
every state (the heterogeneous case is treated in Section~7).  Under
\eqref{eq:common-alpha}, set
\begin{equation}
\veps_N \;=\; N^{-1/\alpha},
\label{eq:epsN}
\end{equation}
and for $h\in\R^n$ define the \emph{rescaled deficiency operator}
\begin{equation}
(\Phix_N h)(x)
\;=\;
\veps_N^{-1}\left[ V^\star(x) - \Tcal_N\bigl(V^\star - \veps_N h\bigr)(x) \right],
\qquad x\in\X.
\label{eq:PhiN}
\end{equation}
Expanding the definition of $\Tcal_N$,
\begin{equation}
(\Phix_N h)(x)
\;=\;
\E \min_{1\le i\le N}
\left\{ \veps_N^{-1}\Delta_x(U_{x,i}) + \bbeta P_{x,U_{x,i}} h \right\},
\label{eq:PhiN-expanded}
\end{equation}
where $P_{x,u} h = \sum_{y\in\X} p_{xy}(u) h(y)$ is the transition
operator of action $u$.  This coupling matters: a near-optimal
candidate carries both its local extreme deficiency and the mark (its
transition kernel) acting on the future value gap.

\subsection{The Limit Operator and Its Properties}
\label{sec:tangent:Psi}

Let $\Pi_x$ be the Poisson point process of
Proposition~\ref{prop:poisson}.  Define the \emph{tangential
extreme-value Bellman operator}
\begin{equation}
(\PSi h)(x)
\;=\;
\E \inf_{(z,u)\in\Pi_x}
\left\{ z + \bbeta P_{x,u} h \right\},
\qquad h\in\R^n,\; x\in\X,
\label{eq:Psi}
\end{equation}
where the marks $u\in\mathsf{A}_x^\star$ are the projections of the
candidate actions.  Because $\Lambda_x$ is finite on bounded $z$-intervals
but has infinite total mass, the infimum in \eqref{eq:Psi} is almost
surely finite and attained.

\begin{lemma}[Contraction of the tangential operators]
\label{lem:contraction}
For every $N\ge 1$, $\Phix_N$ and $\PSi$ are $\bbeta$-contractions on
$(\R^n,\norminfty{\cdot})$:
\begin{equation}
\norminfty{\Phix_N h - \Phix_N k}
\;\le\; \bbeta\norminfty{h-k},
\qquad
\norminfty{\PSi h - \PSi k}
\;\le\; \bbeta\norminfty{h-k},
\qquad \forall\, h,k\in\R^n.
\label{eq:contraction-bound}
\end{equation}
Consequently $\PSi$ has a unique fixed point $h^\star$:
\begin{equation}
h^\star \;=\; \PSi h^\star.
\label{eq:hstar}
\end{equation}
\end{lemma}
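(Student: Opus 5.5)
The plan is to prove both contraction bounds with one pathwise comparison of infima, the same mechanism as in Proposition~\ref{prop:contraction}(a). The only new point is that the limit operator $\PSi$ involves an infimum over infinitely many Poisson points, so we must first check that $\PSi h$ is finite.

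\emph{Step 1: $\Phix_N$.} Fix $x$ and a realization of the candidates $U_{x,1},\dots,U_{x,N}$. Since $P_{x,u}$ is a probability vector,
\begin{equation*}
\bigl|\bbeta P_{x,u}h-\bbeta P_{x,u}k\bigr|\le\bbeta\norminfty{h-k}
\qquad\text{for every } u.
\end{equation*}
Each term $\veps_N^{-1}\Delta_x(U_{x,i})+\bbeta P_{x,U_{x,i}}h$ therefore differs from its $k$-counterpart by at most $\bbeta\norminfty{h-k}$. The minimum over $i$ is $1$-Lipschitz for the sup-norm on the index set, so the two minima differ by at most the same amount. The minima are bounded random variables, so we may take expectations in \eqref{eq:PhiN-expanded} and then the maximum over $x\in\X$, which gives the first bound. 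An alternative route is directly through \eqref{eq:PhiN} together with the contraction of $\Tcal_N$ from Proposition~\ref{prop:contraction}(a): the factors $\veps_N^{-1}$ and $\veps_N$ cancel.

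\emph{Step 2: $\PSi$ is well defined.} This is the main obstacle. On each realization of $\Pi_x$ we need the infimum in \eqref{eq:Psi} to be finite and attained, and we need it to be integrable.
\begin{itemize}
\item \emph{Bounds on the infimum.} Set $M=\max_x|h(x)|$. Then $z+\bbeta P_{x,u}h\ge z-\bbeta M$ for every point. Let $Z_1$ be the smallest $z$-coordinate of $\Pi_x$. By Proposition~\ref{prop:poisson} and \eqref{eq:Lambda-z}, $Z_1$ is a.s.\ finite and positive, with tail $e^{-C_x z^{\alpha_x}}$. Hence the infimum is bounded above by $Z_1+\bbeta M$.
\item \emph{Attainment.} Only points with $z\le Z_1+2\bbeta M$ can compete for the infimum. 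There are a.s.\ finitely many such points, since $\Lambda_x([0,L]\times\mathsf{A}_x^\star)=L^{\alpha_x}C_x<\infty$. So the infimum is a minimum over a finite set.
\item \emph{Integrability.} The infimum lies in $[-\bbeta M,\,Z_1+\bbeta M]$, and $\E Z_1=C_x^{-1/\alpha_x}\Gamma(1+1/\alpha_x)<\infty$. So $(\PSi h)(x)$ is a finite real number.
\item \emph{Measurability.} The infimum is a measurable functional of the point process. I would argue this by writing it as a limit of minima over points with $z\le L$, $L\to\infty$.
\end{itemize}

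\emph{Step 3: $\PSi$ is a contraction.} Couple $h$ and $k$ on the same realization of $\Pi_x$. Pointwise for each atom $(z,u)$, the objectives differ by at most $\bbeta\norminfty{h-k}$, so their infima do too. Taking expectations, which is legitimate by Step 2, gives the second bound.

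\emph{Step 4: fixed point.} Since $\bbeta<1$ and $(\R^n,\norminfty{\cdot})$ is complete, Banach's fixed point theorem yields the unique fixed point $h^\star=\PSi h^\star$ in \eqref{eq:hstar}.
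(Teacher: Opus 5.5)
Your proposal is correct and follows the paper's own argument. It compares the two minima pathwise over the same candidate pool (respectively the same realization of $\Pi_x$), using $|P_{x,u}h-P_{x,u}k|\le\norminfty{h-k}$, then takes expectations and applies Banach's fixed point theorem. Your Step~2 adds something the paper only asserts in a one-line remark before the lemma: it makes explicit that the Poisson infimum is finite, attained, and integrable, via the sandwich $-\bbeta M\le\inf\le Z_1+\bbeta M$ and $\E Z_1<\infty$.
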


\begin{proof}
For any action $u$,
$|P_{x,u}h - P_{x,u}k| \le \norminfty{h-k}$.  Comparing the two infima
over the same candidate points, their difference is bounded by the
maximum over the candidates of the perturbation, hence
$\norminfty{\Phix_N h - \Phix_N k}\le\bbeta\norminfty{h-k}$ after taking
the expectation.  The same argument applies to $\PSi$ pointwise on each
realization of $\Pi_x$.  Banach's fixed point theorem gives the unique
fixed point of $\PSi$.
\end{proof}

\begin{lemma}[Local uniform Poisson-envelope convergence]
\label{lem:uniform}
Assume Assumption~\ref{ass:G}, the common-exponent condition
\eqref{eq:common-alpha}, and continuity of the transition probabilities.
For every $R<\infty$,
\begin{equation}
\sup_{\norminfty{h}\le R} \norminfty{\Phix_N h - \PSi h}
\;\longrightarrow\; 0
\qquad\text{as } N\to\infty.
\label{eq:uniform-convergence}
\end{equation}
\end{lemma}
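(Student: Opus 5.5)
The plan is to prove pointwise convergence $\Phix_N h\to\PSi h$ for each fixed $h$, then upgrade to uniformity over the compact ball $\{\norminfty{h}\le R\}$. The upgrade uses the equicontinuity in Lemma~\ref{lem:contraction}: both $\Phix_N$ and $\PSi$ are $\bbeta$-Lipschitz in $h$, uniformly in $N$. Cover the ball by a finite $\delta$-net $\{h_1,\dots,h_K\}$. Then
\[
\sup_{\norminfty{h}\le R}\norminfty{\Phix_N h-\PSi h}\le 2\bbeta\delta+\max_{k\le K}\norminfty{\Phix_N h_k-\PSi h_k}.
\]
Letting $N\to\infty$ and then $\delta\downarrow0$ gives \eqref{eq:uniform-convergence}. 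Since $\X$ is finite, it suffices to work at a single state $x$.

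For fixed $h$ and $x$, write \eqref{eq:PhiN-expanded} as $\E\, m_N(h)$. Here
\[
m_N(h)=\min_i\{N^{1/\alpha}\Delta_x(U_{x,i})+\bbeta P_{x,U_{x,i}}h\}.
\]
The key step is to replace the mark $P_{x,U_{x,i}}h$ by $P_{x,\pi_x(U_{x,i})}h$. This introduces the functional $F(\mu)=\inf_{(z,u)\in\mu}\{z+\bbeta P_{x,u}h\}$ of the point process $\Pi_{x,N}$ in \eqref{eq:PiN}.

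\emph{Replacement error.} Because $|\bbeta P_{x,u}h|\le\bbeta R$, the minimum is attained among candidates with rescaled deficiency $z\le Y_N+2\bbeta R$, where $Y_N=N^{1/\alpha}D_{x,1:N}$. Such candidates satisfy $\Delta_x(U)\le N^{-1/\alpha}(Y_N+2R)$. On the event $\{Y_N\le L\}$ they lie within a tube around $\mathsf{A}_x^\star$ whose radius $\rho_N\to0$; by \eqref{eq:qcC}, $\rho_N\lesssim (N^{-1/\alpha}(L+2R))^{1/r_{xj}}$. Uniform continuity of $p_{xy}$ then gives $|P_{x,U}h-P_{x,\pi_x(U)}h|\le R\,\omega(\rho_N)\to0$. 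On $\{Y_N>L\}$ the difference is bounded by $2\bbeta R$, and the probability of this event is small uniformly in $N$ by the Weibull tail of Lemma~\ref{lem:ui-min}. Hence $\E|m_N(h)-F(\Pi_{x,N})|\to0$.

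\emph{Limit of $\E F(\Pi_{x,N})$.} This is the main obstacle, because $F$ is not a vaguely continuous functional in general. The plan is to truncate. Since $u\mapsto P_{x,u}h$ is continuous on the compact set $\mathsf{A}_x^\star$, the truncated functional $F_L(\mu)=\min\{L,\inf_{z\le L}(z+\bbeta P_{x,u}h)\}$ depends only on the restriction of $\mu$ to $[0,L]\times\mathsf{A}_x^\star$. It is continuous at almost every realization of $\Pi_x$: the limit process a.s.\ has finitely many points there, none at $z=L$, and distinct $z$-values. The mapping theorem together with Proposition~\ref{prop:poisson} then gives $F_L(\Pi_{x,N})\Rightarrow F_L(\Pi_x)$, and boundedness of $F_L$ gives convergence of expectations.

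Finally, remove the truncation. The difference satisfies
\[
|F-F_L|\le (|F|+L+\bbeta R)\,\one\{Y>L-\bbeta R\},
\]
where $Y$ is the minimal rescaled deficiency. This is controlled uniformly in $N$ by the $(1+\eta)$-moment bound of Lemma~\ref{lem:ui-min}, via Hölder's inequality. For the limit process the same bound comes from the Weibull tail $e^{-C_xz^\alpha}$. Letting $N\to\infty$ and then $L\to\infty$ gives $\Phix_N h\to\PSi h$ pointwise, and the net argument completes the proof.
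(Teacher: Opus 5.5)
Your proposal is correct and follows essentially the paper's route: replace the marks by their projections onto $\mathsf{A}_x^\star$ with a vanishing deterministic error, pass to the Poisson limit by continuous mapping through Proposition~\ref{prop:poisson}, use uniform integrability from Lemma~\ref{lem:ui-min}, and finish with the same $\delta$-net upgrade; the only variation is that you use a truncated minimum functional where the paper uses the void probabilities of the relatively compact sublevel sets $B_t$. One slip is harmless: the truncation bound should carry $\one\{Y>L-2\bbeta R\}$ rather than $\one\{Y>L-\bbeta R\}$, because a point with $z>L$ can undercut one with $z\le L$ by up to $2\bbeta R$, and this changes nothing once $L\to\infty$.
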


\begin{proof}
\emph{Step 1: pointwise distributional convergence.}
Fix $x$ and $h$ with $\norminfty{h}\le R$.  Write
$m_N(h) = \min_{1\le i\le N}\{N^{1/\alpha_x}\Delta_x(U_{x,i})
+ \bbeta P_{x,U_{x,i}}h\}$ and $m(h) =
\inf_{(z,u)\in\Pi_x}\{z + \bbeta P_{x,u}h\}$.  The first step replaces the
actions of the candidates by their projections to $\mathsf{A}_x^\star$.
Candidates with rescaled deficiency $z_i := N^{1/\alpha_x}\Delta_x(U_{x,i})$
bounded lie in $\{\Delta_x \le cN^{-1/\alpha_x}\}$, which shrinks to
$\mathsf{A}_x^\star$ ($\Delta_x\ge 0$ continuous, vanishing exactly
there); as $u\mapsto P_{x,u}h$ is continuous and
$\sup_{\Delta_x(u)\le s}\operatorname{dist}(u,\mathsf{A}_x^\star)\to 0$
as $s\downarrow 0$ by compactness of $\Ux_x$, the projection error of
the bounded-deficiency candidates is controlled by a deterministic
sequence: for $L<\infty$ define
\begin{equation}
\eta_N(L) \;:=\;
\sup\Bigl\{ \bigl|P_{x,u}h - P_{x,\pi_x(u)}h\bigr| :
\Delta_x(u) \le L N^{-1/\alpha_x} \Bigr\},
\label{eq:etaN}
\end{equation}
then $\eta_N(L)\to 0$ for every $L$.  Candidates
with $z_i > t+\bbeta R$ cannot be minimizers at level $t$
($P_{x,u}h\ge -\norminfty{h}\ge -R$).  The \emph{projected} minimum
$\hat m_N(h) = \min_i\{z_i + \bbeta P_{x,\pi_x(U_{x,i})}h\}$ therefore
satisfies, for the deterministic sequence $\eta_N(t+\bbeta R)\to 0$,
\[
\Prob\bigl(m_N(h)>t\bigr) \;\ge\;
\Prob\bigl(\hat m_N(h) > t+\bbeta\eta_N\bigr),\qquad
\Prob\bigl(m_N(h)>t\bigr) \;\le\;
\Prob\bigl(\hat m_N(h) > t-\bbeta\eta_N\bigr).
\]
Now $\hat m_N(h) > t$ iff the empirical marked point process
$\Pi_{x,N}$ of Proposition~\ref{prop:poisson} is empty on
\begin{equation}
B_t \;=\; \bigl\{(z,u)\in\R_+\times\mathsf{A}_x^\star :
 z + \bbeta P_{x,u}h \le t \bigr\},
\label{eq:tail-set}
\end{equation}
a relatively compact set (its points satisfy $z\le t+\bbeta R$),
whose boundary $\partial B_t = \{z = t - \bbeta P_{x,u}h\}$ is a graph
over $u$.  The $z$-marginal of the intensity measure \eqref{eq:Lambda}
has the continuous density $\alpha_x z^{\alpha_x-1}$, so
$\Lambda_x(\partial B_t) = 0$ for every $t$.  The empty-set functional
$\xi\mapsto 1\{\xi(B_t)=0\}$ is continuous in the vague topology at
every configuration without mass on $\partial B_t$, and
$\Prob(\Pi_x(\partial B_t)=0)=1$; by Proposition~\ref{prop:poisson}
and the continuous mapping theorem,
\begin{equation}
\Prob\bigl(m_N(h) > t\bigr)
\;\longrightarrow\;
\Prob\bigl(\Pi_x(B_t) = 0\bigr)
\;=\; \exp\{-\Lambda_x(B_t)\}
\;=\; \Prob\bigl(m(h) > t\bigr),
\label{eq:tail-limit}
\end{equation}
for every $t$; the limiting survival function is continuous in $t$
(similarly $\Lambda_x(\partial B_t)=0$), so $m_N(h)\Rightarrow m(h)$ in
distribution.  \emph{Shifted marks.}  The argument also applies when the values are
shifted by a deterministic perturbation $a_N\colon\Ux_x\to\R$ with
$\sup_u|a_N(u)-a(\pi_x(u))|\to 0$ for some bounded $a$ continuous on
$\mathsf{A}_x^\star$.  The shifts may be negative, so a
shifted point process is not formed; instead the minimum functional is
perturbed directly.  Write $m^a_N(h) = \min_i\{z_i + a_N(U_{x,i}) + \bbeta
P_{x,U_{x,i}}h\}$ and, for the projected version,
$\tilde m^a_N(h) = \min_i\{z_i + a(\pi_x(U_{x,i})) + \bbeta
P_{x,\pi_x(U_{x,i})}h\}$.  Candidates with
$z_i > t + \|a\|_\infty + \bbeta R + 1$ exceed the level $t$ in both
minima (their shifted marks are at least
$z_i - \|a_N-a\circ\pi_x\|_\infty - \|a\|_\infty - \bbeta R > t$ for
$N$ large), so both minima are decided on the complementary sublevel
set, where the projection error of Step~1 is at most
$\bbeta\,\eta_N(t+\|a\|_\infty+\bbeta R+1)$; hence, with
$\delta_N = \sup_u|a_N(u)-a(\pi_x(u))| + \bbeta\,\eta_N(t+\|a\|_\infty
+\bbeta R+1)\to 0$,
\[
\Prob\bigl(\tilde m^a_N(h) > t+\delta_N\bigr) \;\le\;
\Prob\bigl(m^a_N(h) > t\bigr) \;\le\;
\Prob\bigl(\tilde m^a_N(h) > t-\delta_N\bigr).
\]
The projected minimum has the empty-set characterization of Step~1 in
terms of the \emph{unshifted} process $\Pi_{x,N}$, with the set
$B^a_t = \{(z,u)\in\R_+\times\mathsf{A}_x^\star : z + a(u) + \bbeta
P_{x,u}h \le t\}$ in place of \eqref{eq:tail-set}; its boundary
$\partial B^a_t = \{z = t - a(u) - \bbeta P_{x,u}h\}$ is a graph over
$u$ with $\Lambda_x(\partial B^a_t) = 0$, the $z$-marginal of the
intensity \eqref{eq:Lambda} having a continuous density, so
\begin{equation}
\Prob\bigl(m^a_N(h) > t\bigr)
\;\longrightarrow\;
\Prob\bigl(m_a(h) > t\bigr),
\label{eq:tail-shifted}
\end{equation}
where $m_a(h) := \inf_{(z,u)\in\Pi_x}\{z + a(u) + \bbeta P_{x,u}h\}$ is
the limiting minimum of the shifted marks.  This variant is used in
Section~6 (Theorem~\ref{thm:tiebreak}, Step~1).

\emph{Step 2: convergence of expectations (uniform integrability).}
Write
\begin{equation*}
m_N \;=\; \min_i \veps_N^{-1}\Delta_x(U_{x,i}),
\qquad
m_N(h) \;=\; \min_i\bigl\{\veps_N^{-1}\Delta_x(U_{x,i})
+ \bbeta P_{x,U_{x,i}}h\bigr\}.
\end{equation*}
Since $|P_{x,u}h|\le R$,
\begin{equation}
|m_N(h) - m_N| \;\le\; \bbeta R
\end{equation}
almost surely.  The family $\{m_N\}_N$ is uniformly integrable
(Lemma~\ref{lem:ui-min}), and the second claim of that lemma extends
this to the shifted minima $\{m_N(h)\}_N$ (the perturbation
$|m_N(h)-m_N|\le \bbeta R$).  Combined with Step~1, this gives
\begin{equation}
\E m_N(h) \;\longrightarrow\; \E \inf_{(z,u)\in\Pi_x}\{z + \bbeta P_{x,u} h\},
\qquad x\in\X,
\label{eq:expectation-limit}
\end{equation}
i.e.\ $\Phix_N h \to \PSi h$ pointwise in $h$.

\emph{Step 3: uniformity on bounded balls.}
Both $\Phix_N$ and $\PSi$ are $\bbeta$-Lipschitz (Lemma~\ref{lem:contraction}).
Fix $\delta>0$ and a finite $\delta$-net $\{h^\ell\}$ of the ball
$\{h: \norminfty{h}\le R\}$ in $\R^n$.  For each $h$ in the ball choose
$h^\ell$ with $\norminfty{h - h^\ell}\le\delta$.  Then
\begin{align}
\norminfty{\Phix_N h - \PSi h}
&\le\; \norminfty{\Phix_N h - \Phix_N h^\ell}
+ \norminfty{\Phix_N h^\ell - \PSi h^\ell}
+ \norminfty{\PSi h^\ell - \PSi h} \notag\\
&\le\; 2\bbeta\delta + \max_\ell \norminfty{\Phix_N h^\ell - \PSi h^\ell}.
\end{align}
Taking $N\to\infty$ and then $\delta\to 0$ gives \eqref{eq:uniform-convergence}.
\end{proof}

\subsection{The Main Theorem}
\label{sec:tangent:main}

\begin{theorem}[Poisson tangential limit of random-candidate fixed points]
\label{thm:main}
Assume Proposition~\ref{prop:contraction} hypotheses,
Assumption~\ref{ass:G}, the common-exponent condition
\eqref{eq:common-alpha}, and continuity of the transition probabilities.
Let $h^\star$ be the unique fixed point \eqref{eq:hstar} of $\PSi$.  Then
\begin{equation}
N^{1/\alpha}\bigl(V^\star - V_N\bigr)
\;\longrightarrow\; h^\star
\qquad\text{in } \norminfty{\cdot},
\label{eq:main-limit}
\end{equation}
equivalently,
\begin{equation}
V_N \;=\; V^\star - N^{-1/\alpha} h^\star + o\bigl(N^{-1/\alpha}\bigr).
\label{eq:main-expansion}
\end{equation}
\end{theorem}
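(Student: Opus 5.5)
The plan is to recast the rescaled gap as the fixed point of the rescaled deficiency operator and then apply a perturbation argument for contractions. Set $h_N := \veps_N^{-1}(V^\star - V_N)$. By definition \eqref{eq:PhiN},
\[
(\Phix_N h_N)(x) = \veps_N^{-1}\bigl[V^\star(x) - \Tcal_N(V^\star - \veps_N h_N)(x)\bigr] = \veps_N^{-1}\bigl[V^\star(x) - \Tcal_N V_N(x)\bigr] = h_N(x),
\]
using $V_N = \Tcal_N V_N$. Thus $h_N$ is the unique fixed point of $\Phix_N$, which is a $\bbeta$-contraction by Lemma~\ref{lem:contraction}. The theorem is then the statement that fixed points of $\Phix_N$ converge to the fixed point $h^\star$ of $\PSi$.

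First I will show that $\{h_N\}$ is bounded in sup norm, because Lemma~\ref{lem:uniform} only provides uniform convergence on bounded balls. By Proposition~\ref{prop:contraction}(b), $h_N \ge 0$. Contraction gives
\[
\norminfty{h_N} \le \norminfty{\Phix_N h_N - \Phix_N 0} + \norminfty{\Phix_N 0} \le \bbeta\norminfty{h_N} + \norminfty{\Phix_N 0},
\]
so $\norminfty{h_N} \le (1-\bbeta)^{-1}\norminfty{\Phix_N 0}$. Now $(\Phix_N 0)(x) = N^{1/\alpha}\E D_{x,1:N}$ is bounded uniformly in $N$ by Lemma~\ref{lem:ui-min}, or directly by \eqref{eq:orderstat-moment} with $j=1$. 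Hence there is $R_0$ with $\sup_N\norminfty{h_N} \le R_0$. Similarly, $\norminfty{h^\star} \le (1-\bbeta)^{-1}\norminfty{\PSi 0} < \infty$. Fix $R \ge \max(R_0, \norminfty{h^\star})$.

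Next comes the standard estimate. Since both $h_N$ and $h^\star$ lie in the ball of radius $R$,
\[
\norminfty{h_N - h^\star} = \norminfty{\Phix_N h_N - \PSi h^\star} \le \norminfty{\Phix_N h_N - \PSi h_N} + \norminfty{\PSi h_N - \PSi h^\star} \le \sup_{\norminfty{h}\le R}\norminfty{\Phix_N h - \PSi h} + \bbeta\norminfty{h_N - h^\star}.
\]
Rearranging gives
\[
\norminfty{h_N - h^\star} \le (1-\bbeta)^{-1}\sup_{\norminfty{h}\le R}\norminfty{\Phix_N h - \PSi h},
\]
and the right-hand side tends to $0$ by Lemma~\ref{lem:uniform}. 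This is \eqref{eq:main-limit}, and \eqref{eq:main-expansion} is a restatement of it.

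The only substantive points are the a priori bound and the appeal to the uniform convergence of Lemma~\ref{lem:uniform}. The main obstacle is the a priori bound, since without it the fixed points $h_N$ could escape every ball on which the convergence $\Phix_N \to \PSi$ holds uniformly. The uniform integrability in Lemma~\ref{lem:ui-min} resolves this, because it bounds $\Phix_N 0$ uniformly in $N$. Under the common-exponent condition \eqref{eq:common-alpha}, $\veps_N$ coincides with $N^{-1/\alpha_x}$ at every state, so that lemma applies verbatim.
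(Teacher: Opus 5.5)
Your proof is correct and follows the paper's argument: the fixed-point identity $h_N=\Phix_N h_N$, the contraction bound, and Lemma~\ref{lem:uniform}. The one difference is that you split $\norminfty{h_N-h^\star}$ through $\PSi h_N$, which needs both the a priori bound and uniform convergence on a ball; the paper splits through $\Phix_N h^\star$, so it uses only the pointwise convergence $\Phix_N h^\star\to\PSi h^\star$, and the a priori bound on $h_N$ that you call the main obstacle is in fact dispensable there.
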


\begin{proof}
Set
\begin{equation}
h_N \;=\; \veps_N^{-1}\bigl(V^\star - V_N\bigr).
\label{eq:hN}
\end{equation}
Using $V_N = \Tcal_N V_N$ and the definition \eqref{eq:PhiN} of $\Phix_N$,
\begin{align}
(\Phix_N h_N)(x)
&=\;
\veps_N^{-1}\left[ V^\star(x) - \Tcal_N\bigl(V^\star - \veps_N h_N\bigr)(x) \right] \notag\\
&=\;
\veps_N^{-1}\left[ V^\star(x) - \Tcal_N V_N(x) \right]
\;=\;
\veps_N^{-1}\left[ V^\star(x) - V_N(x) \right]
\;=\; h_N(x),
\label{eq:hN-fixed}
\end{align}
so $h_N$ is a fixed point of $\Phix_N$.  By Lemma~\ref{lem:contraction},
\begin{equation}
\norminfty{h_N}
\;=\; \norminfty{\Phix_N h_N - \Phix_N 0 + \Phix_N 0}
\;\le\; \bbeta\norminfty{h_N} + \norminfty{\Phix_N 0},
\end{equation}
hence
\begin{equation}
\norminfty{h_N}
\;\le\; \frac{\norminfty{\Phix_N 0}}{1 - \bbeta}.
\label{eq:hN-bound}
\end{equation}
By Lemma~\ref{lem:uniform}, $\Phix_N 0 \to \PSi 0$, so the sequence
$\{h_N\}$ is bounded.  Finally, using the fixed-point equations
$h_N = \Phix_N h_N$ and $h^\star = \PSi h^\star$,
\begin{align}
\norminfty{h_N - h^\star}
&=\;
\norminfty{\Phix_N h_N - \PSi h^\star} \notag\\
&\le\;
\norminfty{\Phix_N h_N - \Phix_N h^\star}
+ \norminfty{\Phix_N h^\star - \PSi h^\star} \notag\\
&\le\;
\bbeta\norminfty{h_N - h^\star}
+ \norminfty{\Phix_N h^\star - \PSi h^\star},
\end{align}
so that
\begin{equation}
\norminfty{h_N - h^\star}
\;\le\;
\frac{1}{1-\bbeta}\,
\norminfty{\Phix_N h^\star - \PSi h^\star}
\;\longrightarrow\; 0
\label{eq:main-bound}
\end{equation}
by Lemma~\ref{lem:uniform} ($h^\star$ is bounded by the same
argument).  Substituting \eqref{eq:hN} gives \eqref{eq:main-limit} and
\eqref{eq:main-expansion}.
\end{proof}

\begin{remark}[Nonlinearity at the leading order]
The leading coefficient of the value gap is \emph{not} a tail constant
multiplied by a resolvent factor, except in the single-optimum case, in which
the optimal actions are unique (Corollary~\ref{cor:resolvent}) or share a
common transition kernel.  When several optimal actions have different
kernels, the leading term is the fixed point of the \emph{nonlinear}
operator $\PSi$: candidate reachability, the geometry of the optimal
set, and the transition kernels compete at the same order; $\PSi$ is
the limit of the rescaled deficiency operators \eqref{eq:PhiN} of the
family $\{\Tcal_N\}_{N\ge1}$, a differentiability statement about the
family rather than the single map $\Tcal$ (cf.\ the Introduction).
\end{remark}

\subsection{The Unique-Optimum Case: Linear Resolvent}
\label{sec:tangent:resolvent}

\begin{corollary}[Unique optimal action: linear resolvent]
\label{cor:resolvent}
Suppose that for every state $x$ the optimal action is unique:
$\mathsf{A}_x^\star = \{u_x^\star\}$.  Define the transition matrix
$P^\star_{xy} = p_{xy}(u_x^\star)$ and the vector
\begin{equation}
c_x \;=\; C_x^{-1/\alpha}\, \Gamma\bigl(1 + 1/\alpha\bigr),
\qquad x\in\X,
\label{eq:c}
\end{equation}
with $C_x$ as in \eqref{eq:critical}.  Then
\begin{equation}
(\PSi h)(x) \;=\; c_x + \bbeta \sum_{y\in\X} P^\star_{xy} h(y),
\label{eq:Psi-affine}
\end{equation}
and the fixed point of Theorem~\ref{thm:main} is
\begin{equation}
h^\star
\;=\; (I - \bbeta P^\star)^{-1} c
\;=\;
\sum_{t=0}^{\infty} \bbeta^t (P^\star)^t c.
\label{eq:resolvent}
\end{equation}
\end{corollary}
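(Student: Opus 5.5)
The plan is to compute $\PSi$ explicitly when each $\mathsf{A}_x^\star$ is a singleton and then solve the resulting affine fixed-point equation.

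\textbf{Step 1: the marks are deterministic.} If $\mathsf{A}_x^\star=\{u_x^\star\}$, every point of $\Pi_x$ carries the mark $u_x^\star$. Hence $P_{x,u}h=P_{x,u_x^\star}h=\sum_y P^\star_{xy}h(y)$ for every point $(z,u)\in\Pi_x$. This term no longer depends on the point, so it factors out of the infimum in \eqref{eq:Psi}:
\begin{equation*}
(\PSi h)(x)=\E\inf_{(z,u)\in\Pi_x} z \;+\; \bbeta\sum_{y}P^\star_{xy}h(y).
\end{equation*}
The first expectation equals $\E Z_1$, where $Z_1$ is the smallest $z$-coordinate of $\Pi_x$. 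By \eqref{eq:Lambda-z} with $B=\mathsf{A}_x^\star$ and $\sigma_x(\mathsf{A}_x^\star)=C_x$ (the single point is the only stratum and is critical), the void probability gives
\begin{equation*}
\Prob(Z_1>z)=e^{-C_x z^{\alpha}}.
\end{equation*}
Equivalently, $Z_1=C_x^{-1/\alpha}\Gamma_1^{1/\alpha}$ with $\Gamma_1$ a unit exponential, as in the proof of Proposition~\ref{prop:orderstats}.

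\textbf{Step 2: compute the constant.} The layer-cake formula and the substitution $s=C_xz^\alpha$ give
\begin{equation*}
\E Z_1=\int_0^\infty e^{-C_x z^\alpha}\,dz = C_x^{-1/\alpha}\,\Gamma(1+1/\alpha)=c_x.
\end{equation*}
Equivalently, this is \eqref{eq:orderstat-moment} with $j=1$. This establishes \eqref{eq:Psi-affine}.

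\textbf{Step 3: solve the fixed-point equation.} The fixed-point equation $h=c+\bbeta P^\star h$ is linear. Since $P^\star$ is stochastic, $\|\bbeta P^\star\|_\infty=\bbeta<1$, so $I-\bbeta P^\star$ is invertible and its Neumann series converges in operator norm. The vector $(I-\bbeta P^\star)^{-1}c$ is therefore a fixed point of $\PSi$. By the uniqueness in Lemma~\ref{lem:contraction} it equals $h^\star$, which is \eqref{eq:resolvent}, and Theorem~\ref{thm:main} then applies with this $h^\star$.

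\textbf{Main obstacle.} The only delicate point is confirming that Assumption~\ref{ass:G} with $k_{xj}=0$ yields a single critical stratum with $\sigma_x=C_x\delta_{u_x^\star}$. This makes the mark collapse legitimate and the constant exactly $C_x$. It also requires that the infimum of $z$ over $\Pi_x$ is attained and integrable, which follows from the Weibull tail.
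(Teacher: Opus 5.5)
Your proposal is correct and follows the paper's proof essentially step for step: the marks collapse to $u_x^\star$, the infimum factorizes into the Weibull minimum plus $\bbeta P^\star h$, the layer-cake integral gives $c_x = C_x^{-1/\alpha}\Gamma(1+1/\alpha)$, and the Neumann series solves the affine fixed-point equation. Your explicit appeal to the uniqueness in Lemma~\ref{lem:contraction} to identify $(I-\bbeta P^\star)^{-1}c$ with $h^\star$ spells out a step the paper leaves implicit.
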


\begin{proof}
When the optimal action is unique, every point of the Poisson process
$\Pi_x$ has the same mark $u_x^\star$, and therefore the same transition
operator $P_{x,u} = P^\star$.  The infimum in \eqref{eq:Psi} then
factorizes:
\begin{equation}
(\PSi h)(x)
\;=\;
\E\Bigl[\min_{(z,u)\in\Pi_x} z\Bigr] + \bbeta \sum_y P^\star_{xy} h(y).
\label{eq:Psi-factor}
\end{equation}
The minimum $Z_x$ of the $z$-coordinates has tail
$\Prob(Z_x > z) = \exp\{-C_x z^{\alpha}\}$ by \eqref{eq:Lambda-z}, so
\begin{equation}
\E Z_x
\;=\;
\int_0^\infty \Prob(Z_x > z)\, dz
\;=\;
C_x^{-1/\alpha} \Gamma\bigl(1 + 1/\alpha\bigr)
\;=\; c_x.
\label{eq:EZ}
\end{equation}
This proves \eqref{eq:Psi-affine}.  Since
$\norm{\bbeta P^\star}_\infty \le \bbeta < 1$, the operator
$I - \bbeta P^\star$ is invertible and the Neumann series gives
\eqref{eq:resolvent}.
\end{proof}

\begin{remark}
Corollary~\ref{cor:resolvent} shows that fixed-policy resolvent
asymptotics are the single-optimum specialization of
Theorem~\ref{thm:main} (unique optimal actions, or all near-optimal
actions sharing a common transition kernel): the gap is $N^{-1/\alpha}$
times the discounted occupancy of the critical states under the limit
policy; Theorem~\ref{thm:hetero} is its heterogeneous version.
\end{remark}

\section{Geometric Tie-Breaking in the Candidate Pool}
\label{sec:tiebreak}

Section~5 establishes the leading-order value gap but leaves open a
practical question: when the continuous-action model has \emph{several}
optimal actions, which one does the finite-pool controller effectively
employ?  This section answers it for isolated optimal actions, locally
unique but globally multiple, by a \emph{selection theorem}: the
empirical optimal candidate converges to a random element of the
optimal set with law explicit in the limit Poisson process and the
fixed point $h^\star$; perturbing the tie at the critical rate then
organizes value and selection in a critical switching layer
(Section~\ref{sec:switching}).

\subsection{Isolated Optimal Actions and the Nonlinear Selection Equation}
\label{sec:tiebreak:setup}

\begin{assumption}[Isolated optimal actions]
\label{ass:isolated}
For every state $x$, the optimal action set is a finite set of isolated
points:
\begin{equation}
\mathsf{A}_x^\star \;=\; \{u_{x,1}^\star,\dots,u_{x,J_x}^\star\},
\qquad J_x < \infty,
\label{eq:isolated}
\end{equation}
and Assumption~\ref{ass:G} holds with $\calM_{xj} = \{u_{x,j}^\star\}$,
$k_{xj}=0$, so that the critical exponent of stratum $j$ is
\begin{equation}
\alpha_{xj} \;=\; \frac{d_x + \eta_{xj}}{r_{xj}},
\label{eq:alpha-isolated}
\end{equation}
with $C_{xj}$ the local volume constant
\eqref{eq:Cxj} of the stratum.
\end{assumption}

Under Assumption~\ref{ass:isolated}, each stratum is a singleton and
the mark set is finite.

\begin{corollary}[Tangential equation for isolated optimal actions]
\label{cor:isolated}
Assume Assumptions~\ref{ass:G} and \ref{ass:isolated}, and let
$\calJ_x^{\mathrm{crit}} = \{j : \alpha_{xj} = \alpha_x\}$ be the
critical strata of state $x$.  Let $(Z_{x,j})_{j\in\calJ_x^{\mathrm{crit}}}$
be independent random variables with Weibull tails
\begin{equation}
\Prob\bigl(Z_{x,j} > z\bigr)
\;=\;
e^{-C_{xj} z^{\alpha_x}},
\qquad z\ge 0,
\label{eq:Ztail}
\end{equation}
and write $P_{x,j} = P_{x,u_{x,j}^\star}$ for the transition operator of
action $u_{x,j}^\star$.  Then for every state $x$ and every
$h\in\R^n$,
\begin{equation}
(\PSi h)(x)
\;=\;
\E\, \min_{j\in\calJ_x^{\mathrm{crit}}}
\left\{ Z_{x,j} + \bbeta\, P_{x,j}\, h \right\},
\label{eq:Psi-isolated}
\end{equation}
where the minimum over an empty index set is interpreted as $+\infty$;
strata with $\alpha_{xj}>\alpha_x$ do not contribute to the expectation.
\end{corollary}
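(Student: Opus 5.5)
The plan is to specialize the marked Poisson process of Proposition~\ref{prop:poisson} to a finite mark set and then use Poisson thinning. Under Assumption~\ref{ass:isolated}, the measure $\sigma_x$ of \eqref{eq:sigma} lives only on the critical strata. Each critical stratum is a singleton $\{u_{x,j}^\star\}$, so the measure is purely atomic:
\[
\sigma_x=\sum_{j\in\calJ_x^{\mathrm{crit}}}C_{xj}\,\delta_{u_{x,j}^\star}.
\]
Noncritical strata get zero $\sigma_x$-mass because their contribution to \eqref{eq:local-volume} is of order $t^{\alpha_{xj}}=o(t^{\alpha_x})$. The intensity \eqref{eq:Lambda} therefore decomposes as
\[
\Lambda_x=\sum_{j\in\calJ_x^{\mathrm{crit}}}\alpha_x z^{\alpha_x-1}dz\otimes C_{xj}\delta_{u_{x,j}^\star}.
\]

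By the restriction (marking) theorem for Poisson processes, the restrictions $\Pi_x^{(j)}$ of $\Pi_x$ to the disjoint sets $[0,\infty)\times\{u_{x,j}^\star\}$ are independent Poisson processes. Each has intensity $C_{xj}\alpha_x z^{\alpha_x-1}dz$ on the $z$-axis, and every one of its points carries the common mark $u_{x,j}^\star$.

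Next, inside the definition \eqref{eq:Psi}, I split the infimum by mark:
\[
\inf_{(z,u)\in\Pi_x}\{z+\bbeta P_{x,u}h\}=\min_{j}\Bigl\{\inf_{z\in\Pi_x^{(j)}}z+\bbeta P_{x,j}h\Bigr\}.
\]
This holds because $\bbeta P_{x,u}h$ is constant on each atom. Set $Z_{x,j}=\min\Pi_x^{(j)}$. The minimum is attained almost surely, since each $\Pi_x^{(j)}$ has finite mass on bounded intervals and infinite total mass. By \eqref{eq:Lambda-z},
\[
\Prob(Z_{x,j}>z)=\Prob\bigl(\Pi_x^{(j)}[0,z]=0\bigr)=e^{-C_{xj}z^{\alpha_x}},
\]
which is exactly \eqref{eq:Ztail}. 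Independence of the $Z_{x,j}$ follows from independence of the restricted processes. Taking expectations gives \eqref{eq:Psi-isolated}. The expectation is finite and well defined because $Z_{x,j}$ has finite mean by the Weibull tail, and $|P_{x,j}h|\le\norminfty{h}$.

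The only step needing care is the claim that noncritical strata contribute nothing to $\Pi_x$, together with the $+\infty$ convention. I would justify the former as in the last part of the proof of Proposition~\ref{prop:orderstats}: the expected number of rescaled points on a noncritical stratum within any bounded $z$-window is of order $N^{1-\alpha_{xj}/\alpha_x}\to0$. Hence those points vanish in the vague limit and carry no intensity in $\Lambda_x$. The empty-index convention is vacuous in practice, because $\calJ_x^{\mathrm{crit}}$ is nonempty by definition of $\alpha_x$ as a minimum over a finite nonempty set. It is retained only so that formally dropping noncritical indices is consistent. No limit exchange is needed, since the statement concerns $\PSi$ itself and not $\Phix_N$.
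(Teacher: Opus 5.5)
Your proof is correct and follows essentially the same route as the paper. You take $\sigma_x$ atomic on the singleton critical strata, obtain independent Poisson processes on those atoms, split the infimum by mark, read off the Weibull tail $e^{-C_{xj}z^{\alpha_x}}$ from the void probability, and use the $N^{1-\alpha_{xj}/\alpha_x}\to 0$ count to dismiss noncritical strata; your appeal to the restriction theorem for disjoint mark sets is, if anything, a more precise justification of independence than the paper's appeal to separation of the strata.
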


\begin{proof}
For an isolated optimal action $u_{x,j}^\star$, the Poisson process
$\Pi_x$ has a single mark $u_{x,j}^\star$ on the stratum $\calM_{xj}$,
and the marginal $z$-coordinate process on that stratum is Poisson with
intensity measure $C_{xj}\alpha_x z^{\alpha_x-1}dz$
(by \eqref{eq:Lambda} with $\sigma_x(\{u_{x,j}^\star\}) = C_{xj}$).  The
smallest point of that process has distribution
\eqref{eq:Ztail} by the same calculation as \eqref{eq:EZ}.  The Poisson
processes on distinct strata are independent by Proposition
\ref{prop:poisson} (the strata are separated), so the family
$(Z_{x,j})$ is independent.  The mark of every point of the process on
stratum $j$ is $u_{x,j}^\star$, so the infimum over the stratum is
$Z_{x,j} + \bbeta P_{x,j}h$, and the infimum over the whole set is the
minimum over $j$.  For a stratum with $\alpha_{xj}>\alpha_x$ the expected number of
candidates with rescaled deficiency below $z$ is
$C_{xj}z^{\alpha_{xj}}N^{1-\alpha_{xj}/\alpha_x}\to 0$, so the
probability that stratum $j$ hosts the global minimum tends to zero
(cf.\ Proposition~\ref{prop:orderstats}).
\end{proof}

\begin{remark}
The variables $Z_{x,j}$ are the \emph{rescaled minimal deficiencies}
of the sampled candidates around $u_{x,j}^\star$; their scales
$C_{xj}^{-1/\alpha_x}$, from \eqref{eq:Cxj}, encode the flatness order
$r_{xj}$ and the sampling exponent $\eta_{xj}$, so larger $C_{xj}$
(easier sampling) means a stochastically smaller $Z_{x,j}$ and favored
selection, everything else equal.  The novelty of
Theorem~\ref{thm:tiebreak} is that the ``everything else'' is
\emph{not} equal: the kernels $P_{x,j}$ enter through the fixed point
$h^\star$, coupling the selection rule across states.
\end{remark}

\subsection{The Exact Tie: Dynamic Geometric Tie-Breaking}
\label{sec:tiebreak:main}

At the exact tie the finite pool must break the tie among several
optimal actions of a state.  Under Assumption~\ref{ass:isolated} the
answer is a genuine selection theorem: the empirical optimal candidate
converges in distribution to a random optimal action whose law is
explicit and governed jointly by the local volume constants $C_{xj}$
and the future-cost differences $\bbeta P_{x,j}h^\star$ propagated
through the transition kernels; the convergence arguments below rest
on a general continuity property of the argmin.

\begin{lemma}[Continuous argmin of converging random costs]
\label{lem:selector}
Let $S$ be a compact metric space, and let $f_N, f: S \to \R$ be
random continuous functions such that $\sup_{y\in S}|f_N(y) - f(y)|
\to 0$ a.s.\ on a common probability space.  If the set of minimizers
of $f$ is a.s.\ the singleton $\{I\}$, then every measurable selection
$\hat I_N \in \argmin_{y\in S} f_N(y)$ satisfies $\hat I_N \Rightarrow
I$.  If only weak convergence of the random elements $f_N, f$ in
$(C(S),\|\cdot\|_\infty)$ is known, the conclusion is unchanged: $S$
compact makes $C(S)$ Polish, so the Skorokhod representation applies.
\end{lemma}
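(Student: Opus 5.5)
The plan is the classical argmin continuity argument: prove almost sure convergence under the a.s.\ uniform convergence hypothesis, then reduce the weak-convergence version to it by Skorokhod representation.

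\emph{Step 1 (a.s.\ case).} Fix $\omega$ in the probability-one event where $\sup_S|f_N-f|\to 0$ and $\argmin f=\{I\}$. Because $S$ is compact and $f$ is continuous, for every $\veps>0$ the closed set $K_\veps=\{y:\ d(y,I)\ge\veps\}$ is compact. If $K_\veps$ is nonempty, the gap
\[
\delta(\veps)=\min_{K_\veps}f-f(I)
\]
is strictly positive by uniqueness of the minimizer. Choose $N$ so large that $\norminfty{f_N-f}<\delta/2$. Then every $y\in K_\veps$ satisfies $f_N(y)>f(y)-\delta/2\ge f(I)+\delta/2>f_N(I)$, so no minimizer of $f_N$ lies in $K_\veps$. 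Hence $d(\hat I_N,I)<\veps$ eventually, which gives $\hat I_N\to I$ a.s.\ and therefore in distribution.

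The only subtlety in this step is measurability. The gap $\delta(\veps)$ depends on $\omega$, but the argument is pathwise and uses nothing more. The selection $\hat I_N$ is assumed measurable and $I$ is a limit of measurable maps, so convergence in distribution is meaningful.

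\emph{Step 2 (weak case).} Since $S$ is a compact metric space, $C(S)$ with the sup norm is separable and complete. If $f_N\Rightarrow f$ in $C(S)$, Skorokhod's theorem gives copies $\tilde f_N\overset{d}{=}f_N$ and $\tilde f\overset{d}{=}f$ on a common space with $\tilde f_N\to\tilde f$ uniformly a.s. The event that the argmin is a singleton is a Borel subset of $C(S)$, so $\tilde f$ a.s.\ has a unique minimizer $\tilde I$.

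The main obstacle is transferring the law of $\hat I_N$ to the representation, since $\hat I_N$ may be an arbitrary measurable selection of the original $f_N$, possibly using extra randomness. I would first treat selections of the form $\hat I_N=s_N(f_N)$ for a measurable selector $s_N:C(S)\to S$; one exists by the Kuratowski--Ryll-Nardzewski theorem, since $g\mapsto\argmin g$ is a compact-valued upper semicontinuous correspondence. Setting $\tilde I_N=s_N(\tilde f_N)$ gives $\tilde I_N\overset{d}{=}\hat I_N$, and Step 1 yields $\tilde I_N\to\tilde I$ a.s. Finally, $\tilde I$ is the unique minimizer of $\tilde f$ and is a measurable function of $\tilde f$, so $\tilde I\overset{d}{=}I$.

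For fully general selections, possibly depending on extra randomness, I would replace the coupling with a portmanteau argument. For a closed $F\subseteq S$,
\[
\{\hat I_N\in F\}\subseteq\{\min_F f_N\le\min_S f_N\},
\]
and the set $\{g:\min_F g\le\min_S g\}$ is closed in $C(S)$. Weak convergence then gives
\[
\limsup_N\Prob(\hat I_N\in F)\le\Prob\bigl(\min_F f\le\min_S f\bigr)=\Prob(I\in F),
\]
where the last equality uses uniqueness of the minimizer. This proves $\hat I_N\Rightarrow I$ without any selector.
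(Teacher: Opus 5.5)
Your proposal is correct. Step~1 is essentially the paper's proof: pathwise, the positive gap $\min_{S\setminus B(I,\veps)} f - f(I)$ together with $\sup_S|f_N-f|\to 0$ eventually confines every minimizer of $f_N$ to $B(I,\veps)$.

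You differ from the paper in the weak-convergence clause. The paper settles it in one sentence by appealing to Skorokhod representation on the Polish space $C(S)$. As you observe, that transfer is immediate only when $\hat I_N$ is a measurable function of $f_N$ alone, so that $s_N(\tilde f_N)$ has the same law as $\hat I_N$. Your Kuratowski--Ryll-Nardzewski selector makes that case rigorous.

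Your portmanteau argument is the stronger route. Because $g\mapsto\min_F g-\min_S g$ is continuous on $C(S)$, the set $\{g:\min_F g\le\min_S g\}$ is closed. Weak convergence then gives $\limsup_N\Prob(\hat I_N\in F)\le\Prob(I\in F)$ for every closed $F$ and for arbitrary selections, with no coupling and no selector theorem.

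That generality is what the paper's own application needs. In Theorem~\ref{thm:tiebreak} the empirical selector is a tie-breaking choice made on the realized pool, so it need not be a measurable function of the cost vector $(C_{N,j})_j$ alone. Moreover, only convergence in distribution of those costs is available there. The portmanteau version is therefore the cleaner justification, and it makes the selector-plus-coupling half of your Step~2 redundant.

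What the Skorokhod route buys in exchange is an a.s.-convergent coupled selector. That is convenient if one wants to carry further functionals along on the same probability space.
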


\begin{proof}
On the coupling, $\min_S f_N \to \min_S f$ a.s.  Because $I$ is a.s.\
unique and $f$ is continuous on the compact space $S$, for every
$\delta > 0$ the separation $\eta_\delta := \min_{S\setminus
B(I,\delta)} f - f(I)$ is a.s.\ positive; eventually
$\sup_S|f_N - f| \le \eta_\delta/2$, so every minimizer of $f_N$ lies
in $B(I,\delta)$ and $\hat I_N \to I$ a.s., hence $\hat I_N
\Rightarrow I$.
\end{proof}

Define the \emph{empirical optimal candidate} $I_{x,N}$ of state $x$ in
a pool of size $N$ as any fixed measurable tie-breaking choice among
the candidates of the realized pool maximizing $Q_{V_N}(x,\cdot)$,
recorded as its index in the list of critical strata; the choice is
immaterial for the limits below, the limiting selector of
Theorem~\ref{thm:tiebreak} being almost surely a singleton.  If the
maximum is attained outside the critical strata, set $I_{x,N}$
arbitrarily (the event has probability tending to zero by
Proposition~\ref{prop:orderstats}).

\begin{theorem}[Dynamic geometric tie-breaking]
\label{thm:tiebreak}
Assume the hypotheses of Theorem~\ref{thm:main} and
Assumption~\ref{ass:isolated}.  Let $h^\star$ be the fixed point of
$\PSi$, and let $(Z_{x,j})_{j\in\calJ_x^{\mathrm{crit}}}$ be the
independent Weibull variables of Corollary~\ref{cor:isolated}, jointly
independent across states.  Then for every state $x$ and every critical
stratum $j$,
\begin{equation}
\Prob\{I_{x,N} = j\}
\;\longrightarrow\;
\Prob\Bigl\{ j \in \argmin_{\ell\in\calJ_x^{\mathrm{crit}}}
\bigl\{ Z_{x,\ell} + \bbeta\, P_{x,\ell}\, h^\star \bigr\} \Bigr\}.
\label{eq:tiebreak-limit}
\end{equation}
The limiting selector
\begin{equation}
I_x^\star
\;=\;
\argmin_{j\in\calJ_x^{\mathrm{crit}}}
\left\{ Z_{x,j} + \bbeta\, P_{x,j}\, h^\star \right\}
\label{eq:selector}
\end{equation}
is a.s.\ a singleton, and its law is explicit: writing
$\Phi_j(t) = 0$ for $t<0$ and $\Phi_j(t) = 1 - e^{-C_{xj} t^{\alpha_x}}$
for $t\ge 0$ for the cumulative
distribution function of $Z_{x,j}$,
\begin{equation}
\Prob\{I_x^\star = j\}
\;=\;
\int_0^\infty
\prod_{\ell\ne j}
\left[ 1 - \Phi_\ell\!\left( z + \bbeta\, P_{x,j}h^\star
- \bbeta\, P_{x,\ell}h^\star \right) \right]
\, d\Phi_j(z).
\label{eq:selection-prob}
\end{equation}
If the critical optimal actions of state $x$ satisfy the offset
agreement $P_{x,\ell}h^\star = P_{x,j}h^\star$ for every pair of
critical strata $\ell,j$ (in particular if they share a common
transition operator), then
\begin{equation}
\Prob\{I_x^\star = j\}
\;=\;
\frac{C_{xj}}{\sum_{\ell\in\calJ_x^{\mathrm{crit}}} C_{x\ell}},
\label{eq:static-selection}
\end{equation}
the static volume ratio of Proposition~\ref{prop:orderstats}.
\end{theorem}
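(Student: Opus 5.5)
The plan is to reduce the selection event to a comparison of finitely many rescaled, shifted per-stratum minima, and then pass to the limit with the shifted-mark variant of Lemma~\ref{lem:uniform} and Theorem~\ref{thm:main}.

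\textbf{Reduction to per-stratum minima.} The empirical optimal candidate maximizes $Q_{V_N}(x,\cdot)$ over the pool. Equivalently, it minimizes
\[
V^\star(x)-Q_{V_N}(x,u)=\Delta_x(u)+\bbeta P_{x,u}(V^\star-V_N).
\]
After multiplying by $\veps_N^{-1}$, it minimizes
\[
z_i+\bbeta P_{x,U_{x,i}}h_N, \qquad z_i=N^{1/\alpha}\Delta_x(U_{x,i}), \quad h_N=\veps_N^{-1}(V^\star-V_N).
\]
For each critical $j$, define the stratum-wise minimum $W_{N,j}=\min\{z_i+\bbeta P_{x,U_{x,i}}h_N : U_{x,i}\text{ in the tube of } u_{x,j}^\star\}$. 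Then $I_{x,N}=j$ exactly when $W_{N,j}$ is the smallest of these, up to ties and up to the event that the minimum is attained outside the critical tubes. That event has vanishing probability by Proposition~\ref{prop:orderstats}, together with the bound $\|h_N\|\le R$ from the proof of Theorem~\ref{thm:main}, which makes the shift $\bbeta P h_N$ uniformly bounded.

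\textbf{Joint convergence.} I will show $(W_{N,j})_j\Rightarrow (Z_{x,j}+\bbeta P_{x,j}h^\star)_j$ jointly. Theorem~\ref{thm:main} gives the deterministic convergence $h_N\to h^\star$. Continuity of $p_{xy}$ then gives
\[
\sup_u\bigl|\bbeta P_{x,u}h_N-\bbeta P_{x,\pi_x(u)}h^\star\bigr|\to 0
\]
on the sublevel sets. This is exactly the shifted-mark setting of Step~1 of Lemma~\ref{lem:uniform}, with $a_N(u)=\bbeta P_{x,u}(h_N-h^\star)$ and $a=0$.

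To get joint laws, I apply the empty-set characterization to product events $\{W_{N,j}>t_j\ \forall j\}$. Such an event says that $\Pi_{x,N}$ is empty on $\bigcup_j\{(z,u_{x,j}^\star): z+\bbeta P_{x,j}h^\star\le t_j\}$, up to $\pm\delta_N$ sandwiching. Each piece has $\Lambda_x$-null boundary. By Proposition~\ref{prop:poisson} and the independence of the separated strata, the limit probability is $\prod_j\Prob(Z_{x,j}+\bbeta P_{x,j}h^\star>t_j)$. The limit joint CDF is continuous, so we obtain joint weak convergence to independent shifted Weibulls. Independence across states is automatic, since pools are independent.

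\textbf{Argmin and the explicit formula.} Let $S=\calJ_x^{\mathrm{crit}}$, a finite discrete space, and apply Lemma~\ref{lem:selector} with $f_N(j)=W_{N,j}$. The limit argmin is a.s.\ a singleton because the $Z_{x,j}$ are independent with atomless laws, so ties between distinct $j$ (deterministic offsets plus continuous variables) have probability zero. This gives \eqref{eq:tiebreak-limit}, and measurable tie-breaking is irrelevant. The formula \eqref{eq:selection-prob} follows by conditioning on $Z_{x,j}=z$. Writing $a_\ell=\bbeta P_{x,\ell}h^\star$, the event becomes $Z_{x,\ell}>z+a_j-a_\ell$ for all $\ell\ne j$, and independence yields the product of $1-\Phi_\ell$. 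Under offset agreement the shifts cancel, and the integral reduces to \eqref{eq:branch-integral}, giving \eqref{eq:static-selection}.

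\textbf{Main obstacle.} The delicate step is the joint convergence with the random-but-deterministically-converging shift $h_N$ inside the selection criterion. One must check that the candidates outside the critical tubes, and those with large $z_i$, cannot win. I would handle this by the truncation at level $t+\|a\|_\infty+\bbeta R+1$ used in Lemma~\ref{lem:uniform}. I would also verify that the union set has $\Lambda_x$-null boundary, which holds because each piece is a graph in $z$ over a single atom of $\sigma_x$.
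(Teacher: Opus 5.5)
Your proposal is correct and follows essentially the same route as the paper. Your $W_{N,j}$ are the paper's branchwise completed costs $C_{N,j}$, and like the paper you establish their joint convergence to $Z_{x,j}+\bbeta P_{x,j}h^\star$, apply Lemma~\ref{lem:selector} on the finite index set, and obtain \eqref{eq:selection-prob} by conditioning on $Z_{x,j}$. The only difference is cosmetic: the paper proves $\max_j|C_{N,j}-(\hat Z_{x,j}+\bbeta P_{x,j}h^\star)|\to 0$ in probability and combines it with $(\hat Z_{x,j})_j\Rightarrow(Z_{x,j})_j$ via Slutsky, whereas you fold the offsets directly into the void sets of a joint empty-set characterization.
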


\begin{proof}
\emph{Step 1: the empirical selector converges to the limit selector.}
Fix $x$ and a pool of size $N$.  Let $\veps_N = N^{-1/\alpha_x}$,
and write
\begin{equation}
\hat Z_{x,j} \;=\;
\veps_N^{-1}\min_{1\le i\le N}
\left\{ \Delta_x(U_{x,i}) : \pi_x(U_{x,i}) = u_{x,j}^\star \right\}
\end{equation}
for the rescaled minimal deficiency among the candidates projected onto
stratum $j$; if no candidate projects onto $j$, set
$\hat Z_{x,j}=+\infty$.  By Proposition~\ref{prop:poisson} applied on
each stratum, the vector $(\hat Z_{x,j})_j$ converges in distribution
to $(Z_{x,j})_j$, and $\hat Z_{x,j}\to\infty$ for noncritical $j$.

The candidate chosen by the controller maximizes $Q_{V_N}$; equivalently
it minimizes the \emph{rescaled} deficiency-plus-future term
\begin{multline}
\veps_N^{-1}\left[ V_N(x) - Q_{V_N}(x,u) \right]
\;=\;
\veps_N^{-1}\left[ V^\star(x) - Q_{V^\star}(x,u) \right]
+ \bbeta \sum_{y} p_{xy}(u) \frac{V^\star(y) - V_N(y)}{\veps_N} \\
+ \veps_N^{-1}\bigl[ V_N(x) - V^\star(x) \bigr].
\label{eq:decomp}
\end{multline}
The first term is the rescaled deficiency $\veps_N^{-1}\Delta_x(u)$;
the third term is $-\veps_N^{-1}(V^\star(x)-V_N(x))\to -h^\star(x)$,
independent of $j$ and cancelling from the argmin.  For the future term,
define the \emph{branchwise completed costs}
\begin{equation}
C_{N,j} \;=\; \min_{1\le i\le N}\Bigl\{
\veps_N^{-1}\Delta_x(U_{x,i})
+ \bbeta\, P_{x,U_{x,i}}\, h_N
\;:\; \pi_x(U_{x,i}) = u_{x,j}^\star \Bigr\},
\label{eq:completed-cost-N}
\end{equation}
with $C_{N,j} = +\infty$ if no candidate projects onto $j$; these are
the finite-pool analogues of the completed-cost offsets
\eqref{eq:completed-cost} of the limit process.  Since
$C_{N,j}\le \hat Z_{x,j} + \bbeta\norminfty{h_N}$ and
$\hat Z_{x,j} = O_p(1)$ (convergence in distribution above, with
$\norminfty{h_N}=O(1)$), the minimizing
candidate of \eqref{eq:completed-cost-N} satisfies
$\veps_N^{-1}\Delta_x(U) = O_p(1)$; by the homogeneity \eqref{eq:homog} it lies at normal
distance $O(\veps_N^{1/r_{xj}})$ from $\calM_{xj}$, its kernel
converges to $p_{x\cdot}(u_{x,j}^\star)$ (continuity, $h_N\to
h^\star$).  For the reverse inequality, replace $h_N$ by $h^\star$ in
\eqref{eq:completed-cost-N} (cost $\bbeta\norminfty{h_N-h^\star}\to
0$) and bound each candidate's kernel by continuity.
Hence
\begin{equation}
\max_j \Bigl| C_{N,j} - \bigl( \hat Z_{x,j} + \bbeta P_{x,j} h^\star
\bigr) \Bigr| \;\xrightarrow{\mathrm{p}}\; 0,
\label{eq:completed-cost-limit}
\end{equation}
where the maximum runs over the critical strata (for noncritical $j$,
$\hat Z_{x,j}\to\infty$ and $C_{N,j}\to\infty$).  Combining the
convergence in distribution of $(\hat Z_{x,j})_j$ with
\eqref{eq:completed-cost-limit}, the minimizer
\begin{equation}
\hat I_{x,N}
\;=\;
\argmin_j C_{N,j}
\label{eq:empirical-selector}
\end{equation}
satisfies $\hat I_{x,N}\Rightarrow I_x^\star$ in distribution:
on the event $E_N$ that every critical stratum contains at least one
candidate ($\Prob(E_N)\to 1$ by Proposition~\ref{prop:poisson}), the
costs $C_{N,j}$ are finite and, by \eqref{eq:completed-cost-limit},
approximate $y_j + \bbeta P_{x,j} h^\star$ in probability; the
limiting argmin \eqref{eq:selector} is a.s.\ a singleton, so
Lemma~\ref{lem:selector} applies on $E_N$; the absent-stratum event
carries cost $+\infty$ and contributes nothing to the limit.

\emph{Step 2: the empirical optimal candidate is the minimizer of
\eqref{eq:empirical-selector}.}
The critical candidates have rescaled deficiencies of order one and
the noncritical ones diverge, so the chosen candidate is critical with
probability tending to one; on this event, minimizing the rescaled
deficiency-plus-future term is equivalent to maximizing $Q_{V_N}$, by
\eqref{eq:decomp} (the $j$-independent shift cancels).  By
\eqref{eq:completed-cost-limit}, Slutsky's theorem, the almost-sure
uniqueness of the limiting argmin
(Lemma~\ref{lem:selector}), and the continuous mapping theorem,
$\widehat I_{x,N}\Rightarrow I_x^\star$; hence
$\Prob\{I_{x,N}=j\}\to\Prob\{I_x^\star=j\}$.

\emph{Step 3: the explicit formula.}
By independence and continuity of the $Z_{x,\ell}$,
\begin{align}
\Prob\{I_x^\star = j\}
&=\;
\E\left[ \one\Bigl\{ Z_{x,j} + \bbeta P_{x,j}h^\star
< \min_{\ell\ne j}\{Z_{x,\ell} + \bbeta P_{x,\ell}h^\star\} \Bigr\} \right] \notag\\
&=\;
\int_0^\infty
\prod_{\ell\ne j}
\Prob\Bigl( Z_{x,\ell} > z + \bbeta P_{x,j}h^\star
- \bbeta P_{x,\ell}h^\star \Bigr)
\, d\Phi_j(z),
\label{eq:prob-integral}
\end{align}
which is \eqref{eq:selection-prob}.  If the offset
agreement $P_{x,\ell}h^\star = P_{x,j}h^\star$ holds for all critical
$\ell,j$ (in particular when the transition operators coincide), the
offsets cancel, and the probability becomes the
probability that $Z_{x,j}$ is the smallest of the independent Weibull
variables, which by the same exponential-tails computation as
\eqref{eq:branch-integral} equals the volume ratio
\eqref{eq:static-selection}.
\end{proof}

\begin{remark}[Static vs.\ dynamic selection]
The static branch probabilities \eqref{eq:branch-prob} of
Proposition~\ref{prop:orderstats} describe which stratum hosts the
\emph{least deficient} candidate, ignoring the transition kernels; the
dynamic selection probabilities \eqref{eq:selection-prob} describe
which stratum hosts the \emph{selected} candidate, depending on them
through the offsets $\bbeta P_{x,j}h^\star$.  The two agree precisely
when the offset terms agree, i.e.\ when $P_{x,\ell}h^\star =
P_{x,j}h^\star$ for all critical $\ell,j$ of state $x$ (a common
transition kernel is sufficient but not necessary); otherwise the pool
deviates from the static ratios through the fixed point.  A genuine
model symmetry forcing $P_{x,\ell}h^\star = P_{x,j}h^\star$ collapses
the dynamic effect to the static ratios; dynamic selection needs the
Q-values tied but shadow offsets distinct.
In Poisson
terms, the selected candidate is the point of $\Pi_x$ minimizing the
\emph{completed} cost $z + \bbeta P_{x,u}h^\star$: $\bbeta
P_{x,j}h^\star$ is the exact first-order shadow price of choosing
$u_{x,j}^\star$ in the candidate regime, so the tie-breaking is
``geometric'', entirely determined by the local geometry of the value
gap through the constants $C_{xj}$ and the fixed point $h^\star$.
Section~\ref{sec:numerics:ex2} exhibits the symmetric example with
dynamic probability $0.75$ against the static $0.5$.
\end{remark}

\subsection{The Poisson-Mecke Selection Formula}
\label{sec:mecke}

The selection law of Theorem~\ref{thm:tiebreak} is stated for
stratified optimal sets; conditional on the fixed point $h^\star$, the
law of the selected mark has a closed form, directly from the
Slivnyak-Mecke formula, with no stratification (finite-pool
convergence covered by Theorem~\ref{thm:tiebreak} under
Assumption~\ref{ass:isolated}).  Fix a state $x$, let
$\Lambda_x(dz,du) = \alpha_x z^{\alpha_x-1}dz\,\sigma_x(du)$ be the
intensity \eqref{eq:Lambda} of the marked Poisson limit of
Proposition~\ref{prop:poisson}, and let $h^\star$ be the fixed point
\eqref{eq:hstar} of the limit operator.  Write
\begin{equation}
b(u) \;=\; \bbeta\, P_{x,u}\, h^\star
\label{eq:completed-cost}
\end{equation}
for the \emph{completed-cost offset} of action $u$, and let $U_x^\infty$
be the mark of the point of $\Pi_x$ minimizing $z + b(u)$.

\begin{theorem}[Poisson-Mecke selection]
\label{thm:mecke}
The minimizer of $z + b(u)$ over the points of $\Pi_x$ is almost
surely unique, and for every Borel set $B$ of the optimal set
$\mathsf{A}_x^\star$,
\begin{equation}
\begin{aligned}
\Prob\{U_x^\infty \in B\} &= \int_{\R_+\times B}
\exp\Bigl\{ - \Lambda_x\bigl\{(z',u')\in\R_+\times\mathsf{A}_x^\star :
\\
&\qquad\qquad z'+b(u')<z+b(u)\bigr\} \Bigr\}
\, \Lambda_x(dz,du).
\end{aligned}
\label{eq:mecke}
\end{equation}
\end{theorem}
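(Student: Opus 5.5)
The plan is to work entirely at the level of the limiting Poisson process $\Pi_x$ with intensity $\Lambda_x$ from Proposition~\ref{prop:poisson}. The offset $b(u)=\bbeta P_{x,u}h^\star$ is continuous on the compact set $\mathsf{A}_x^\star$ and bounded by $\bbeta\norminfty{h^\star}$. Since $\Lambda_x([0,L]\times\mathsf{A}_x^\star)=L^{\alpha_x}C_x<\infty$, only finitely many points have $z+b(u)\le t$ for each $t$. At least one point does once $t>\bbeta\norminfty{h^\star}$, because the number of points with $z\le t-\bbeta\norminfty{h^\star}$ has infinite mean as $t\to\infty$ and is Poisson. Hence the infimum of the completed cost $z+b(u)$ is attained a.s., and $U_x^\infty$ is well defined once uniqueness is shown.

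\emph{Uniqueness.} I apply the second-order (bivariate) Mecke formula. The expected number of ordered pairs of distinct points with equal completed cost is
\[
\int\!\!\int \one\{z+b(u)=z'+b(u')\}\,\Lambda_x(dz,du)\,\Lambda_x(dz',du').
\]
For fixed $(z',u')$ and $u$, the set of $z$ with $z=z'+b(u')-b(u)$ is a single point. The $z$-marginal of $\Lambda_x$ is absolutely continuous with density $\alpha_x z^{\alpha_x-1}$, so this inner integral vanishes, and by Fubini the whole integral is $0$. Because the double integral is not finite globally, I restrict first to $\{z,z'\le L\}$, where all masses are finite, and then let $L\to\infty$. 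So a.s. no two points share a completed cost, and the minimizer is unique. This is the same boundary-null argument as $\Lambda_x(\partial B_t)=0$ in Lemma~\ref{lem:uniform}.

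\emph{Formula.} On this a.s. event,
\[
\one\{U_x^\infty\in B\}=\sum_{(z,u)\in\Pi_x}\one\{u\in B\}\,\one\bigl\{\Pi_x(A_{z,u})=0\bigr\},
\]
where $A_{z,u}=\{(z',u'): z'+b(u')<z+b(u)\}$. The point $(z,u)$ itself is not in $A_{z,u}$, because the inequality is strict. I take expectations and apply the Slivnyak–Mecke formula,
\[
\E\sum_{p\in\Pi}F(p,\Pi)=\int \E F(p,\Pi+\delta_p)\,\Lambda(dp),
\]
with $F(p,\xi)=\one\{u\in B\}\one\{\xi(A_{z,u})=0\}$. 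Adding $\delta_p$ does not change the count on $A_{z,u}$, so $\E F(p,\Pi+\delta_p)=\one\{u\in B\}\exp\{-\Lambda_x(A_{z,u})\}$. The set $A_{z,u}$ lies in $[0,z+2\bbeta\norminfty{h^\star}]\times\mathsf{A}_x^\star$, so $\Lambda_x(A_{z,u})$ is finite. This gives \eqref{eq:mecke}. Measurability of $F$ follows from continuity of $b$: $(p,\xi)\mapsto\xi(A_{z,u})$ is jointly measurable because $A_{z,u}$ is defined through the continuous function $z'+b(u')-z-b(u)$.

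The main obstacle I expect is the rigorous uniqueness step. The configuration has infinitely many points, so the second-order Mecke argument must be localized as above to truncations with finite intensity. The rest is a direct application of Mecke's formula. As a consistency check, taking $B=\mathsf{A}_x^\star$ the right side of \eqref{eq:mecke} must equal $1$: it is $\int e^{-G(s)}\,dG(s)$ with $G(s)=\Lambda_x\{z+b(u)<s\}$ continuous, increasing from $0$ to $\infty$. In the isolated case, the formula reduces to \eqref{eq:selection-prob}.
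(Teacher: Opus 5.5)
Your proposal is correct and follows the paper's proof. Uniqueness comes from the diffuse $z$-marginal of $\Lambda_x$ via a Mecke/Palm computation; your truncated second-order version is a more careful form of the paper's Step~1. The formula comes from applying the Slivnyak--Mecke identity to the indicator that a point is the completed-cost minimizer with mark in $B$, and evaluating it through the Poisson void probability of the strict sublevel set.

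The extra checks you add (attainment of the infimum, finiteness of $\Lambda_x(A_{z,u})$, joint measurability) are left implicit in the paper. One wording fix: attainment holds almost surely for all sufficiently large (random) $t$, not for every fixed $t>\bbeta\norminfty{h^\star}$.
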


\begin{proof}
\emph{Step 1: uniqueness.}  Two distinct points $(z,u),(z',u')$ of
$\Pi_x$ tie in the completed cost only if $z' = z + b(u) - b(u')$;
for fixed $(z,u,u')$ this is a single value of $z'$, a $\Lambda_x$-null
set since the $z$-marginal $\alpha_x z^{\alpha_x-1}dz$ is diffuse;
integrating against the Palm measure shows that the probability of a
tie between any two points of $\Pi_x$ is zero, and the minimizer is
a.s.\ unique;
$U_x^\infty$ is therefore a.s.\ well defined.

\emph{Step 2: the Slivnyak-Mecke identity.}  For a Poisson process
$\Pi_x$ with intensity $\Lambda_x$ and a bounded measurable
$f(p, \Pi_x\setminus\{p\})$, the Slivnyak-Mecke formula states
(see \cite{LastPenrose17})
\begin{equation}
\E\!\!\sum_{p\in\Pi_x} f\bigl(p, \Pi_x\setminus\{p\}\bigr)
\;=\;
\int \Lambda_x(dp)\; \E\, f(p, \Pi_x),
\label{eq:slivnyak}
\end{equation}
where on the right the point $p$ is fixed and $\Pi_x$ is an
independent copy.  Apply \eqref{eq:slivnyak} to
\[
f\bigl(p, \Pi_x^- \bigr)
\;=\;
\mathbf{1}_B(u_p)\,
\mathbf{1}\bigl\{ z_q + b(u_q) > z_p + b(u_p)
\;\text{for all } q\in\Pi_x^-\bigr\},
\]
the indicator that $p$ is the (unique, by Step~1) minimizer of the
completed cost and carries its mark in $B$.  The summand on the left
of \eqref{eq:slivnyak} is then $\mathbf{1}\{U_x^\infty\in B\}$ for the
single minimizing point, so the left-hand side equals
$\Prob\{U_x^\infty\in B\}$.  On the right, conditioning on the fixed
point $(z,u)$, the event that no point of the independent copy falls
in the sublevel set $\{z' + b(u') < z + b(u)\}$ has probability
$\exp\{-\Lambda_x\{z' + b(u') < z + b(u)\}\}$ by the void probability
of a Poisson process, which gives \eqref{eq:mecke}.  Taking
$B = \mathsf{A}_x^\star$ makes the inner indicator identically one,
and the identity \eqref{eq:slivnyak} yields the normalization
$\Prob\{U_x^\infty\in\mathsf{A}_x^\star\} = 1$.

\emph{Step 3: stratified specialization.}  Under
\ref{ass:G} with $\sigma_x = \sum_j C_{xj}\,\delta_{u_{x,j}^\star}$,
the sublevel set of Step~2 for $u = u_{x,j}^\star$ has measure
$\sum_\ell C_{x\ell}\,(z + b_j - b_\ell)_+^{\alpha_x}$, where
$b_j = \bbeta P_{x,j}h^\star$; hence \eqref{eq:mecke} reads
\begin{equation}
\Prob\{U_x^\infty = u_{x,j}^\star\}
\;=\;
C_{xj}\int_0^\infty
\exp\Bigl\{ - \textstyle\sum_\ell C_{x\ell}\,
\bigl(z + b_j - b_\ell\bigr)_+^{\alpha_x} \Bigr\}
\, \alpha_x z^{\alpha_x-1}\, dz.
\label{eq:mecke-discrete}
\end{equation}
Isolating the $\ell = j$ term $e^{-C_{xj}z^{\alpha_x}}$ reproduces the
Weibull density $d\Phi_j(z)$ of Corollary~\ref{cor:isolated}, and the
remaining product $e^{-C_{x\ell}(z+b_j-b_\ell)_+^{\alpha_x}} =
1-\Phi_\ell(z+b_j-b_\ell)$ gives exactly the dynamic selection
probabilities \eqref{eq:selection-prob} of
\ref{thm:tiebreak}.  When the offsets agree, $b_\ell = b_j$
for all critical $\ell,j$, the integral in \eqref{eq:mecke-discrete}
collapses to $C_{xj}\int_0^\infty \alpha_x z^{\alpha_x-1}
e^{-C_x z^{\alpha_x}} dz = C_{xj}/C_x$, the static volume ratio
\eqref{eq:static-selection}.
\end{proof}

\begin{remark}
The formula \eqref{eq:mecke} is a property of the marked Poisson
process alone: given the intensity $\Lambda_x$ it needs no
stratification of $\mathsf{A}_x^\star$, and the integration in $u$ over
a continuous component of the optimal set is an ordinary integral
against $\sigma_x$ (the mark density after integrating out $z$ is
$\sigma_x(du)\int_0^\infty \alpha_x z^{\alpha_x-1}
e^{-\Lambda_x\{z' + b(u') < z + b(u)\}}\,dz$).  What
\emph{does} require the stratified geometry of Assumption~\ref{ass:G}
is the convergence of the empirical pool to $\Pi_x$ itself
(Proposition~\ref{prop:poisson}); on the boundary of the action space
the volume law and the limit intensity may change.
\end{remark}

\subsection{Perturbing the Tie: The Critical Switching Layer}
\label{sec:switching}

Theorem~\ref{thm:main} identifies the leading constant of the value gap
at a \emph{fixed} model, and Theorem~\ref{thm:tiebreak} resolved the
selection at the \emph{exact tie}.  Between these two extremes lies the response of value and selection
to a perturbation making one branch slightly better.  This response takes place on a \emph{critical
switching layer} of
width $\veps_N = N^{-1/\alpha}$ in the branch-offset parameter: if
the gap is of order $\veps_N$, the limit
is a genuine two-branch competition governed by a one-parameter family
of nonlinear fixed points interpolating between the two
branch-specific endpoint fixed points (linear branch resolvents when
every other state has a unique optimal action); if the
gap is much larger, one branch wins almost surely and its endpoint
fixed point governs.

For a state $x$ with two critical branches, let $\theta$ denote the
offset of the branch-2 minimum relative to the branch-1 minimum, and let
$\theta_c = 0$ be the critical offset at which the branches are tied.
Consider the one-parameter family of models at offset
$\theta_N = \theta_c + \veps_N \xi$.  Formally, the family is a
\emph{parameterized MDP} whose $Q$-function takes the form
\begin{equation}
\begin{split}
Q^{N,\xi}_V(x,u)
&= Q_V(x,u)
- \veps_N\bigl[\,\lambda_{x,\pi_x(u)}(\xi)
+ \rho_N(x,u,\xi)\bigr],\\
\lambda_{x1}(\xi) &= \max(0,-\xi),\quad
\lambda_{x2}(\xi) = \max(0,\xi),
\end{split}
\label{eq:param-Q}
\end{equation}
with \emph{unchanged transition probabilities} $p^{N,\xi}_{xy} = p_{xy}$;
equivalently, the perturbation is a stage-cost penalty
$r^{N,\xi}_x(u) = r_x(u) - \veps_N[\lambda_{x,\pi_x(u)}(\xi)
+ \rho_N(x,u,\xi)]$.  (For $u$ outside the tubes of the critical strata the
value of $\lambda_{x,\pi_x(u)}$ is immaterial: such candidates have
deficiency bounded away from zero and never win the minimum.)  The
remainder is assumed uniformly negligible on compact intervals,
\begin{equation}
\sup_{\xi\in K}\;\sup_{x\in\X}\;\sup_{u\in\Ux_x} |\rho_N(x,u,\xi)|
\;\longrightarrow\; 0,
\qquad N\to\infty,
\label{eq:param-remainder}
\end{equation}
for every compact $K\subset\R$.
The perturbed Bellman deficiency, defined through the $Q$-function at
the unperturbed value $V^\star$,
\begin{equation}
\Delta^{N,\xi}_x(u) \;:=\;
V^\star(x) - Q^{N,\xi}_{V^\star}(x,u)
\;=\;
\Delta_x(u) + \veps_N\bigl[\,\lambda_{x,\pi_x(u)}(\xi)
+ \rho_N(x,u,\xi)\bigr],
\label{eq:switching-perturb}
\end{equation}
thus expands the unperturbed deficiency $\Delta_x$ by the rescaled
branch gap with a uniform remainder.  Writing $\Delta_{xj}(\theta_N)$ for
the minimal deficiency of branch $j$ \emph{relative to the best branch}
at offset $\theta_N$, \eqref{eq:switching-perturb} is equivalently
\begin{equation}
\lim_{N\to\infty}\, \veps_N^{-1} \Delta_{xj}(\theta_N)
\;=\; \lambda_{xj}(\xi),
\label{eq:switching-gap}
\end{equation}
so $\lambda_{xj}(\xi)$ is the rescaled branch gap at the switching
offset; note that $\min_j \lambda_{xj}(\xi) = 0$, i.e.\ the better branch
is never penalized, and $\lambda_{x1}(\xi) - \lambda_{x2}(\xi) = -\xi$.

\begin{theorem}[Critical switching layer]
\label{thm:switching}
Assume the hypotheses of Theorem~\ref{thm:main} together with
Assumption~\ref{ass:isolated}, and let $\calJ_x^{\mathrm{crit}} = \{1,2\}$
for the state $x$ under consideration.  Let $(Z_{x,j})_{j=1,2}$ be the
independent Weibull variables of Corollary~\ref{cor:isolated} and define
the \emph{switching operator}
\begin{equation}
(\PSi_\xi h)(x)
\;=\;
\E \min_{j=1,2}
\left\{ Z_{x,j} + \lambda_{xj}(\xi) + \bbeta\, P_{x,j}\, h \right\},
\label{eq:Psi-xi}
\end{equation}
with $\lambda_{xj}$ as in \eqref{eq:param-Q}; for states without
a tie, $\PSi_\xi$ coincides with $\PSi$ of \eqref{eq:Psi-isolated}.  Then
the following hold for the parameterized family \eqref{eq:param-Q} with
sampled fixed point $V_N^{\xi} = \Tcal_N^{\xi} V_N^{\xi}$, where
$\Tcal_N^{\xi} V(x) = \E\max_{1\le i\le N} Q^{N,\xi}_V(x,U_{x,i})$.
\begin{enumerate}
\item[(a)] $\PSi_\xi$ is a $\bbeta$-contraction on $\R^n$ uniformly in
$\xi$; let $h_\xi$ be its unique fixed point.  The map $\xi\mapsto h_\xi$
is continuous, $h_0 = h^\star$, and $h_\xi$ is \emph{minimized at the
exact tie}: coordinatewise, $\xi\mapsto h_\xi$ is nonincreasing on
$(-\infty,0]$ and nondecreasing on $[0,\infty)$, so that
$h_\xi(x) \ge h^\star(x)$ for every state $x$ and every $\xi$ (comparison
principle, Proposition~\ref{prop:contraction}).
\item[(b)] The rescaled value gap converges in sup norm,
\begin{equation}
\lim_{N\to\infty}\,
\veps_N^{-1}\bigl( V^\star(x) - V_N^{\xi}(x) \bigr)
\;=\; h_\xi(x),
\label{eq:switching-gap-limit}
\end{equation}
uniformly for $\xi$ in compact intervals, where $V^\star$ is the value of
the unperturbed continuous-action model.  (The perturbed
continuous-action value $V^{\star,N,\xi}$ is the fixed point of the
$\bbeta$-contraction $V\mapsto \max_{u\in\Ux_x} Q^{N,\xi}_V(\cdot,u)$;
since $\Delta_x\ge 0$ and $\lambda_{x,\pi_x(u)}(\xi)\ge 0$ with
$\min_j\lambda_{xj}(\xi)=0$, the minimal perturbed deficiency
$\delta_N^\xi := \min_{u\in\Ux_x}\Delta^{N,\xi}_x(u)$ satisfies, by
\eqref{eq:switching-perturb},
\begin{equation*}
\delta_N^\xi \;=\; \inf_{u\in\Ux_x}\Bigl\{ \Delta_x(u)
+ \veps_N\bigl[\lambda_{x,\pi_x(u)}(\xi) + \rho_N(x,u,\xi)\bigr] \Bigr\},
\end{equation*}
hence $-\veps_N\|\rho_N\|_\infty \le \delta_N^\xi \le
\veps_N\|\rho_N\|_\infty$ (the lower bound by $\Delta_x,\lambda\ge 0$,
the upper attained on the better branch, where $\Delta_x = 0$ and
$\lambda = 0$); the fixed-point comparison of a $\bbeta$-contraction
gives $\|V^{\star,N,\xi} - V^\star\|_\infty \le
(1-\bbeta)^{-1}\veps_N\|\rho_N\|_\infty = o(\veps_N)$ by
\eqref{eq:param-remainder}, uniformly in $\xi$ on compact intervals.)
\item[(c)] The empirical selector converges: writing $I_{x,N}^{\xi}$ for
the branch selected by the finite-pool controller of the perturbed model,
\begin{multline}
\Prob\{ I_{x,N}^{\xi} = 1 \} \;\longrightarrow\; p_1(\xi)
\;:=\;
\Prob\Bigl\{ Z_{x,1} + \lambda_{x1}(\xi) + \bbeta P_{x,1} h_\xi \\
< Z_{x,2} + \lambda_{x2}(\xi) + \bbeta P_{x,2} h_\xi \Bigr\},
\label{eq:switching-prob}
\end{multline}
and $p_1$ is continuous in $\xi$, with $p_1(\xi)\to 0$ as
$\xi\to -\infty$ and $p_1(\xi)\to 1$ as $\xi\to +\infty$.  (If the two branches share a common transition operator at $x$, then
$p_1(\xi) = \Prob\{Z_{x,1} - Z_{x,2} < \xi\}$, a continuous
distribution function, hence nondecreasing.)
\item[(d)] Endpoint regimes.  As $\xi\to +\infty$, $\lambda_{x2}(\xi) =
\xi\to\infty$ forces branch~2 out of the minimum and
$\PSi_\xi h \to \PSi^{x,1}h$ pointwise in $h$, where the mixed operator
$\PSi^{x,1}$ agrees with $\PSi$ at every state except $x$, where
$(\PSi^{x,1} h)(x) = \E Z_{x,1} + \bbeta P_{x,1} h$; consequently
$h_\xi \to h^{\mathrm{res},1}$, the unique fixed point of $\PSi^{x,1}$,
the \emph{branch-1 resolvent}.  Symmetrically, as $\xi\to -\infty$,
$h_\xi \to h^{\mathrm{res},2}$, the fixed point of the mixed operator
$\PSi^{x,2}$ defined by $(\PSi^{x,2}h)(x) = \E Z_{x,2} + \bbeta P_{x,2}h$.
When state $x$ is the only state with tied optimal actions, the mixed
operators are affine and
\begin{equation}
h^{\mathrm{res},j} \;=\; \bigl(I - \bbeta P^{(x,j)}\bigr)^{-1} c^{(x,j)},
\label{eq:branch-resolvent}
\end{equation}
where $P^{(x,j)}$ is the transition matrix of the policy that plays
$u_{x,j}^\star$ at $x$ and the unique optimal action at every other
state, and $c^{(x,j)}_x = \E Z_{x,j}$, $c^{(x,j)}_y = c_y$ for $y\ne x$
(Corollary~\ref{cor:resolvent}).
\item[(e)] Exact tie: at $\xi = 0$ the switching operator reduces to the
tangential operator of Theorem~\ref{thm:main}, $h_0 = h^\star$, and
$p_1(0)$ is the exact-tie selection probability computed in
Theorem~\ref{thm:tiebreak} (Section~\ref{sec:tiebreak:main}).
\end{enumerate}
\end{theorem}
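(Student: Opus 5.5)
I would prove the five parts by reusing the contraction–convergence pattern of Theorem~\ref{thm:main}, with the offset $\lambda_{xj}(\xi)$ carried as a shifted mark.

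\textbf{Part (a).} Contraction of $\PSi_\xi$ is the argument of Lemma~\ref{lem:contraction}: the shifts $\lambda_{xj}(\xi)$ do not depend on $h$, so the constant $\bbeta$ is uniform in $\xi$. The map $h\mapsto \PSi_\xi h$ is jointly continuous in $(\xi,h)$, because $\lambda_{xj}$ is $1$-Lipschitz in $\xi$ and the minimum is $1$-Lipschitz in its arguments. Hence
\[
\norminfty{h_\xi-h_{\xi'}}\le(1-\bbeta)^{-1}\norminfty{\PSi_\xi h_{\xi'}-\PSi_{\xi'}h_{\xi'}}\le(1-\bbeta)^{-1}|\xi-\xi'|,
\]
which gives continuity. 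At $\xi=0$ both $\lambda$'s vanish, so $h_0=h^\star$; this also settles the first claim of (e).

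For monotonicity, note that $\xi\mapsto\lambda_{xj}(\xi)$ is nonincreasing on $(-\infty,0]$ and nondecreasing on $[0,\infty)$ for both $j$. So for $0\le\xi\le\xi'$ we have $\PSi_\xi h\le\PSi_{\xi'}h$ for every $h$, and $\PSi_{\xi'}$ is monotone. Iterating $\PSi_{\xi'}$ from $h_\xi$, exactly as in Proposition~\ref{prop:contraction}(b), gives $h_\xi\le h_{\xi'}$. The case $\xi'\le\xi\le0$ is symmetric.

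\textbf{Part (b).} Set $h_N^\xi=\veps_N^{-1}(V^\star-V_N^\xi)$. As in \eqref{eq:hN-fixed}, this is the fixed point of
\[
(\Phix_N^\xi h)(x)=\E\min_i\bigl\{\veps_N^{-1}\Delta_x(U_{x,i})+\lambda_{x,\pi_x(U_{x,i})}(\xi)+\rho_N(x,U_{x,i},\xi)+\bbeta P_{x,U_{x,i}}h\bigr\},
\]
where the expansion uses \eqref{eq:switching-perturb} and the unchanged kernels. This is precisely the shifted-mark setting of Lemma~\ref{lem:uniform}, Step~1, with $a_N(u)=\lambda_{x,\pi_x(u)}(\xi)+\rho_N$. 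The limit shift $a=\lambda_{x\cdot}(\xi)$ is continuous on the finite mark set, and $\sup|a_N-a\circ\pi_x|\to0$ by \eqref{eq:param-remainder}. So \eqref{eq:tail-shifted} gives distributional convergence of the minima to the summand of $\PSi_\xi$.

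Uniform integrability follows from Lemma~\ref{lem:ui-min}, since the shifts are bounded by $|\xi|+\norminfty{\rho_N}$, bounded on compact $K$. The $\delta$-net argument then gives
\[
\sup_{\norminfty{h}\le R}\norminfty{\Phix_N^\xi h-\PSi_\xi h}\to0 .
\]
To get uniformity in $\xi\in K$ I would add a net in $\xi$, using that both operators are $1$-Lipschitz in $\xi$ up to the $\rho_N$ error. The fixed-point comparison \eqref{eq:main-bound} then yields (b), with $\sup_{\xi\in K}\norminfty{h_\xi}<\infty$ by (a).

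\textbf{Part (c).} This repeats Theorem~\ref{thm:tiebreak}, Steps~1--2, with the added deterministic offsets $\lambda_{xj}(\xi)$ and with $h_N^\xi\to h_\xi$ replacing $h_N\to h^\star$. The limiting argmin is a.s.\ unique because the $Z_{x,j}$ are independent with diffuse laws, so Lemma~\ref{lem:selector} applies. Write
\[
p_1(\xi)=\Prob\{Z_{x,1}-Z_{x,2}<\lambda_{x2}(\xi)-\lambda_{x1}(\xi)+\bbeta(P_{x,2}-P_{x,1})h_\xi\},\qquad \lambda_{x2}(\xi)-\lambda_{x1}(\xi)=\xi .
\]
The threshold is therefore $\xi+\bbeta(P_{x,2}-P_{x,1})h_\xi$. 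It is continuous in $\xi$, and $Z_{x,1}-Z_{x,2}$ has a continuous distribution function, so $p_1$ is continuous. For the limits: $\bbeta(P_{x,2}-P_{x,1})h_\xi$ is bounded by $2\bbeta\sup_\xi\norminfty{h_\xi}$. This supremum is finite because $h_\xi$ is monotone in $|\xi|$ by (a) and converges to the endpoint fixed points by (d), which is proved independently. Thus the threshold tends to $\pm\infty$ as $\xi\to\pm\infty$. With a common kernel the offset vanishes and $p_1(\xi)=\Prob\{Z_{x,1}-Z_{x,2}<\xi\}$.

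\textbf{Parts (d) and (e).} For (d), as $\xi\to+\infty$ we have $\lambda_{x1}=0$ and $\lambda_{x2}=\xi$. The function $\min\{Z_{x,1}+\bbeta P_{x,1}h,\;Z_{x,2}+\xi+\bbeta P_{x,2}h\}$ increases monotonically to $Z_{x,1}+\bbeta P_{x,1}h$, which is integrable, so monotone convergence gives $\PSi_\xi h\to\PSi^{x,1}h$. The convergence is uniform on bounded sets by Lipschitz continuity, and the fixed-point comparison gives $h_\xi\to h^{\mathrm{res},1}$. Formula \eqref{eq:branch-resolvent} then follows as in Corollary~\ref{cor:resolvent}, since every state is affine. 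For (e), $h_0=h^\star$ was noted in (a), and $p_1(0)$ coincides with \eqref{eq:tiebreak-limit}.

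\textbf{Main obstacle.} I expect the hardest step to be the uniformity in $\xi$ in (b), together with checking that the negative-shift variant of Lemma~\ref{lem:uniform} tolerates the $\xi$-dependent remainder $\rho_N$. I would handle it with a joint $(\xi,h)$-net, combined with the bounds $|\partial_\xi|\le1$ and $\norminfty{\rho_N}\to0$.
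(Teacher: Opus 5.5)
Your proposal is correct and follows essentially the same route as the paper: the gap equation for $\veps_N^{-1}(V^\star - V_N^\xi)$, the shifted-marks variant of Lemma~\ref{lem:uniform} with uniform integrability and a joint $(h,\xi)$-net, contraction closing, a rerun of the tie-breaking argument for the selector, convergence to the mixed operators at the endpoints, and the monotone comparison principle for (a). Your explicit Lipschitz bound $\norminfty{h_\xi-h_{\xi'}}\le(1-\bbeta)^{-1}|\xi-\xi'|$ and your bound on $\sup_\xi\norminfty{h_\xi}$ (used for the endpoint limits of $p_1$) are slightly sharper and more explicit than the paper's dominated-convergence steps, but the approach is the same.
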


\begin{proof}
\emph{Step 1: the gap equation.}  Repeating the derivation of
Theorem~\ref{thm:main} (linearity of $u\mapsto Q_V(x,u)$ in $V$), the
rescaled gap $g_N^\xi = \veps_N^{-1}(V^\star - V_N^{\xi})$ satisfies
exactly the fixed-point equation
\begin{equation}
\begin{split}
g_N^\xi(x)
&= (\Phix_N^\xi g_N^\xi)(x),\\
(\Phix_N^\xi h)(x)
&= \E \min_{1\le i\le N}
\left\{ \veps_N^{-1}\Delta^{N,\xi}_x(U_{x,i})
+ \bbeta \sum_y p_{xy}(U_{x,i}) h(y) \right\},
\end{split}
\label{eq:PhiN-xi}
\end{equation}
where $\Delta^{N,\xi}_x$ is the deficiency \eqref{eq:switching-perturb}
of the parameterized family; the transitions in \eqref{eq:PhiN-xi} are
the (unchanged) unperturbed kernels, since the gap identity is linear
in $V$ and the perturbation enters $Q$ additively.  By
\eqref{eq:switching-perturb} and \eqref{eq:param-remainder},
$\veps_N^{-1}\Delta^{N,\xi}_x(u) =
\veps_N^{-1}\Delta_x(u) + \lambda_{x,\pi_x(u)}(\xi) + \rho_N(x,u,\xi)$,
the remainder uniform in $u$ and in $\xi$ on compact intervals, and
candidates away from the critical strata have rescaled deficiencies
$\to+\infty$ and do not affect the minimum.

\emph{Step 2: uniform convergence of the operators.}  Fix
$\xi\in K$ and apply the shifted-marks variant of Step~1 of the proof
of Lemma~\ref{lem:uniform} to the shifted values
$N^{1/\alpha_x}\Delta^{N,\xi}_x(u) + \bbeta P_{x,u}h$: by
\eqref{eq:switching-perturb} and \eqref{eq:param-remainder} the
rescaled deficiency of a candidate on stratum $j$ is
$N^{1/\alpha_x}\Delta_x(u) + \lambda_{xj}(\xi) + \rho_N(x,u,\xi)$ with
$\rho_N$ uniform, i.e.\ the marks are shifted by
$a_N(u) = \lambda_{x,\pi_x(u)}(\xi) + \rho_N(x,u,\xi)$, converging
uniformly to $a(u) = \lambda_{x,\pi_x(u)}(\xi)$, which is constant on
each stratum and hence continuous on $\mathsf{A}_x^\star$ (the strata
are separated).  The expectation step of Lemma~\ref{lem:uniform} (Step~2) applies
verbatim (bounded shifts on compact $K$), so the uniform integrability
is unchanged.  Hence, for every $R<\infty$ and every compact
$K\subset\R$,
\begin{equation}
\sup_{\norminfty{h}\le R}\;\sup_{\xi\in K}\;
\norminfty{\Phix_N^\xi h - \PSi_\xi h}
\;\longrightarrow\; 0,
\qquad N\to\infty,
\label{eq:uniform-xi}
\end{equation}
the shifts $\lambda_{xj}$ being bounded and $1$-Lipschitz on $K$; the uniformity in $\xi$ then follows from the $\delta$-net argument of Lemma~\ref{lem:uniform}, Step~3, applied jointly in $(h,\xi)$.

\emph{Step 3: the limit fixed point and the value gap.}  Both
$\Phix_N^\xi$ and $\PSi_\xi$ are $\bbeta$-contractions uniformly in $\xi$
(Lemma~\ref{lem:contraction} verbatim: the shifts are constant on the
strata), and $\sup_{\xi\in K}\norminfty{h_\xi}<\infty$ (iterate the
fixed-point equation from $0$).  From $g_N^\xi = \Phix_N^\xi g_N^\xi$ and $h_\xi = \PSi_\xi h_\xi$,
\begin{equation}
\norminfty{g_N^\xi - h_\xi}
\;\le\;
\frac{1}{1-\bbeta}\,
\norminfty{\Phix_N^\xi h_\xi - \PSi_\xi h_\xi}
\;\longrightarrow\; 0,
\label{eq:switching-bound}
\end{equation}
uniformly in $\xi\in K$ by \eqref{eq:uniform-xi}; this is (b), and
$h_0 = h^\star$ follows from $\lambda_{xj}(0) = 0$.

\emph{Step 4: the selector.}  Fix $\xi$.  Repeat Steps~1-2 of the proof
of Theorem~\ref{thm:tiebreak} with $\Delta^{N,\xi}_x$ in place of
$\Delta_x$: on the event that the winning candidate is critical, its
branch minimizes $\hat Z_{x,j} + \lambda_{xj}(\xi) + \bbeta P_{x,j}h_\xi$
(Lemma~\ref{lem:selector}: the offsets are deterministic shifts, the
limiting winner is a.s.\ unique), so $\Prob\{I_{x,N}^{\xi} =
1\}\to p_1(\xi)$.  Continuity of $p_1$ in $\xi$ follows from the
continuity of $\xi\mapsto h_\xi$ (bound
$\norminfty{h_\xi - h_{\xi'}} \le (1-\bbeta)^{-1}\norminfty{\PSi_\xi h_{\xi'} -
\PSi_{\xi'} h_{\xi'}}\to 0$ by dominated convergence), the
differences $Z_{x,1} - Z_{x,2} - \xi + \bbeta(P_{x,1}-P_{x,2})h_\xi$
having continuous distributions; and the endpoint limits
$p_1(\xi)\to 1,0$ as $\xi\to\pm\infty$ follow from Step~5 (the
branch-2 term of the minimum diverges).

\emph{Step 5: endpoint regimes.}  For $\xi\to+\infty$:
$\lambda_{x1}(\xi) = 0$ and $\lambda_{x2}(\xi) = \xi$.  For each state
and each $h$, dominated convergence (for $\xi\ge\xi_0$ the integrand is
bounded by $Z_{x,1}+Z_{x,2}+\xi_0+2\bbeta\norminfty{h}$, $\E
Z_{x,j}<\infty$) gives $\PSi_\xi h\to\PSi^{x,1}h$ pointwise, uniformly
on bounded balls of $h$ (equi-$\bbeta$-Lipschitz).  Let
$h^{\mathrm{res},1}$ be the unique fixed point of the mixed
$\bbeta$-contraction $\PSi^{x,1}$.  Then
\begin{equation}
\norminfty{h_\xi - h^{\mathrm{res},1}}
\;\le\;
\frac{1}{1-\bbeta}\,
\norminfty{\PSi_\xi h^{\mathrm{res},1} - \PSi^{x,1} h^{\mathrm{res},1}}
\;\longrightarrow\; 0,
\label{eq:endpoint-bound}
\end{equation}
and the case $\xi\to-\infty$ is symmetric, giving $h^{\mathrm{res},2}$.
Finally, if no state other than $x$ has tied optimal actions, $\PSi$ is
affine at every state $y\ne x$ by Corollary~\ref{cor:resolvent},
$\PSi h(y) = c_y + \bbeta P^\star_y h$, so $\PSi^{x,j}$ is globally
affine, $\PSi^{x,j}h = c^{(x,j)} + \bbeta P^{(x,j)}h$, and its fixed
point is the resolvent \eqref{eq:branch-resolvent}.

\emph{Step 6: comparison and monotonicity.}  On $[0,\infty)$,
$\lambda_{x1}(\xi) = 0$ and $\lambda_{x2}(\xi) = \xi$ is nondecreasing
in $\xi$, and $b\mapsto\min\{a,b\}$ is nondecreasing, so $\PSi_\xi h$
is nondecreasing in $\xi$ pointwise in $h$; the comparison principle
of Proposition~\ref{prop:contraction} (monotone $\bbeta$-contraction
on the sup-norm lattice) makes the fixed point $h_\xi$ nondecreasing
in $\xi$ on $[0,\infty)$, and the symmetric argument with
$\lambda_{x1}(\xi) = -\xi$ makes it nonincreasing on $(-\infty,0]$.
At the tie, $\lambda_{xj}(\xi)\ge\lambda_{xj}(0) = 0$ for all $j$ and
$\xi$, so $\PSi_\xi\ge\PSi$ pointwise and the monotone iteration
$h_\xi = \lim_m \PSi_\xi^m h^\star \ge h^\star$: $h_\xi$ is minimized
coordinatewise at $\xi = 0$, completing (a).
\end{proof}

\begin{remark}[The width of the layer]
The switching layer has width $\veps_N = N^{-1/\alpha_x}$ in the
branch-offset parameter $\theta$: at offsets $\theta_N = \theta_c +
\veps_N\xi$ with $|\xi|$ of order one,
\eqref{eq:switching-gap-limit}-\eqref{eq:switching-prob} show that
the finite-pool value interpolates between the two branch-specific
endpoint fixed points and the selection
probability between $0$ and $1$, through the
nonlinear family $h_\xi$.  In the sequential limit (send $N\to\infty$
at fixed $\xi$, then $|\xi|\to\infty$) part~(d) applies: gap and
selection converge to the winning branch's endpoint fixed point.  At a fixed
offset $\theta\ne\theta_c$ the perturbed model has a unique best
branch, and Theorem~\ref{thm:main} with Corollary~\ref{cor:resolvent}
apply directly: $\veps_N^{-1}(V^{\star,\theta}(x)-V_N^{\theta}(x))
\to h^{\mathrm{res}}(\theta)$, the endpoint constant of the winning
branch.
\end{remark}

\begin{proposition}[Parametric switching under a uniform local deficiency expansion]
\label{prop:param}
Let the model depend on a parameter $\theta$, with $Q^{\theta}$,
$V^{\star,\theta}$, and $\Delta_x^{\theta}(u) = V^{\star,\theta}(x) -
Q^{\theta}_{V^{\star,\theta}}(x,u)$ the $Q$-function, value, and
Bellman deficiency of the continuous-action model at $\theta$.
Assume that at the critical value $\theta_c$ the critical branches
of state $x$ are $\calJ_x^{\mathrm{crit}} = \{1,2\}$, tied with
minimal-deficiency gap
$\Delta_{x2}(\theta)-\Delta_{x1}(\theta) = \kappa_x(\theta-\theta_c)
+ o(|\theta-\theta_c|)$, $\kappa_x>0$, and that at the critical
offset $\theta_N = \theta_N(\xi) := \theta_c + \veps_N\xi$,
$\veps_N = N^{-1/\alpha_x}$, the model satisfies the
\emph{uniform local deficiency expansion at every state}: for every
compact set $K\subset\R$ and every finite $L$,
\begin{equation}
\sup_{\xi\in K}\;
\sup_{y\in\X}\;
\sup_{\{u\in\Ux_y\,:\,\Delta_y^{\theta_c}(u)\le L\veps_N\}}
\Bigl|\veps_N^{-1}\bigl(\Delta_y^{\theta_N}(u)
-\Delta_y^{\theta_c}(u)\bigr)
-\lambda_{y,\pi_y(u)}(\xi)\Bigr| \;\longrightarrow\; 0,
\label{eq:param-deficiency}
\end{equation}
with $\lambda_{x1}(\xi) = (-\kappa_x\xi)_+$, $\lambda_{x2}(\xi) =
(\kappa_x\xi)_+$ at the switching state $x$ and
$\lambda_{y,\cdot}(\xi)\equiv 0$ at $y\ne x$ (so that
$\min_j\lambda_{xj}(\xi) = 0$ and $\lambda_{x1}(\xi) -
\lambda_{x2}(\xi) = -\kappa_x\xi$), and a global control at every
state: for every compact $K\subset\R$ and every $y\in\X$ there is
$M_{K,y}<\infty$ with
$\sup_{\xi\in K}\sup_{u\in\Ux_y}|\veps_N^{-1}(\Delta_y^{\theta_N}(u)
-\Delta_y^{\theta_c}(u))| \le M_{K,y}$ for all large $N$ ($M_{K,x}$
at the switching state $x$).  The data are asymptotically Lipschitz
in $\xi$: for every compact $K$ there is $C_K<\infty$ with
$\sup_{y\in\X}\sup_{u\in\Ux_y}|\veps_N^{-1}(\Delta_y^{\theta_N(\xi)}(u)
-\Delta_y^{\theta_N(\xi')}(u))| \le C_K|\xi-\xi'|$ and likewise
$\sup_{y,z\in\X}\sup_{u\in\Ux_y}|p^{\theta_N(\xi)}_{yz}(u)
- p^{\theta_N(\xi')}_{yz}(u)| \le C_K|\xi-\xi'|$ for all $N$ and
$\xi,\xi'\in K$ (implied by the $C^1$ hypotheses of
Proposition~\ref{prop:param-smooth}).  Assume further
that the hypotheses of Theorem~\ref{thm:main} hold for the
triangular array $\{\theta_N = \theta_c + \veps_N\xi : N\ge 1,
\ \xi\in K\}$ uniformly in $\xi\in K$, with the common critical
exponent $\alpha_y \equiv \alpha$, in the following \emph{branchwise}
sense: at every state $y$ the local constants of the $\theta_c$ model
converge, $C_{yj}(\theta)\to C_{yj}$ and $r_{yj}(\theta)\to r_{yj}$
uniformly over the strata of $y$, and the local exponents are
constant near $\theta_c$, $\alpha_y(\theta) = \alpha_y \equiv \alpha$,
so $\veps_N$ stays the correct scaling at every state; only the branches of state $x$ switch, the other states'
optimal actions being unique and locally stable at $\theta_N$, and no
new optimal branches appear at $x$;
and the transition kernels converge uniformly on the shrinking
critical regions: for every finite $L$,
\begin{equation}
\sup_{\xi\in K}\;\sup_{y\in\X}\;
\sup_{\{u\in\Ux_y\,:\,\Delta_y^{\theta_c}(u)\le L\veps_N\}}
\bigl\| p^{\theta_N}_{y\cdot}(u) - p^{\theta_c}_{y\cdot}(u)
\bigr\|_\infty \;\longrightarrow\; 0,
\qquad N\to\infty.
\label{eq:param-kernel}
\end{equation}
In particular, evaluating \eqref{eq:param-deficiency} at the
$\theta_c$ optima $u^\star_{x,j}$, both branches of $x$ remain
$O(\veps_N)$-near-optimal at $\theta_N$,
$\veps_N^{-1}\Delta_x^{\theta_N}(u^\star_{x,j}) \to
\lambda_{xj}(\xi)$: the losing branch cannot be excluded at the
critical scale.  Then the rescaled value gap
measured against
the \emph{perturbed} continuous value converges uniformly on compact
intervals in $\xi$,
\begin{equation}
\veps_N^{-1}
\bigl(V^{\star,\theta_N}(x) - V_N^{\theta_N}(x)\bigr)
\;\longrightarrow\;
h_{\kappa_x\xi}(x),
\label{eq:param-gap}
\end{equation}
where $h_\xi$ is the fixed point of the switching operator of
Theorem~\ref{thm:switching} and $V_N^{\theta_N}$ is the value of the
sampled model at $\theta_N$.  Measured against the unperturbed value
instead, and assuming the value map $\theta\mapsto V^{\star,\theta}$
is Hadamard directionally differentiable at $\theta_c$,
$\veps_N^{-1}(V^{\star,\theta_c}(x) - V_N^{\theta_N}(x)) \to
h_{\kappa_x\xi}(x) - [DV^\star(\theta_c;\xi)](x)$, the Hadamard directional derivative of the value map at
$\theta_c$ in the direction $\xi$.
\end{proposition}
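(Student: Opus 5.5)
We reduce to the gap equation of Theorem~\ref{thm:switching} (Step~1), but now centred at the \emph{perturbed} value $V^{\star,\theta_N}$ with perturbed kernels. Set $g_N := \veps_N^{-1}(V^{\star,\theta_N}-V_N^{\theta_N})$. Repeating \eqref{eq:hN-fixed} with $V^{\star,\theta_N}$ in place of $V^\star$ shows that $g_N$ is the fixed point of
\[
(\Phix_N^{\theta_N}h)(y)=\E\min_{1\le i\le N}\Bigl\{\veps_N^{-1}\Delta_y^{\theta_N}(U_{y,i})+\bbeta P^{\theta_N}_{y,U_{y,i}}h\Bigr\},
\]
which is a $\bbeta$-contraction exactly as in Lemma~\ref{lem:contraction}. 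The proof then follows the template of \eqref{eq:main-bound}:
\[
\norminfty{g_N-h_{\kappa_x\xi}}\le(1-\bbeta)^{-1}\norminfty{\Phix_N^{\theta_N}h_{\kappa_x\xi}-\PSi_{\kappa_x\xi}h_{\kappa_x\xi}}.
\]
So it suffices to prove operator convergence on bounded balls, uniformly for $\xi\in K$. The $\xi$-Lipschitz hypotheses on deficiencies and kernels, together with the $\delta$-net argument of Lemma~\ref{lem:uniform}, Step~3 (run jointly in $(h,\xi)$), reduce this to pointwise convergence at fixed $(h,\xi)$.

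For fixed $(h,\xi)$ we write the candidate cost as
\[
\veps_N^{-1}\Delta_y^{\theta_c}(u)+a_N(u)+\bbeta P^{\theta_c}_{y,u}h+e_N(u),
\]
where
\[
a_N(u):=\veps_N^{-1}\bigl(\Delta_y^{\theta_N}(u)-\Delta_y^{\theta_c}(u)\bigr),\qquad
e_N(u):=\bbeta\bigl(P^{\theta_N}_{y,u}-P^{\theta_c}_{y,u}\bigr)h .
\]
We then truncate at level $L$. The global control $|a_N|\le M_{K,y}$ and $|P h|\le R$ imply that a candidate with $\veps_N^{-1}\Delta_y^{\theta_c}(u)>t+M_{K,y}+2\bbeta R+1$ exceeds level $t$. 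So, as in the shifted-marks part of Lemma~\ref{lem:uniform}, Step~1, only candidates in $\{\Delta_y^{\theta_c}\le L\veps_N\}$ matter.

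On that shrinking region, \eqref{eq:param-deficiency} gives $\sup|a_N(u)-\lambda_{y,\pi_y(u)}(\kappa_x\xi)|\to0$, and \eqref{eq:param-kernel} gives $\sup|e_N|\to0$. The shifted-marks variant \eqref{eq:tail-shifted} therefore applies to the $\theta_c$ point process $\Pi_{y,N}$, with $a=\lambda_{y,\pi_y(\cdot)}(\kappa_x\xi)$. This shift is constant on strata, hence continuous, and it vanishes at $y\ne x$. The result is distributional convergence of the minima to $\min_j\{Z_{y,j}+\lambda_{yj}(\kappa_x\xi)+\bbeta P_{y,j}h\}$, whose expectation is $(\PSi_{\kappa_x\xi}h)(y)$. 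Here the parametrization $\lambda_{x1}=(-\kappa_x\xi)_+$, $\lambda_{x2}=(\kappa_x\xi)_+$ is exactly the switching operator \eqref{eq:Psi-xi} at argument $\kappa_x\xi$. Expectations then converge by uniform integrability, as in Lemma~\ref{lem:ui-min}, since the costs differ from the unperturbed minimum by at most $M_{K,y}+2\bbeta R$.

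The main obstacle is justifying that we may use the \emph{$\theta_c$} point process, with its constants $C_{yj}$ and exponent $\alpha$, to describe the $\theta_N$ model. Truncation handles this, because it only ever evaluates $\Delta^{\theta_c}$ on the region $\{\Delta^{\theta_c}\le L\veps_N\}$, where Proposition~\ref{prop:poisson} for the fixed $\theta_c$ model applies directly. The branchwise assumptions (constant $\alpha_y$, convergent $C_{yj}$, no new branches at $x$, stable unique optima elsewhere) are what guarantee that no $\theta_N$-near-optimal mass escapes this region. Our first attempt would be to show this via the global bound $M_{K,y}$, which yields $\{\Delta^{\theta_N}\le s\veps_N\}\subseteq\{\Delta^{\theta_c}\le(s+M_{K,y})\veps_N\}$; this inclusion already suffices.

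For the last claim we add and subtract:
\[
\veps_N^{-1}\bigl(V^{\star,\theta_c}-V_N^{\theta_N}\bigr)=g_N-\veps_N^{-1}\bigl(V^{\star,\theta_N}-V^{\star,\theta_c}\bigr).
\]
Since $\theta_N-\theta_c=\veps_N\xi$, Hadamard directional differentiability gives $\veps_N^{-1}(V^{\star,\theta_N}-V^{\star,\theta_c})\to DV^\star(\theta_c;\xi)$, locally uniformly in $\xi$, which yields the stated limit.
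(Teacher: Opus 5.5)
Your proposal is correct and follows essentially the same route as the paper: the fixed-point identity for $g_N$ under the perturbed operator $\Phix^{\theta_N}$, and contraction closing. It also matches in the convergence step: pointwise operator convergence via truncation by the global control, the shifted-marks variant on the fixed $\theta_c$ process with kernels frozen by \eqref{eq:param-kernel}, uniform integrability from Lemma~\ref{lem:ui-min} through a bounded perturbation, a joint $\delta$-net in $(h,\xi)$ using the asymptotic Lipschitz hypotheses, and the add-and-subtract Hadamard step for the unperturbed baseline. Your observation that the global bound alone confines all relevant candidates to $\{\Delta^{\theta_c}\le L\veps_N\}$ is exactly how the paper's Step~2 truncation works, so the branchwise convergence of $C_{yj}(\theta)$ is never actually invoked, and the paper's separate branch-minima Step~1 is not needed for the gap limit.
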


\begin{proof}
The baseline of Theorem~\ref{thm:switching}(b) is the unperturbed
value $V^\star$; here it is the perturbed value $V^{\star,\theta_N}$,
against which $\Delta^{\theta_N}$ is defined ($\Delta_x^{\theta_N}\ge 0$
with minimum zero at the winning branch, so the perturbed continuous
value is exactly $V^{\star,\theta_N}$).

\emph{Step 1: the branch minima.}
Write $\hat Z_{x,j}^{(N)}(\theta_N) := \veps_N^{-1}\min_{1\le
i\le N}\{\Delta_x^{\theta_N}(U_{x,i}) : \pi_x(U_{x,i}) =
u^\star_{x,j}\}$ for the rescaled minimal deficiency of branch $j$
among the sampled candidates at $\theta_N$, and $D_N^{(\xi)} :=
\min_{j=1,2}\hat Z_{x,j}^{(N)}(\theta_N)$ for the global rescaled
minimum (empty minima being $+\infty$).  By the branchwise hypotheses
the volume laws of Proposition~\ref{prop:poisson} hold uniformly in
$\theta$ near $\theta_c$; on the shrinking critical regions the
expansion \eqref{eq:param-deficiency} shifts the rescaled
deficiencies by $\lambda_{x,\pi_x(u)}(\xi)$ up to $o(1)$, and the
global control renders the candidates beyond them irrelevant at any
fixed level, so each $\hat Z_{x,j}^{(N)}(\theta_N)$ converges in
distribution to the \emph{shifted} Weibull variable
$Z_{x,j} + \lambda_{xj}(\xi)$.  (No separate estimate of the tail of
$D_N^{(\xi)}$ is needed: the uniform integrability of Step~2 below
absorbs it.)

\emph{Step 2: the untruncated sampled operator converges.}  For the
sampled model at $\theta_N$ introduce the one-step operator
\begin{equation}
\Phix^{\theta_N} h(y) \;:=\;
\E \min_{1\le i\le N}\Bigl\{
\veps_N^{-1}\Delta_y^{\theta_N}(U_{y,i})
+ \bbeta P^{\theta_N}_{y,U_{y,i}} h \Bigr\},
\qquad h\in C(\X),
\label{eq:param-operator}
\end{equation}
and note that the rescaled gap $g_N^{(\xi)}(y) :=
\veps_N^{-1}(V^{\star,\theta_N}(y) - V_N^{\theta_N}(y))$ is its
fixed point: from $V^{\star,\theta_N} - \max_i a_i = \min_i
(V^{\star,\theta_N} - a_i)$ the Bellman equations give
$V^{\star,\theta_N}(y) - V_N^{\theta_N}(y) = \E\min_{1\le i\le N}\{
\Delta_y^{\theta_N}(U_{y,i}) + \bbeta P^{\theta_N}_{y,U_{y,i}}
(V^{\star,\theta_N} - V_N^{\theta_N})\}$, and rescaling by
$\veps_N^{-1}$ yields $g_N^{(\xi)} = \Phix^{\theta_N}
g_N^{(\xi)}$.  We claim that, for every $R<\infty$ and every compact $K\subset\R$,
\begin{equation}
\sup_{\xi\in K}\;\sup_{\norminfty{h}\le R}\;
\norminfty{\Phix^{\theta_N}h - \PSi_{\kappa_x\xi}h} \;\longrightarrow\;
0,
\qquad N\to\infty.
\label{eq:param-opconv}
\end{equation}
Fix $y\in\X$, $h$ with $\norminfty{h}\le R$, and $\xi\in K$, and write
$m_{N,\xi}(h)(y)$ for the minimum inside the expectation in
\eqref{eq:param-operator} (the untruncated sampled minimum), which we
show is uniformly integrable and converges in distribution, uniformly
in $\xi$ and $h$.

\emph{Distributional convergence.}  Fix $L<\infty$.  On the shrinking
critical region $\{\Delta_y^{\theta_c}\le L\veps_N\}$ the
expansion \eqref{eq:param-deficiency} is the uniform one-step
expansion \eqref{eq:switching-perturb} of the switching argument,
with remainder uniform in $y$, in $u$, and in $\xi\in K$ (the
all-state expansion \eqref{eq:param-deficiency}, whose offsets
$\lambda_{y,\cdot}(\xi)\equiv 0$ vanish at $y\ne x$); and
\eqref{eq:param-kernel} freezes the rescaled future terms
$\bbeta P^{\theta_N}_{y,\cdot}h$ at their $\theta_c$ limits.  By the
global control, candidates with
$\Delta_y^{\theta_c} > (L+M_{K,y}+\bbeta R)\veps_N$ have
rescaled cost at least $L$ ($\bbeta P h\ge -\bbeta R$) and cannot
affect the minimum of
$\min\{L, m_{N,\xi}(h)(y)\}$.  The shifted-marks argument of
Steps~1-2 of Theorem~\ref{thm:switching} applies verbatim to the
triangular array $(N,\theta_N)$: for every fixed $L$,
$\min\{L, m_{N,\xi}(h)(y)\} \Rightarrow \min\{L, m_\xi(h)(y)\}$,
where
\begin{equation}
m_\xi(h)(y) \;:=\; \min_{(z,u)\in\Pi_y}\bigl\{ z +
\lambda_{y,\pi_y(u)}(\xi) + \bbeta P^{\theta_c}_{y,u}h \bigr\},
\label{eq:param-poisson-min}
\end{equation}
with $\lambda_{x,\pi_x(u)}(\xi)$ at the switching state $x$,
$\lambda\equiv 0$ elsewhere, and $\Pi_y$ the marked Poisson process
of state $y$ (Lemma~\ref{lem:uniform}); hence
$\PSi_{\kappa_x\xi}h(y) = \E\, m_\xi(h)(y)$.

\emph{Uniform integrability.}  By the global control at state $y$,
$\veps_N^{-1}\Delta_y^{\theta_N}(u)
\le \veps_N^{-1}\Delta_y^{\theta_c}(u) + M_{K,y}$ for every
$u$, while $\bbeta P h \ge -\bbeta R$, so
\begin{equation}
-\bbeta R \;\le\; m_{N,\xi}(h)(y) \;\le\;
\veps_N^{-1}D^{\theta_c}_{y,1:N} + M_{K,y} + \bbeta R,
\label{eq:param-ui-bound}
\end{equation}
where $D^{\theta_c}_{y,1:N} := \min_{1\le i\le N}
\Delta_y^{\theta_c}(U_{y,i})$ (the right inequality evaluates the
minimum at the candidate attaining it).  By Lemma~\ref{lem:ui-min}
(with $h\equiv 0$) the family
$\{\veps_N^{-1}D^{\theta_c}_{y,1:N}\}$ has uniformly bounded
$(1+\eta)$-moments, uniformly in $y$; hence
$\{m_{N,\xi}(h)(y)\}$ is uniformly integrable, uniformly in $(\xi,h)$,
and sending $L\to\infty$ in the distributional statement (the
truncated limits are integrable by the same bound) gives
$\Phix^{\theta_N}h(y)\to\PSi_{\kappa_x\xi}h(y)$ at fixed
$(y,h,\xi)$.

\emph{Uniform upgrade.}  The maps $h\mapsto\Phix^{\theta_N}h$ are
$\bbeta$-Lipschitz uniformly in $N$ and $\xi$ (minima of
$\bbeta$-Lipschitz maps), and $\xi\mapsto\Phix^{\theta_N}h$ is
$C_K$-Lipschitz uniformly in $N$ and in $\norminfty{h}\le R$ by the
asymptotic Lipschitz hypothesis, the limit operators sharing the
same Lipschitz bounds (the shifts $\lambda_{xj}$ are
$\kappa_x$-Lipschitz on $K$).  A finite $\delta$-net in $(h,\xi)$
together with the pointwise convergence just proved and the two
Lipschitz bounds gives
$\sup_{\xi\in K}\sup_{\norminfty{h}\le R}\norminfty{\Phix^{\theta_N}h
- \PSi_{\kappa_x\xi}h} \le O(\delta) + o(1)$, hence
\eqref{eq:param-opconv} as $\delta\to 0$.

\emph{Step 3: contraction closing.}  Both $\Phix^{\theta_N}$ and
$\PSi_{\kappa_x\xi}$ are $\bbeta$-contractions on $C(\X)$ (minima of
$\bbeta$-Lipschitz maps), so the fixed-point identities and the
contraction inequality give
$\norminfty{g_N^{(\xi)} - h_{\kappa_x\xi}} \le
(1-\bbeta)^{-1}\norminfty{\Phix^{\theta_N}h_{\kappa_x\xi}
- \PSi_{\kappa_x\xi}h_{\kappa_x\xi}} \to 0$, uniformly in $\xi\in K$
by \eqref{eq:param-opconv}; this is \eqref{eq:param-gap}.  The
comparison with $V^{\star,\theta_c}$ is the Hadamard directional
sensitivity of the value at the tie, $V^{\star,\theta_N} -
V^{\star,\theta_c} = \veps_N\, DV^\star(\theta_c;\xi) +
o(\veps_N)$ in $C(\X)$, and subtracting the two limits gives
the claim.
\end{proof}

\begin{proposition}[Smooth parametric perturbations under two-sided
transversality]
\label{prop:param-smooth}
Let the data be smooth in $\theta$ and $u$: the partial derivatives
$\partial_\theta r^{\theta}(y,u)$ and $\partial_\theta
p^{\theta}_{yz}(u)$ are jointly continuous in $(\theta,u)$ and
uniformly bounded in $y,z\in\X$, $u\in\Ux_y$, and the maps are
locally $C^2$ in $u$, with Hessians
$\nabla^2_u r^{\theta}(y,u)$, $\nabla^2_u p^{\theta}_{yz}(u)$
continuous in $(\theta,u)$; all actions in $\mathsf{A}^\star_y$ are
interior to $\Ux_y$; and the sampling densities are continuous and
strictly positive near each $u\in\mathsf{A}_y^\star$
(Assumption~\ref{ass:G}
with $\eta_{yj} = 0$).  Suppose that at $\theta_c$ the optimal set of
state $x$ consists of two points $u^\star_{x,1}$, $u^\star_{x,2}$
with positive definite Hessians
$\nabla^2_u\Delta_x^{\theta_c}(u^\star_{x,j})$, $j=1,2$; that every
other state has a unique optimal action $u^\star_y$ with positive
definite Hessian; that the minimal-deficiency gap is nondegenerate,
$\Delta_{x2}(\theta)-\Delta_{x1}(\theta) = \kappa_x(\theta-\theta_c)
+ o(|\theta-\theta_c|)$, $\kappa_x>0$; and that the deficiencies have
a uniform positive gap off the critical region,
$\Delta_y^{\theta}(u)\ge g_\delta$ whenever
$\Delta_y^{\theta_c}(u)\ge\delta$, uniformly in $\theta$ near
$\theta_c$.  Define the \emph{raw parameter perturbation}
\begin{equation}
b_y(u) \;=\; \partial_\theta r^{\theta}(y,u)\big|_{\theta_c}
\;+\; \bbeta \sum_z \partial_\theta p^{\theta}_{yz}(u)\big|_{\theta_c}
\, V^{\star,\theta_c}(z),
\label{eq:raw-perturb}
\end{equation}
and let $d_\xi = DV^\star(\theta_c;\xi)$ be the unique fixed point of
the \emph{directional Bellman operator} on $C(\X)$,
\begin{equation}
d \;\longmapsto\; \max_{u\in\mathsf{A}_y^\star}\bigl\{ \xi\, b_y(u)
+ \bbeta P^{\theta_c}_{y,u}\, d \bigr\}, \qquad y\in\X,
\label{eq:dir-bellman}
\end{equation}
where $\mathsf{A}_y^\star$ is the optimal action set at $\theta_c$
(a $\bbeta$-contraction, the maximum of $\bbeta$-Lipschitz maps).
Then the standing common-exponent assumption $\alpha_y\equiv\alpha$
of Proposition~\ref{prop:param} (critical rate $\veps_N =
N^{-1/\alpha}$) makes the hypotheses of Proposition~\ref{prop:param}
hold: \eqref{eq:param-deficiency} holds with the branch offsets
\begin{equation}
\lambda_{xj}(\xi) \;=\; d_\xi(x) - \bigl[ \xi\, b_x(u^\star_{x,j})
+ \bbeta P_{x,j}\, d_\xi \bigr],
\label{eq:branch-offset}
\end{equation}
which satisfy $\lambda_{xj}(\xi)\ge 0$ and $\min_j\lambda_{xj}(\xi)
= 0$ and reduce, by the nondegenerate gap, to
$\lambda_{x1}(\xi) = (-\kappa_x\xi)_+$, $\lambda_{x2}(\xi) =
(\kappa_x\xi)_+$.  The common slope $\kappa_x$ is an explicit
two-sided transversality assumption, not implied by $C^1$ data alone.
The global control and the asymptotic Lipschitz
condition hold uniformly on compact subsets of $\R$; the branchwise
uniformity and \eqref{eq:param-kernel} hold; only the branches of
state $x$ switch, the others remaining unique and locally stable at
$\theta_N$; and the value map is Hadamard directionally
differentiable at $\theta_c$ with directional derivative $d_\xi$.
\end{proposition}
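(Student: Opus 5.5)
The plan is to first establish Hadamard directional differentiability of the value map with derivative $d_\xi$, and then read off every hypothesis of Proposition~\ref{prop:param} from a first-order expansion of the deficiency around $\theta_c$.

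\textbf{Step 1: directional derivative of the value.} Set $V^\theta=V^{\star,\theta}$ and $\theta=\theta_c+s\xi'$ with $s\downarrow0$, $\xi'\to\xi$. Write $V^\theta=\max_u Q^\theta_{V^\theta}$ and subtract $V^{\theta_c}$. The $C^1$ dependence of $r,p$ on $\theta$ gives, uniformly in $u$,
\[
Q^\theta_{V^\theta}(y,u)-V^{\theta_c}(y)=-\Delta^{\theta_c}_y(u)+s\xi' b_y(u)+\bbeta P^{\theta_c}_{y,u}(V^\theta-V^{\theta_c})+o(s)+O(s\|V^\theta-V^{\theta_c}\|).
\]
Because $r,p$ are Lipschitz in $\theta$ and $\mathcal T^\theta$ is a $\bbeta$-contraction, $\|V^\theta-V^{\theta_c}\|=O(s)$. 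Put $D_s=(V^\theta-V^{\theta_c})/s$, so that $D_s=\max_u\{-s^{-1}\Delta^{\theta_c}_y(u)+\xi'b_y(u)+\bbeta P^{\theta_c}_{y,u}D_s\}+o(1)$.

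For the lower bound, restrict to $u\in\mathsf A^\star_y$. For the upper bound, note that actions with $\Delta^{\theta_c}\ge\delta$ are penalised by $-\delta/s\to-\infty$. On the remaining set, continuity of $b$ and $P$ together with compactness shows the near-optimal maximisers converge to $\mathsf A^\star_y$. Hence $D_s$ is an approximate fixed point of the directional Bellman operator \eqref{eq:dir-bellman}, and the contraction estimate gives $D_s\to d_\xi$. This is the Danskin/envelope argument for the fixed point.

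\textbf{Step 2: the deficiency expansion \eqref{eq:param-deficiency}.} Expand
\[
\Delta^{\theta_N}_y(u)-\Delta^{\theta_c}_y(u)=\bigl(V^{\theta_N}-V^{\theta_c}\bigr)(y)-\bigl(Q^{\theta_N}_{V^{\theta_N}}-Q^{\theta_c}_{V^{\theta_c}}\bigr)(y,u).
\]
By Step 1 and the $C^1$ bounds this equals
\[
\veps_N\bigl[d_\xi(y)-\xi b_y(u)-\bbeta P^{\theta_c}_{y,u}d_\xi\bigr]+o(\veps_N),
\]
uniformly in $u$ and in $\xi\in K$. Uniformity in $\xi$ follows from Lipschitz dependence of $d_\xi$ on $\xi$ (fixed points of operators that are Lipschitz in $\xi$) together with a compactness/net argument.

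On $\{\Delta^{\theta_c}_y\le L\veps_N\}$ the positive definite Hessians force $|u-\pi_y(u)|=O(\veps_N^{1/2})$. Continuity of $b_y$ and $P$ then lets me replace $u$ by $\pi_y(u)$, which yields the offsets \eqref{eq:branch-offset}.

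\textbf{Step 3: identifying the offsets.} At $y\ne x$ the optimal set is the singleton $\{u^\star_y\}$, so the max in \eqref{eq:dir-bellman} is attained there and $\lambda_y\equiv0$. At $x$, nonnegativity and $\min_j\lambda_{xj}=0$ come directly from the max formula. The gap hypothesis identifies the difference: $\lambda_{x2}-\lambda_{x1}$ equals $\lim\veps_N^{-1}(\Delta_{x2}-\Delta_{x1})(\theta_N)=\kappa_x\xi$. Together with the min-zero property this gives the positive parts.

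\textbf{Step 4: remaining hypotheses.}
\begin{itemize}
\item Global control is the same uniform expansion taken over all of $\Ux_y$, giving $M_{K,y}=\sup|d|+\sup|b|\,|\xi|+\bbeta\sup|d|+1$.
\item The asymptotic Lipschitz condition follows from the mean value theorem in $\theta$ applied to $r$, $p$ and $V$.
\item \eqref{eq:param-kernel} holds because $|p^{\theta_N}-p^{\theta_c}|=O(\veps_N)$ uniformly.
\end{itemize}

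Branchwise stability is the main obstacle, since it requires the implicit function theorem uniformly in the parameter. Apply the IFT to $\nabla_u\Delta^\theta_y=0$ near each $u^\star$, using positive definite Hessians and $C^2$ dependence. This gives perturbed local minimisers $u^\star_{y,j}(\theta)$ and Hessians that vary continuously in $\theta$. Consequently:
\begin{itemize}
\item the exponents stay at $\alpha=d/2$, with $r=2$ and $\eta=0$;
\item the constants $C_{yj}(\theta)$, being continuous functions of the Hessian and the density, converge;
\item the uniform gap $g_\delta$ off the critical region rules out new optimal branches.
\end{itemize}
Each perturbed local minimum has value $O(\veps_N)$; at $x$ the winning branch attains zero, so only the branches of $x$ switch.

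The delicate point I expect is justifying the IFT expansions under the $C^1$-in-$\theta$, $C^2$-in-$u$ hypotheses. The local Weibull constants need the Hessian of $\Delta^\theta$ to be continuous in $\theta$, and this involves $V^\theta$. That is fine, because $\nabla_u^2\Delta^\theta$ depends on $V^\theta$ only linearly through $\nabla^2_u p$, and $V^\theta$ is continuous in $\theta$.
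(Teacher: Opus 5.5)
Your proposal is correct and takes essentially the same route as the paper. The ingredients match: a Danskin-type localization plus the contraction estimate for the directional derivative; the uniform first-order expansion of $\Delta^{\theta_N}_y$, with the offsets read off from the Bellman equation for $d_\xi$; the gap slope from the two-sided transversality hypothesis; Lipschitz continuity of $V^{\star,\theta}$ for the global control; and stability of nondegenerate minima for the branchwise conditions.

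The differences are minor. In Step 1 you treat the difference quotient $D_s$ as an approximate fixed point of the directional operator, whereas the paper shows that $V^{\star,\theta_c}+\varepsilon d_\xi$ is an $o(\varepsilon)$-approximate fixed point of $T^{\theta_c+\varepsilon\xi}$. For branch stability, you should invoke the version of the implicit function theorem that needs only continuity in $\theta$, because $V^{\star,\theta}$ has a kink at $\theta_c$ and no mixed $\theta$–$u$ derivatives are assumed. That version is exactly what the paper's appeal to stability of nondegenerate local minima amounts to.
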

\begin{proof}
\emph{The directional derivative.}  The directional Bellman operator
\eqref{eq:dir-bellman} is a $\bbeta$-contraction on $C(\X)$, so
$d_\xi$ is well defined, and
$\xi\mapsto d_\xi$ is Lipschitz on compact sets.  It is the Hadamard
directional derivative of
the value map: the positive definite Hessians and the uniform gap
exclude maximizers of $u\mapsto Q^{\theta}_{V^{\star,\theta_c}}(y,u)$
beyond a shrinking neighborhood of the $\theta_c$-optimal set
$\mathsf{A}_y^\star$, and
there the $C^1$ expansion of the smooth data together with the
continuity of $b_y$ give, uniformly in $y$ and in $\xi$ on compact
sets,
\begin{equation}
T^{\theta_c+\varepsilon\xi}\bigl(V^{\star,\theta_c}
+ \varepsilon d_\xi\bigr)(y)
\;=\; V^{\star,\theta_c}(y) + \varepsilon\, d_\xi(y) + o(\varepsilon),
\label{eq:dir-approx}
\end{equation}
with $T^{\theta}V := \max_u Q^{\theta}_V(\cdot,u)$ the perturbed
Bellman operator; thus $V^{\star,\theta_c}+\varepsilon d_\xi$ is an
$o(\varepsilon)$-approximate fixed point of the $\bbeta$-contraction
$T^{\theta_c+\varepsilon\xi}$, and the comparison principle gives
$\norminfty{V^{\star,\theta_c+\varepsilon\xi} - V^{\star,\theta_c}
- \varepsilon d_\xi} = o(\varepsilon)$, uniformly in $\xi$ on compact
sets.

\emph{The branch offsets.}  On the shrinking critical regions
$\{\Delta_y^{\theta_c}\le L\veps_N\}$ of all states, expanding
the smooth data about $\theta_c$ and substituting the directional
expansion of the value map gives, uniformly in $y$, in $u$, and in
$\xi\in K$,
\begin{equation}
\Delta_y^{\theta_N}(u) \;=\; \Delta_y^{\theta_c}(u)
\;+\; \veps_N\bigl[ d_\xi(y) - \xi\, b_y(u)
- \bbeta P^{\theta_c}_{y,u}\, d_\xi \bigr] + o(\veps_N),
\label{eq:param-expanded}
\end{equation}
which is \eqref{eq:param-deficiency}: the bracket is continuous in
$u$; at $y\ne x$ the unique optimum collapses the directions,
$u\to u_y^\star$ on the shrinking region, and the directional Bellman
equation at $u_y^\star$ makes the limit offset vanish,
$d_\xi(y) = \xi b_y(u_y^\star) + \bbeta P_{y,u_y^\star}d_\xi$
($\lambda_{y,\cdot}(\xi)\equiv 0$); at $y=x$ it converges to
$\lambda_{x,\pi_x(u)}(\xi)$ of \eqref{eq:branch-offset}.  By the Bellman equation for $d_\xi$ at $x$, $\lambda_{xj}(\xi)\ge 0$
with $\min_j\lambda_{xj}(\xi) = 0$; and differentiating the gap
identity $\Delta_{x2}(\theta)-\Delta_{x1}(\theta) =
Q^{\theta}_{u^\star_{x,1}}(V^{\star,\theta}) -
Q^{\theta}_{u^\star_{x,2}}(V^{\star,\theta})$ at $\theta_c$ (using
the directional expansion of the value map) gives the gap slope
$\kappa_x\xi = \xi[b_x(u^\star_{x,1}) - b_x(u^\star_{x,2})] +
\bbeta(P_{x,1}-P_{x,2})d_\xi = -(\lambda_{x1}(\xi)-\lambda_{x2}(\xi))$
hence the stated $\lambda_{x1}(\xi) = (-\kappa_x\xi)_+$,
$\lambda_{x2}(\xi) = (\kappa_x\xi)_+$.  The global control follows
from the local Lipschitz continuity of $V^{\star,\theta}$ in $\theta$
(a consequence of the comparison principle applied to the
$\bbeta$-contractions $T^\theta$), which gives
$\sup_u|\Delta_y^\theta(u)-\Delta_y^{\theta'}(u)| \le
C|\theta-\theta'|$ and hence $\veps_N^{-1}\Delta_y^{\theta_N}(u)
\le \veps_N^{-1}\Delta_y^{\theta_c}(u) + M_{K,y}$ with $M_{K,y}
= C\sup_{\xi\in K}|\xi|$; the asymptotic Lipschitz condition in
$\xi$ is the same uniform bound.  The branchwise uniformity holds because the quadratic lower bound
$\Delta_y^{\theta_c}(u) \ge c\,\dist(u,\mathsf{A}_y^\star)^2$
(uniform in $y$, $\theta$ near $\theta_c$, from the positive definite
Hessians) makes $\Delta_y^{\theta_c}(u)\le L\veps_N$ imply
$\dist(u,\mathsf{A}_y^\star) = O(\veps_N^{1/2})$: standard
stability of nondegenerate local minima yields locally unique
branches $u_{y,j}(\theta)\to u^\star_{y,j}$; continuity and positive
definiteness of the Hessians then imply $r_{yj}(\theta)\equiv 2$ and
$C_{yj}(\theta)\to C_{yj}$, the exponents
$\alpha_{yj} = m_{yj}/r_{yj}$ constant near $\theta_c$;
\eqref{eq:param-kernel} is
the uniform bound $|\veps_N\xi|\sup|\partial_\theta p|\to 0$.
The nondegenerate Hessians persist near $\theta_c$, and the uniform gap
keeps the $\theta_N$-optimal set of $x$ in the shrinking tube around
$\{u^\star_{x,1},u^\star_{x,2}\}$, so no new branches appear and the
other optimal actions stay unique and locally stable at $\theta_N$.
\end{proof}

\section{Propagation of the Slowest Critical Scale under Heterogeneous Exponents}
\label{sec:hetero}

Theorem~\ref{thm:main} rests on the common-exponent condition
\eqref{eq:common-alpha}.  This section shows the condition is not
merely technical: when the critical exponents differ across states, the
value gap is governed at the slowest global scale by the states with
the \emph{largest} local exponents, propagated through the discounted
reachability of the limit policy; which states dominate depends on
reachability, not on a single global worst-case exponent.  Finer
layers require additional transition-order assumptions.  (The two uses of ``critical'' are opposite:
the \emph{strata} minimize $\alpha_{xj}$ and host the rescaled minimum, the \emph{class} maximizing
$\alpha_x$.)  Beyond the critical scale the
picture is subtler (Remark~\ref{rem:layering}).

\subsection{The Critical Class and the Resolvent}
\label{sec:hetero:critical}

Throughout this section every state is assumed to have a \emph{unique}
optimal action $u_x^\star$ (so that the tangential operator is affine at
each state, cf.\ Corollary~\ref{cor:resolvent}), and write
\begin{equation}
a_x \;=\; \frac{1}{\alpha_x},
\qquad
a_0 \;=\; \min_{x\in\X} a_x
\;=\; \frac{1}{\alpha_{\max}},
\qquad
\calC \;=\; \bigl\{ x\in\X : a_x = a_0 \bigr\},
\label{eq:critical-class}
\end{equation}
where $\alpha_{\max} = \max_x \alpha_x$: the states in $\calC$ have the
\emph{largest} local exponents, their value gaps decaying at the
slowest rate $N^{-a_0}$ and carrying the leading order.  Indeed the
local contribution of state $x$ alone is $c_x N^{-a_x} = c_x
N^{-1/\alpha_x}$, with $c_x = C_x^{-1/\alpha_x}\Gamma(1+1/\alpha_x)$;
for $x\notin\calC$ this is $o(N^{-a_0})$, but the gap at $x$ is not
the local gap alone: the slow states feed $x$ through the dynamics.

\begin{theorem}[Propagation of the critical class under heterogeneous exponents]
\label{thm:hetero}
Assume the hypotheses of Proposition~\ref{prop:contraction},
Assumption~\ref{ass:G}, and that the optimal action $u_x^\star$ is
unique for every state.  Let $P^\star_{xy} = p_{xy}(u_x^\star)$ be the
transition matrix of the limit policy and
$R^\star = (I-\bbeta P^\star)^{-1} = \sum_{t\ge0}\bbeta^t (P^\star)^t$
its resolvent.  Define the \emph{local saturation}
\begin{equation}
b_N(x) \;=\; \E\Bigl[ \min_{1\le i\le N} \Delta_x(U_{x,i}) \Bigr],
\qquad
b_N(x) \;=\; c_x N^{-a_x} + o\bigl(N^{-a_x}\bigr),
\label{eq:local-expansion}
\end{equation}
with the same constant $c_x$ (the volume law of
Proposition~\ref{prop:volume} and the uniform integrability of
Lemma~\ref{lem:ui-min}).  Then:
\begin{enumerate}
\item[(i)] \emph{(Saturation at the critical rate.)}  With
$c^{(0)}_x = c_x \mathbf 1\{a_x = a_0\}$,
\begin{equation}
N^{a_0}\bigl(V^\star(x) - V_N(x)\bigr)
\;\longrightarrow\;
\sum_{y\in\calC} R^\star_{xy}\, c_y
\;=\; \bigl(R^\star c^{(0)}\bigr)(x)
\qquad\text{componentwise.}
\label{eq:hetero-limit}
\end{equation}
The $x$-component of the limit,
$\sum_{t=0}^{\infty}\bbeta^t\sum_{y\in\calC}(P^{\star\,t})_{xy}\,c_y$,
is positive if and only if at least one critical state $y\in\calC$ is
reachable from $x$ under the limit policy with positive probability.
\item[(ii)] \emph{(Resolvent identity.)}  The value gap satisfies,
componentwise,
\begin{equation}
V^\star(x) - V_N(x)
\;=\;
\sum_y R^\star_{xy}\, b_N(y) + o\bigl(N^{-a_0}\bigr):
\label{eq:resolvent-identity}
\end{equation}
the local saturation law \eqref{eq:local-expansion} propagates through
the discounted transition chain of the limit policy.
\end{enumerate}
\end{theorem}

Part~(ii) implies part~(i) (multiply by $N^{a_0}$ and use
\eqref{eq:local-expansion}).

\begin{proof}
\emph{Step 1: the normalized gap equation and the a priori bound.}
Repeating the derivation of Theorem~\ref{thm:main} (linearity of
$u\mapsto Q_V(x,u)$ in $V$), the value gap $w_N = V^\star - V_N$
satisfies exactly $w_N(x) = \E \min_{1\le i\le N}\{\Delta_x(U_{x,i})
+ \bbeta P_{x,U_{x,i}} w_N\}$.
Normalize at the critical scale $N^{-a_0}$: $h_N = N^{a_0} w_N$
satisfies $h_N = \Phix_N^{\mathrm{cr}} h_N$, where
$(\Phix_N^{\mathrm{cr}} h)(x) = \E \min_{1\le i\le N}\{N^{a_0}\Delta_x(U_{x,i})
+ \bbeta P_{x,U_{x,i}} h\}$.  Since $\min_i\{a_i + b_i\} \le \min_i a_i
+ \max_i b_i$, $\norminfty{h_N} \le \bbeta\norminfty{h_N} +
N^{a_0}\norminfty{b_N}$; and $N^{a_0} b_N(x) \to c_x \mathbf 1\{x\in\calC\}$
by \eqref{eq:local-expansion}, so $\sup_N \norminfty{h_N} < \infty$.

\emph{Step 2: the limit operator.}  For bounded $h$, uniformly in
$\norminfty{h}\le R$ for each $R$, $\Phix_N^{\mathrm{cr}} h \to
\PSi^{\mathrm{cr}} h$, where $(\PSi^{\mathrm{cr}} h)(x) = c_x
\mathbf 1\{x\in\calC\} + \bbeta \sum_y P^\star_{xy} h(y)$.  For
$x\in\calC$ the normalization is the common-exponent one ($N^{a_0} =
N^{1/\alpha_x}$).  The tail calculation of Lemma~\ref{lem:uniform}
applies verbatim to the rescaled minima $\min_{1\le i\le N}\{N^{a_0}\Delta_x(U_{x,i})
+ \bbeta P_{x,U_{x,i}}h\}$, whose limit is
$\E\min_{(z,u)\in\Pi_x}\{z + \bbeta P_{x,u}h\}$.  Because the optimal
action is unique, the marks collapse to
$u_x^\star$ (proof of Corollary~\ref{cor:resolvent}), the minimum
factorizes, and
this limit equals $\E[\min_{(z,u)\in\Pi_x} z] + \bbeta P^\star_x h =
c_x + \bbeta P^\star_x h$, with $c_x =
C_x^{-1/\alpha_x}\Gamma(1+1/\alpha_x)$ the Weibull mean
\eqref{eq:orderstat-moment}.  For $x\notin\calC$, $a_x > a_0$ \eqref{eq:critical-class}, and for
every $\varepsilon>0$, $\Prob\{\min_{1\le i\le N} N^{a_0}\Delta_x(U_{x,i})
> \varepsilon\} = (1 - F_x(\varepsilon N^{-a_0}))^N \le \exp\{-N
F_x(\varepsilon N^{-a_0})\} \to 0$, since $N F_x(\varepsilon N^{-a_0})
\sim C_x \varepsilon^{\alpha_x} N^{1-a_0\alpha_x}\to\infty$
($a_0\alpha_x < 1$).  Fix $h$ with $\norminfty{h}\le R$ and let
$M_N = \min_{1\le i\le N}\{N^{a_0}\Delta_x(U_{x,i}) + \bbeta
P_{x,U_{x,i}}h\}$.  By this tail bound, $M_N \le \bbeta R + o_P(1)$,
while any candidate with $N^{a_0}\Delta_x(U_{x,i})\ge L_N$ (slowly
increasing $L_N\to\infty$) has rescaled term at least $L_N - \bbeta
R\to\infty$ and is not the minimizer with probability tending to one.
The winning candidate thus satisfies $\Delta_x \le L_N N^{-a_0}\to 0$,
hence $u\to u_x^\star$ (unique optimum, continuous $\Delta_x$), and
$P_{x,u}h \to P^\star_x h$ uniformly.  Therefore
$M_N \to \bbeta P^\star_x h$ in probability, and convergence of
expectations follows from the two-sided bound
$-\bbeta R\le M_N \le \bbeta R + N^{a_0}\min_i\Delta_x(U_{x,i})$,
whose upper-envelope expectation is
$\bbeta R + N^{a_0}b_N(x) \to \bbeta R$ by \eqref{eq:local-expansion}
($a_x > a_0$).  Both families are equi-$\bbeta$-Lipschitz in $h$,
which upgrades the pointwise convergence to the uniformity in
$\norminfty{h}\le R$.

\emph{Step 3: the fixed point.}  $\Phix_N^{\mathrm{cr}}$ and
$\PSi^{\mathrm{cr}}$ are $\bbeta$-contractions uniformly in $N$ (the
proof of Lemma~\ref{lem:contraction} applies verbatim), and
$\PSi^{\mathrm{cr}}$ is affine: $\PSi^{\mathrm{cr}}h = c^{(0)} +
\bbeta P^\star h$, whose fixed point is $h = R^\star c^{(0)}$.  From $h_N = \Phix_N^{\mathrm{cr}}h_N$
and
$\norminfty{h_N - h}\le (1-\bbeta)^{-1}\norminfty{\Phix_N^{\mathrm{cr}}h - \PSi^{\mathrm{cr}}h}$,
Step~2 gives $\norminfty{h_N - h}\to 0$, which is (i); the positivity
criterion is reachability of $\calC$ under $P^\star$ (an entry
$R^\star_{xy}$ is positive exactly when $y$ is reachable from $x$).

\emph{Step 4: the resolvent identity.}  Combining the exact gap
equation with the local saturation, the residual
\begin{equation}
\begin{split}
\veps_N(x)
&= \E\min_i\Bigl\{\Delta_x(U_i) + \bbeta P_{U_i}w_N\Bigr\}
- b_N(x) - \bbeta P^\star_x w_N\\
&= N^{-a_0}\Bigl[ \Phix_N^{\mathrm{cr}}h_N
- N^{a_0}b_N - \bbeta P^\star h_N \Bigr](x)
\end{split}
\label{eq:hetero-eps}
\end{equation}
satisfies $\veps_N(x) = N^{-a_0} o(1)$: the bracket converges to zero
because $N^{a_0}b_N\to c^{(0)}$ (Step~1),
$\Phix_N^{\mathrm{cr}}h_N - \PSi^{\mathrm{cr}}h_N \to 0$ (Steps~2-3
with $\norminfty{h_N}$ bounded), and
$\PSi^{\mathrm{cr}}h_N - c^{(0)} - \bbeta P^\star h_N = 0$ by
definition.  Applying $R^\star = (I-\bbeta P^\star)^{-1}$ to
$(I-\bbeta P^\star)w_N = b_N + \veps_N$ gives (ii).
\end{proof}

\begin{remark}[Near-optimal leaks and layered rates]
\label{rem:layering}
Theorems (i)-(ii) give the leading order at the critical scale
$N^{-a_0}$ only; a \emph{layered} expansion beyond it needs
hypotheses on the transitions.  Whether a state outside the critical
class saturates at its own local rate depends on the vanishing order
of the transitions of its near-optimal actions: order $\kappa_{xyj}$
near $u_{xj}^\star$ leaks candidates of state $x$ into $y$ with
probability of order $\rho^{\kappa_{xyj}}$, shifting the exponent of
$x$ from $a_x$ to $a_y + \kappa_{xyj}/(m_{xj}+\eta_{xj})$
(the two-state example of the Supplement evaluates this shift,
$2\to 3/4$; a heuristic, not a theorem).  Action-independent
transitions ($\kappa = \infty$) give exact layering.
\end{remark}

\begin{remark}[The resolvent identity]
The resolvent identity \eqref{eq:resolvent-identity} is the
heterogeneous analogue of the affine tangential identity
\eqref{eq:Psi-affine}: at the critical scale $N^{-a_0}$ the
near-optimal layer of a noncritical state shrinks toward $u_x^\star$
(its saturation $c_xN^{-a_x}$ invisible at that scale), so the
winning candidates of \emph{every} state act through $P^\star$, each
state contributing only its critical mass $c_x\mathbf 1\{x\in\calC\}$.
With multiple optimal actions, the corresponding heterogeneous
nonlinear tangential system requires additional analysis and is left
open; Section~6 describes its common-scale analogue.
\end{remark}

\section{Numerical Experiments}
\label{sec:numerics}

This section validates the main results experimentally.  All
Monte Carlo estimates use $K\ge 10^4$ independent replications, pools
drawn i.i.d.\ from the stated measures, and fixed points computed by
value iteration to machine precision.  Empirical scaling exponents
(Example~1) come from a log-log regression, with
$t_{0.975,5}$-scaled delta-method intervals from the fitted slopes;
tables, figures, and derivations are in the Supplement.

\subsection{Example 1: Static Geometric Selection}
\label{sec:numerics:ex1}

A single state with action space $\Ux =
[-\tfrac14,\tfrac14]\cup[\tfrac34,\tfrac54]$ sampled uniformly with
unit density (both optima are interior to the support, as the
full-support hypothesis of Proposition~\ref{prop:contraction}
requires).  This static example (no discounting) illustrates
Proposition~\ref{prop:orderstats}.  The
$Q$-function has two separated optimal actions $u_1 = 0$ and $u_2 =
1$, with deficiency $\Delta(u) = \min\{c_1 |u-u_1|^{r_1},\; c_2
|u-u_2|^{r_2}\}$; for a scalar stratum the volume constant is $C_j =
2c_j^{-1/r_j}$ and the critical exponent $\alpha_j = 1/r_j$.
\emph{Equal exponents} $(r_1,r_2)=(2,2)$, $c_1 = 4c_2$: $\alpha =
1/2$, $C_1 = 1$, $C_2 = 2$, and the branch-2 selection frequency
converges to $C_2/(C_1+C_2) = 2/3$ (Proposition~\ref{prop:orderstats}).
\emph{Different exponents} $(r_1,r_2)=(2,4)$, $c_1 = c_2 = 1$:
$\alpha_1 = 1/2 > \alpha_2 = 1/4$, and branch 2 (flatter deficiency,
larger near-optimal volume) is selected with probability one, the gap
scaling with $\alpha = 1/4$.  Empirical frequencies converge to these
limits, the gap exponents confirming $\alpha = 1/r$.

\subsection{Example 2: Same Geometry, Different Transitions, Dynamic
Reversal}
\label{sec:numerics:ex2}

A two-state MDP with $\bbeta = 0.95$.  The model data are
$r_x(u) = -\Delta_x(u)$ and kernels $p_{xy}(u)$, so $V^\star \equiv
0$; the value deficiency of $u$ is then the $\Delta_x(u)$ below.  State 1 has two optimal actions
$u_{1,1}^\star, u_{1,2}^\star$ with \emph{identical} local geometry
(deficiency $|u-u_{1,j}^\star|$ for both, so $C_1 = C_2 = 2$ and
$\alpha_1 = \alpha_2 = 1$) that differ only in their transitions: the
kernel is branchwise constant on the two supports, $p_{12}(u) = 1$,
$p_{11}(u) = 0$ on $[-\tfrac14,\tfrac14]$ ($u_{1,1}^\star = 0$ moves
to state 2 w.p.~1) and $p_{12}(u) = 0$, $p_{11}(u) = 1$
on $[\tfrac34,\tfrac54]$ ($u_{1,2}^\star = 1$), hence
continuous in a tubular neighborhood of each optimum.  State 2 has a
unique optimum with the very flat deficiency $0.01\,|u - u_2^\star|$,
so $C_2' = 200$ and the fixed-policy gap is $h^\star_2 = 0.1$ ($c_2 =
\E Z_2 = 1/C_2'$, the Weibull mean); its kernel self-loops,
$p_{22}(u) = 1$.  Sampling is uniform of unit density on
$[-\tfrac14,\tfrac14]\cup[\tfrac34,\tfrac54]$ at state~1
($\Delta_1(u) = \min(|u|,|u-1|)$) and on $[-\tfrac12,\tfrac12]$ at
state~2 ($\Delta_2(u) = 0.01\,|u|$).

The state-1 geometry is perfectly symmetric, so the volume ratios
predict the $1/2$-$1/2$ selection of
Proposition~\ref{prop:orderstats}; the dynamic theory of
Theorem~\ref{thm:tiebreak} predicts otherwise, the fixed point
$h^\star$ of the nonlinear tangential operator satisfying
\begin{equation}
h^\star_1 = \E \min\{ Z_1 + \bbeta h^\star_2,\; Z_2 + \bbeta h^\star_1 \},
\label{eq:ex2-fixed}
\end{equation}
where $Z_1, Z_2$ are independent Weibull($\alpha=1$) variables with
tail $e^{-2z}$; solving \eqref{eq:ex2-fixed} with $h^\star_2 = 0.1$
gives $h^\star = (0.4716, 0.100)$, and the selection probability of
$u_{1,1}^\star$ is, by \eqref{eq:selection-prob}, $p_{1,1} =
\Prob\{Z_1 - Z_2 < \bbeta(h^\star_1 - h^\star_2)\} = 1 -
\tfrac12 e^{-2\cdot 0.95\cdot 0.3716} \approx 0.753$.
The pool thus \emph{reverses} the static prediction, the finite-$N$
frequencies converging to the dynamic limit $0.753$ rather than $1/2$.
The finite-$N$ gap is strictly below \emph{both} resolvent
predictions \eqref{eq:resolvent}: the pool beats either fixed policy
by switching (nonlinearity at the leading order).

\begin{figure}[ht]
\centering
\includegraphics[width=366.9pt]{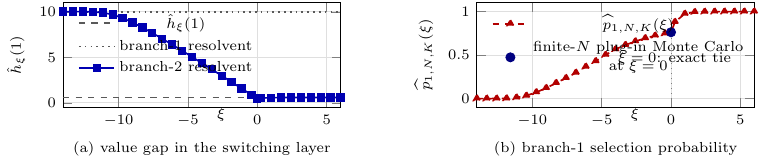}
\caption{Example~3: the critical switching layer of
Theorem~\ref{thm:switching}.  (a)~The rescaled state-1 gap
$\hat h_\xi(1)$ (blue, solid) interpolates between the branch
resolvents $10.0$ and $0.595$ (gray, dashed), dipping near the
prediction $h^\star_1 = 0.4716$ at $\xi = 0$.  (b)~The empirical selection probability
$\widehat p_{1,N,K}(\xi)$ (red, dashed) rises from $0$
to $1$, with $\widehat p_{1,N,K}(0) = 0.763$ against the prediction
$p_1(0) = 0.753$ ($K = 10^4$ pools of $N = 10^4$ candidates).}
\label{fig:ex3}
\end{figure}

\subsection{Example 3: The Critical Switching Layer}
\label{sec:numerics:ex3}
\label{sec:ex3}

A final experiment tests the switching-layer predictions of
Theorem~\ref{thm:switching} on the two-state model of Example~2.
The branches have identical geometry ($C_1 = C_2 = 2$,
$\alpha = 1$) but different transitions: branch~1 exits to state~2
with the constant gap $h^\star_2 = 0.1$, while branch~2
self-loops at state~1; the exact-tie predictions are $h^\star_1 =
0.4716$ and $p_{1,1} = 0.753$ (Theorem~\ref{thm:switching}(e)).

The perturbed family \eqref{eq:switching-perturb} is solved with
deficiency shifts $\veps_N\lambda_j(\xi)$
($\lambda_1 = (-\xi)_+$, $\lambda_2 = (\xi)_+$, exact tie at $\xi=0$).
With $\veps_N = N^{-1}$, $K = 10^4$ pools of $N = 10^4$
candidates are drawn, the rescaled branch minima $\hat Z_{x,j} =
\veps_N^{-1}\min_i\Delta_j(U_{x,i})$ extracted, and the
switching fixed point \eqref{eq:Psi-xi} iterated to machine precision
at $29$ values of $\xi\in[-14,6]$ (plus the exact tie).

Figure~\ref{fig:ex3} shows the result.  Panel (a) displays the rescaled
state-1 gap $\hat h_\xi(1)$ across the layer, interpolating between the
resolvents and dipping below \emph{both} at the exact tie: $\hat h_0(1) =
0.470 \pm 0.007$ vs.\ the prediction $h^\star_1 = 0.4716$.  The dynamic
selection at the tie strictly dominates each fixed branch policy.
Panel (b) confirms the selection transition: $\widehat p_{1,N,K}(0) =
0.763$ against the prediction $p_1(0) = 0.753$ ($2.3$ standard errors,
combining finite-$N$ and Monte Carlo effects).  The layer is asymmetric, sharp on the branch-1 side
($\widehat p_{1,N,K} \ge 0.998$ by $\xi = 2.4$) and broad on the
branch-2 side; part~(c) guarantees continuity and the
endpoints $p_1\to 0,1$; monotonicity is observed but proved
only under a shared transition operator at $x$.  A
finite-$N$ check (Supplement) gives
$\max_\xi|g_N^\xi(1)-h_\xi(1)| = 7.3, 1.5, 2.2 \times 10^{-3}$ at
$N=10^2,10^3,10^4$.

\section{Conclusion}
\label{sec:conclusion}

This paper develops an exact first-order asymptotic theory for the
value and selection rule of a discounted MDP drawing, at each state
visit, a random pool of $N$ candidate actions.  Under a
geometric regularity condition the analysis reduces the finite-pool
problem to a marked Poisson limit: the near-optimal candidates
converge to a point process carrying the transition kernels, and the
leading order of the value gap is the fixed point of a nonlinear
tangential Bellman operator, a stochastic generalization of the
resolvent.  The same
limit governs the asymptotic selection rule: for isolated competing
optimal actions the selector is genuinely dynamic at exact ties and
interpolates between branch regimes, admitting the Mecke
representation \eqref{eq:mecke} on general compact optimal sets; in the
unique-optimum heterogeneous case the critical class is propagated at
the slowest global scale through the discounted reachability of the
limit policy, finer layers requiring additional transition-order
assumptions.  Numerical
experiments confirm the predicted rates,
constants, and selection probabilities; future directions include a
fixed-point theory for the heterogeneous tangential system without the
common-exponent condition, boundary and corner cases excluded by the
regularity assumption, a lifetime-fixed pool (a random Bellman
operator), a layered theory of the vanishing orders of near-optimal
transitions, and importance-sampling designs.

\bibliographystyle{siamplain}
\bibliography{refs}

\end{document}


\title{Supplementary Materials for ``Poisson Tangent Limits and Critical
Policy Switching for Sampled Bellman Operators''
\thanks{Supplementary material to the paper of the same name.
Unprefixed numbers (Theorem~6.5, equation~(4.4), Assumption~3.1) refer
to the main paper; statements and equations introduced here carry the
prefix S: Lemma~S.1, Lemma~S.2, and equations~(S.1)-(S.15).}}
\author{Ming-Zhe Dai\thanks{School of Automation, Central South University,
Changsha 410083, China (\email{mingzhe\_dai@csu.edu.cn}).}
\and
Chengxi Zhang\thanks{School of IoT Engineering, Jiangnan University, Wuxi,
Jiangsu 214000, China (\email{dongfangxy@163.com}). Corresponding author.}}
\maketitle

\begin{abstract}
This document collects the material supporting the main paper: the
complete numerical output of the three experiments of Section~8 (all
tables and the figures of Examples~1 and 2), the implementation details
for reproducibility, and the routine derivations of the closed-form
predictions used in the experiments and in Remark~7.2.  It also
contains the proofs of the supporting results relied on by the main
text: the topological details of the marked Poisson convergence of
Proposition~4.1, the uniform-integrability estimates used in
Proposition~4.2 and in the proof of Lemma~5.2 (Lemmas~S.1 and S.2), and
the tubular-coordinate identities behind the local volume law (3.7)
and its marked refinement (4.1).
\end{abstract}

\section{Extended Numerical Details}
\label{app:numerics}

This section reports the complete numerical output of the experiments
of Section~8 of the main paper.  All values are Monte Carlo estimates
with $K$ independent replications as stated; candidate pools are drawn
i.i.d.\ from the stated sampling measures, by three independent
implementations (one per example) of the generative models described
in Section~8 of the main paper, with implementation details in
Section~\ref{app:impl} below.

\subsection{Example 1: Static Geometric Selection}
\label{app:ex1}

The model is that of Section~8.1 of the main paper: a single state,
$\bbeta = 0$, $\Ux = [-\tfrac14,\tfrac14]\cup[\tfrac34,\tfrac54]$
(the sampling support), deficiency
$\Delta(u) = \min\{c_1|u-u_1|^{r_1}, c_2|u-u_2|^{r_2}\}$ at the two
optima $u_1 = 0$, $u_2 = 1$, sampled from the density-one uniform
measure on $[-1/4,1/4]\cup[3/4,5/4]$.  For a single scalar stratum
the local volume constant is the elementary integral
$C_j = \int \one\{ c_j|w|^{r_j} \le 1\}\,dw = 2 c_j^{-1/r_j}$, the
critical exponent is $\alpha_j = 1/r_j$, and the value gap constant is
$C^{-1/\alpha}\Gamma(1+1/\alpha)$ with $C = C_1 + C_2$ by
Proposition~4.2 of the main paper.

\emph{Equal exponents} $(r_1,r_2) = (2,2)$, $c_1 = 4$, $c_2 = 1$:
$C_1 = 2\cdot 4^{-1/2} = 1$, $C_2 = 2$, $C = 3$, $\alpha = 1/2$,
branch-2 frequency $C_2/C = 2/3$, and the rescaled gap satisfies
$N^2\,\E\min_i\Delta(U_i) \to C^{-1/\alpha}\Gamma(1+1/\alpha)
= 3^{-2}\Gamma(3) = 2/9$.

\emph{Different exponents} $(r_1,r_2) = (2,4)$, $c_1 = c_2 = 1$:
$\alpha_1 = 1/2$, $\alpha_2 = 1/4$ (critical), branch-2 frequency
tends to $1$, and $N^4\E\min_i\Delta(U_i) \to
C_2^{-4}\Gamma(5) = 24/16 = 3/2$.

Each of the seven pool sizes uses $K = 2\times10^4$ independent
replications (the Weibull QQ plot of Figure~\ref{app:fig:ex1-weibull}
uses $K = 10^5$).  Across the grid
$N\in\{10,30,10^2,3\cdot10^2,10^3,3\cdot10^3,10^4\}$:
for equal exponents the empirical branch-2 frequencies remain close to
$2/3$; for different exponents they approach $1$ monotonically from
below, from $0.923$ at $N = 10$ to $0.9999$ at $N = 10^4$.  The
rescaled gaps track their limits
closely: $N^2\E\min_i\Delta(U_i)$ rises from $0.171$ at $N = 10$
to $0.2179$ at $N = 10^4$ against the prediction $2/9 \approx
0.2222$, and $N^4\E\min_i\Delta(U_i)$ oscillates around the
prediction $3/2$ within $6\%$ for every $N \ge 3\cdot10^2$.

Figure~\ref{app:fig:ex1-branch} reports the empirical branch-2
selection frequencies against the two theoretical limits, and
Figure~\ref{app:fig:ex1-weibull} gives the Weibull QQ plot of the
rescaled minimal deficiency at $N = 10^4$ ($K = 10^5$ replications)
against the limit tail $e^{-Cz^\alpha} = e^{-3z^{1/2}}$ of
Proposition~4.2.  Table~\ref{app:tab:ex1} gives the values at $N = 10^4$: the
frequencies match $2/3$ and $1$ to three decimals, and the fitted
exponents $\widehat\alpha = 0.507 \pm 0.008$ and $\widehat\alpha =
0.261 \pm 0.010$ ($95\%$ $t_5$-intervals of the slope fits, with
$\mathrm{se} = 0.0031$ and $0.0039$) lie $2.3$ and $2.8$ standard
errors from the predictions $1/2$ and $1/4$: the first contains the
prediction within its interval, the second sits $1.1$ interval
half-widths from $1/4$ and outside it.  Refitting on the four largest
pool sizes $N \ge 300$ gives $\widehat\alpha = 0.4996$ and $0.2490$
with standard errors $0.0007$ and $0.0013$, within one standard error
of the predictions; the excess of the second full-grid fit is
therefore the finite-$N$ bias of the fitted slope at the small pool
sizes (Section~\ref{app:impl}).

\begin{figure}[ht]
\centering
\includegraphics[width=366.9pt]{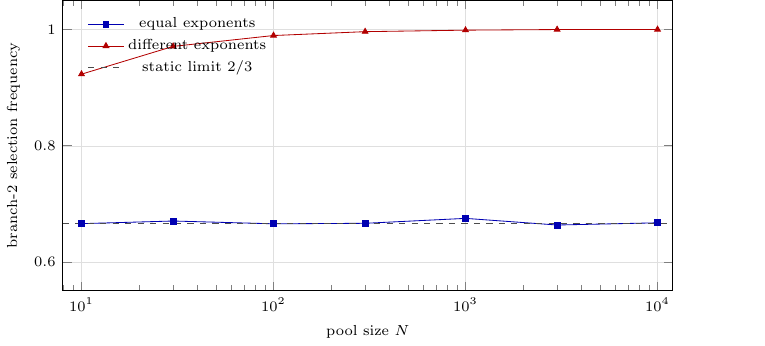}
\caption{Example~1: branch-2 selection frequencies versus the pool size
$N$ (logarithmic abscissa).  Squares: equal exponents
$(r_1,r_2)=(2,2)$, $c_1=4c_2$, converging to the static volume ratio
$C_2/(C_1+C_2) = 2/3$ (dashed); triangles: different exponents
$(r_1,r_2)=(2,4)$, converging to $1$.}
\label{app:fig:ex1-branch}
\end{figure}

\begin{figure}[ht]
\centering
\includegraphics[width=366.9pt]{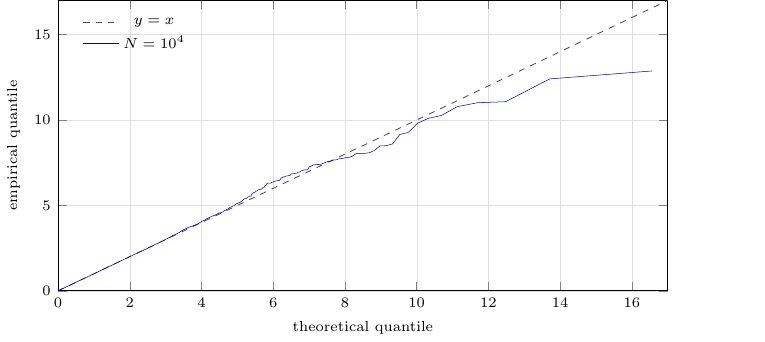}
\caption{Example~1, equal exponents: QQ plot of the rescaled minimal
deficiency $N^2\min_i\Delta(U_i)$ at $N = 10^4$ ($K = 10^5$
replications) against the Weibull law with tail
$e^{-Cz^\alpha}$, $C = 3$, $\alpha = 1/2$
(Proposition~4.2 of the main paper).}
\label{app:fig:ex1-weibull}
\end{figure}

\begin{table}[ht]
\centering
\caption{Example~1: empirical branch-2 selection frequencies at
$N = 10^4$ and value-gap scaling exponents fitted over the seven pool sizes.}
\input{tables_py/tab_ex1.tex}
\label{app:tab:ex1}
\end{table}

\subsection{Example 2: Same Geometry, Different Transitions, Dynamic
Reversal}
\label{app:ex2}

The model is that of Section~8.2: a two-state MDP with $\bbeta =
0.95$; state~1 has two optimal actions with identical geometry
($C_1 = C_2 = 2$, $\alpha = 1$), action $u_{1,1}^\star$ transferring
to state~2 with probability one and $u_{1,2}^\star$ self-looping; state
2 has the unique optimum with deficiency $0.01|u-u_2^\star|$, so
$C_2' = 200$ and $h^\star_2 = (C_2')^{-1}\Gamma(2)/(1-\bbeta) = 0.1$.
Sampling is uniform of unit density: on
$[-\tfrac14,\tfrac14]\cup[\tfrac34,\tfrac54]$ at state~1, with
$u_{1,1}^\star = 0$ and $u_{1,2}^\star = 1$ interior points of the
support and deficiency $\Delta_1(u) = \min(|u|,|u-1|)$, and on
$[-\tfrac12,\tfrac12]$ at state~2 with $u_2^\star = 0$ and
$\Delta_2(u) = 0.01|u|$.

The tangential fixed point at state 1 solves the one-dimensional
equation
\begin{equation}
h^\star_1 \;=\; \E\min\{Z_1 + \bbeta h^\star_2,\; Z_2 + \bbeta h^\star_1\},
\label{app:eq:ex2-fixed}
\end{equation}
with $Z_1, Z_2$ independent with tail $e^{-2z}$ (scale $C_j^{-1}
= 1/2$).  Iterating \eqref{app:eq:ex2-fixed} from $0$ converges to
$h^\star = (0.4716, 0.100)$.  The selection probability of
$u_{1,1}^\star$ is, by equation (6.7) of the main paper,
\begin{equation}
p_{1,1} \;=\; \Prob\{Z_1 - Z_2 < \bbeta(h^\star_1 - h^\star_2)\}
\;=\; 1 - \tfrac12 e^{-2\bbeta(h^\star_1 - h^\star_2)}
\;\approx\; 0.753,
\label{app:eq:ex2-p}
\end{equation}
using the identity $\Prob\{Z_1 - Z_2 < c\} = 1 -
\frac12 e^{-2c}$ for $c \ge 0$ and independent rate-2 exponentials
(derived in the Routine Derivations section; the evaluation at
$h^\star_1 - h^\star_2 = 0.3716$ gives $1 - \tfrac12 e^{-0.7060}
\approx 0.753$).

Figure~\ref{app:fig:ex2} reports the finite-$N$ selection frequency of
action $u_{1,1}^\star$: it converges to the dynamic limit $0.753$, far
from the static volume prediction $1/2$.  Each grid point uses
$K = 2\times10^4$ independent replications.  Across the whole grid
$N \in \{10,\dots,10^4\}$ the empirical frequency stays within four
standard errors of the limit (the largest deviation, $3.2$ standard
errors, occurs at the smallest pool size $N = 10$), while the static
prediction $1/2$ is $80$ standard errors away.  The rescaled gaps track the fixed point
\eqref{app:eq:ex2-fixed}: state 1 within $1\%$, state 2 within $2\%$,
for $N \ge 10^2$ ($N\cdot(V^\star(1) - V_N(1))$ lies between $0.4673$
and $0.4707$ against $h^\star_1 = 0.4716$, and
$N\cdot(V^\star(2) - V_N(2))$ stays within $2\%$ of $h^\star_2 = 0.1$,
reaching $0.10013$ at $N = 10^4$).  Table~\ref{app:tab:ex2-grid} gives the
complete grid.  Table~\ref{app:tab:ex2}
compares the leading gap with the linear resolvent predictions of
Corollary~5.5 of the main paper for the two fixed policies, $0.595$
(action $u_{1,1}^\star$, transferring to state 2) and $10.0$ (action
$u_{1,2}^\star$, self-loop): the finite-$N$ gap ($0.4677$ at
$N = 10^4$) is strictly below \emph{both}, i.e.\ the pool beats either
fixed policy by switching.

\begin{figure}[ht]
\centering
\includegraphics[width=366.9pt]{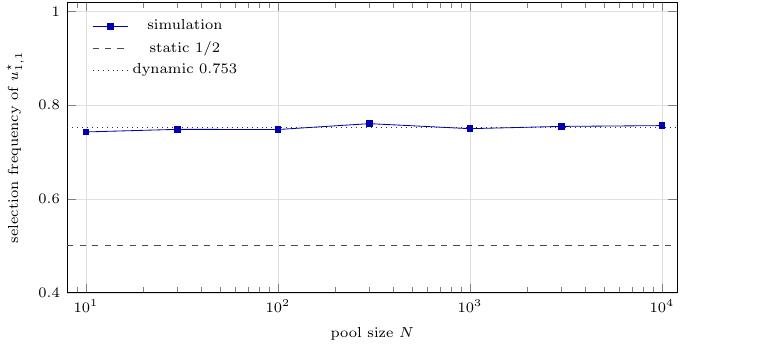}
\caption{Example~2: selection frequency of action $u_{1,1}^\star$
(transferring to the easy state 2) versus $N$ (squares), together with
the static volume prediction $1/2$ (dashed) and the dynamic limit
$p_{1,1} \approx 0.753$ of Theorem~6.5 (dotted).  The finite-$N$
frequencies approach the dynamic limit, demonstrating the dynamic
reversal induced by the transition kernels through the nonlinear fixed
point \eqref{app:eq:ex2-fixed}.}
\label{app:fig:ex2}
\end{figure}

\begin{table}[ht]
\centering
\caption{Example~2: static volume weight, nonlinear tangential fixed
point $h^\star$, finite-$N$ branch frequency at $N = 10^4$, and
resolvent predictions.}
\input{tables_py/tab_ex2.tex}
\label{app:tab:ex2}
\end{table}

\begin{table}[ht]
\centering
\caption{Example~2, complete grid.}
\input{tables_py/tab_ex2_grid.tex}
\label{app:tab:ex2-grid}
\end{table}

\subsection{Example 3: The Critical Switching Layer}
\label{app:ex3}

The experiment of Section~8.3 draws $K = 10^4$ pools of $N = 10^4$
candidates, extracts the two rescaled branch minima
$\hat Z_{x,j} = \veps_N^{-1}\min_i\Delta_j(U_{x,i})$ with
$\veps_N = N^{-1}$, and iterates the switching fixed point
$h_\xi = \PSi_\xi h_\xi$, with $\PSi_\xi$ as in (6.23) of the main
paper, for $29$ values of
$\xi\in[-14,6]$ plus the exact tie $\xi = 0$, all on the \emph{same}
sample of branch minima (so the Monte Carlo error is common across the
layer).  The resulting empirical selection probability
$\widehat p_{1,N,K}(\xi)$ is a continuous
sigmoid with $\widehat p_{1,N,K}(0) = 0.763$ against the prediction
$p_1(0) = 0.753$ of
Theorem~6.5 and $\widehat p_{1,N,K} \ge 0.998$ by $\xi = 2.4$; the
rescaled state-1
gap $\hat h_\xi(1)$ interpolates between the branch resolvents $10.0$
and $0.595$,
dipping below \emph{both} at the exact tie ($\hat h_0(1) = 0.470 \pm
0.007$
against $h^\star_1 = 0.4716$).  The layer is asymmetric: sharp on the
branch-1 side, broad on the branch-2 side where the self-loop branch is
expensive.  The observed
monotonicity of $\widehat p_{1,N,K}$ is a property of this example,
asserted in
general only when the branches share a transition operator at $x$
(Theorem~6.9(c) of the main paper).

With $g_N^\xi(1) = N(V^\star - V_N^\xi)(1)$ the rescaled state-1
gap, the finite-$N$ behavior is reported in
Table~\ref{app:tab:ex3-finiten} at pool sizes
$N = 10^2, 10^3, 10^4$ ($K = 10^5$ pools, seed
\texttt{20260809}, standard errors in parentheses) against the exact
fixed point $h_\xi(1)$: the bias
$\max_\xi |g_N^\xi(1) - h_\xi(1)|$ is $7.3\times10^{-3}$,
$1.5\times10^{-3}$, and $2.2\times10^{-3}$ at $N = 10^2, 10^3,
10^4$, respectively; the latter two are at the Monte Carlo resolution
(about $2\times10^{-3}$, with standard errors about
$1.5\times10^{-3}$).  The data are consistent with polynomial
convergence to the limit but do not resolve the exact rate.

\begin{table}[ht]
\centering
\caption{Example~3: finite-$N$ convergence of the switching layer.}
\footnotesize
\input{tables_py/tab_ex3_finiten.tex}
\label{app:tab:ex3-finiten}
\end{table}

\section{Implementation Details}
\label{app:impl}

All simulations were implemented in Python using NumPy (PCG64,
NumPy $\ge 1.17$).  The model definitions, parameter values, pool
sizes, replication counts, numerical tolerances, and random seed are
specified below so that the reported results can be independently
replicated within Monte Carlo error.  All runs use the PCG64 generator
with seed 20260809.  No external datasets are used; all numerical
results are generated from the fully specified models.

Candidate pools are drawn in chunks (4096 replications at a time in
Example~1) to bound memory use; the replication count is $K \ge 10^4$
for every reported value ($K = 10^5$ for the Weibull QQ plot of
Example~1 and for the finite-$N$ switching check of Example~3,
Table~\ref{app:tab:ex3-finiten}); pool sizes run from $N = 10$ to $10^4$ in Examples~1 and
2; Example~3 uses $N = 10^4$, plus the convergence check
$N \in \{10^2,10^3,10^4\}$ of Table~\ref{app:tab:ex3-finiten}.

Value fixed points are computed by value iteration to machine
precision (tolerance $10^{-12}$ on the sup-norm change); for the
dynamic rules of Examples~2 and 3 the fixed point of the tangential
equation is iterated likewise.  The value-gap scaling
exponents are read off from a log-log regression of $\log_{10}$
gap on $\log_{10} N$ over the seven pool sizes $N \in \{10, 30,
10^2, 3\cdot10^2, 10^3, 3\cdot10^3, 10^4\}$: the exponent is
$\widehat\alpha = -1/\widehat s$ for the fitted slope $\widehat s$,
and the reported spreads are the $t_{0.975,5}$-scaled delta-method
diagnostic intervals obtained from the fitted slopes,
$\widehat\alpha \pm t_{0.975,5}\,\mathrm{se}(\widehat\alpha)$ with
$t_{0.975,5}\approx 2.571$.

\section{Routine Derivations of Closed-Form Predictions}
\label{app:derivations}

\subsection{The exponential difference identity}
For independent rate-2 exponentials $Z_1, Z_2$ (tail $e^{-2z}$) and
$c \ge 0$, the convolution
$\Prob\{Z_1 - Z_2 < c\} = \int_0^\infty 2e^{-2z_2}(1 - e^{-2(z_2+c)})\,dz_2
= 1 - \frac12 e^{-2c}$ is the evaluation used in
\eqref{app:eq:ex2-p}.

\subsection{The near-optimal leak of Remark~7.2}

Remark~7.2 of the main paper states the mechanism; the full example
follows.  Two states with a scalar action space,
$\mu_x = \mathrm{Unif}[-1,1]$, $\bbeta = 0.9$, $V^\star = 0$, and
\begin{equation}
\Delta_1(u) = |u|^{1/2},\quad P_{11}(u) = 1,\quad
\Delta_2(u) = u^2,\quad P_{21}(u) = |u|^{1/4},\quad
P_{22}(u) = 1 - |u|^{1/4},
\end{equation}
with rewards $r_x(u) = -\Delta_x(u)$, satisfying the hypotheses of
Theorem~7.1 (unique interior optima, homogeneous deficiencies); the
local exponents are $a_1 = 1/2$, $a_2 = 2$, the limit policy has
$P^\star = I$, so state~2 cannot reach the critical state~1, and yet
does \emph{not} saturate at order $N^{-2}$.

Writing $D_N = \min_i|U_i|$ and $w_N = V^\star - V_N$, the gap
equation at state 2 is solved, to leading order, by the balance
\begin{equation}
w_2 \;=\;
\frac{\E D_N^2 + \bbeta \E D_N^{1/4}\, w_1}
{1 - \bbeta + \bbeta \E D_N^{1/4}}
\;\sim\;
\frac{\bbeta\,\Gamma(5/4)\,\Gamma(3/2)}
{(1-\bbeta)^2}\; N^{-3/4},
\label{app:eq:leak-rate}
\end{equation}
because $\E D_N^s \sim \Gamma(s+1)N^{-s}$: the term
$u^2 + \bbeta|u|^{1/4}(w_1 - w_2)$ is strictly increasing in $|u|$
since $w_1 > w_2$, so the winner is $D_N$ at distance scale $N^{-1}$,
where the local saturation $\E D_N^2 \sim 2N^{-2}$ is dwarfed by the
leak $\bbeta\E D_N^{1/4}w_1$ (with $w_1 \sim
\Gamma(3/2)(1-\bbeta)^{-1}N^{-1/2}$ from state 1's gap equation), the
denominator contributing a further $(1-\bbeta)^{-1}$.
The transition vanishing order $\kappa_{21} = 1/4$ with
$m_{21}+\eta_{21} = 1$ shifts the state-$2$ exponent from its local
value $2$ to $e_1 + 1/4 = 3/4$.

The exact evaluation at $N = 10^3, 10^4, 10^5$
($\E D_N^s = \Gamma(s+1)\Gamma(N+1)/\Gamma(N+s+1)$, Beta moments)
gives $N^{3/4}w_2 = 29.5, 39.8, 49.6$, approaching
$\bbeta\Gamma(5/4)\Gamma(3/2)(1-\bbeta)^{-2} \approx 72.3$.

\section{Topological Details of the Marked Poisson Convergence}

This section makes precise the topological statements of
Proposition~4.1 and the truncation step of Lemma~5.2.

\subsection{The Space of Locally Finite Counting Measures}
\label{app:topology:space}

Let $S = [0,\infty)\times\mathsf{A}_x^\star$ with the product of the
Euclidean topology on $[0,\infty)$ and the relative topology on the
compact metric space $\mathsf{A}_x^\star$.  A Borel measure $\xi$ on
$S$ is \emph{locally finite} if $\xi(K)<\infty$ for every compact
$K\subseteq S$.  Since every compact subset of $[0,\infty)$ is
contained in $[0,L]$ for some $L<\infty$ and $\mathsf{A}_x^\star$ is
compact, a measure on $S$ is locally finite iff it is finite on every
set $[0,L]\times \mathsf{A}_x^\star$, $L<\infty$; the endpoint $z=0$
must be included here, since a locally finite measure on
$(0,\infty)\times\mathsf{A}_x^\star$ need not be finite on
$(0,L]\times\mathsf{A}_x^\star$.  Let
$\mathbf{M}(S)$ denote the set of
locally finite counting measures on $S$ (integer-valued, with finite
mass on compact sets), and equip it with the \emph{vague topology}: the
coarsest topology under which $\xi\mapsto\int_S f\,d\xi$ is continuous
for every continuous $f$ with compact support.  The space
$\mathbf{M}(S)$ is Polish, and convergence in the vague topology is
equivalent to convergence of $\xi_N(K)\to\xi(K)$ for every relatively
compact Borel set $K$ whose boundary satisfies
$\xi(\partial K)=0$ (portmanteau).  The intensity measures
$\Lambda_x$ of (4.4) are locally finite: indeed
$\Lambda_x([0,L]\times B) = L^{\alpha_x}\sigma_x(B)<\infty$, and
$\Lambda_x(\{0\}\times\mathsf{A}_x^\star) = 0$, so the limit process
has no atom at $z=0$.

\subsection{Kallenberg's Criterion}
\label{app:topology:kallenberg}

A sequence of point processes $\Pi_N$ on $S$ converges in distribution
to a Poisson process $\Pi$ with intensity $\Lambda$ if and only if, for
every continuous nonnegative $f$ with compact support,
\begin{equation}
\E\exp\Bigl\{-\langle f,\Pi_N\rangle\Bigr\}
\;\longrightarrow\;
\exp\Bigl\{-\int_S\bigl(1-e^{-f(s)}\bigr)\Lambda(ds)\Bigr\}.
\label{eq:kallenberg}
\end{equation}
Since $\Lambda$ is diffuse in the $z$-coordinate, the limit process
is $\Prob$-almost surely simple with distinct $z$-coordinates; this is
used in the continuous mapping argument of Lemma~5.2 of the main
paper.

\subsection{The Laplace Functional Calculation}
\label{app:topology:laplace}

The calculation (4.6)-(4.7) requires
two justifications beyond the pointwise convergence
\eqref{eq:local-volume-A} of this Supplement.  First, the support of $f$ is contained in
$[0,L]\times\mathsf{A}_x^\star$, and the change of variables
$z = N^{1/\alpha_x}t$ maps the region $\{z\le L\}$ to
$t\le LN^{-1/\alpha_x}$, on which the marked volume law (4.1),
equivalently its localized form (S.14), applies
uniformly (the contribution of $t>LN^{-1/\alpha_x}$ is zero because
$f = 0$ there).  Second, the Laplace functional of a finite Poisson
process on $[0,L]\times\mathsf{A}_x^\star$ is
$\exp\{-\int(1-e^{-f})d\Lambda\}$ by independence of the points, and
the passage from the finite truncated process to the full process
$\Pi_x$ uses the fact that $\Lambda_x$ is finite on the truncation and
$\sigma$-finite on $S$, the standard extension theorem for Poisson
processes on $\sigma$-finite spaces giving uniqueness of the limiting
process.

\subsection{The Nearest-Point Projection and Marked Continuity}
\label{app:topology:projection}

For each stratum $\calM_{xj}$ and its tubular neighborhood, the
nearest-point projection $\pi_{xj}(u)$ is the unique point of
$\calM_{xj}$ with $\dist(u,\calM_{xj}) = \norm{u - \pi_{xj}(u)}$, well
defined and continuous on the tube (the $C^2$ compact submanifold has a
uniform reach, so the projection is single-valued inside the tube).
The global projection $\pi_x: \Ux_x\to\mathsf{A}_x^\star$ is defined by
concatenating the $\pi_{xj}$ and is measurable.  The sample minimum
$D_{x,1:N} = \min_{1\le i\le N}\Delta_x(U_{x,i})$ converges to $0$ in
probability: for every small fixed $t>0$ the volume law (4.1) gives
$F_x(t) := \Prob(\Delta_x(U)\le t) > 0$, so
$\Prob(D_{x,1:N}>t) = (1-F_x(t))^N \to 0$ as $N\to\infty$.  Write
$U_N^\star$ for the action chosen by the sampled model at $x$: with
$w_N := V^\star - V_N$, it minimizes the completed cost
$u\mapsto\Delta_x(u) + \bbeta P_{x,u}(V^\star - V_N) = \Delta_x(u) +
\bbeta P_{x,u}w_N$ over the sampled pool.  Every candidate $U_i$ in
the pool satisfies
\begin{equation*}
\Delta_x(U_N^\star) \;\le\; \Delta_x(U_i) + \bbeta P_{x,U_i}w_N
- \bbeta P_{x,U_N^\star}w_N \;\le\; D_{x,1:N} + 2\bbeta\norminfty{w_N},
\end{equation*}
the first inequality from the minimality of the completed cost at
$U_N^\star$ and the second from the bounds on $P_{x,\cdot}w_N$; take
$U_i$ attaining $D_{x,1:N}$.  Since $\norminfty{w_N}\to 0$ by
Proposition~2.1 of the main paper, the chosen candidate lies in
$\{\Delta_x\le t'\}$ with probability tending to one; taking $t'$
small enough that $\{\Delta_x\le t'\}$ is covered by the tubes (the
tubular-coordinate section of this Supplement), candidates outside
the tubes have probability tending to zero, and the choice of
$\pi_x$ outside the tubes is therefore immaterial for every limit in
the paper.

The marked process carries the \emph{transition operator} $P_{x,U}$
as its mark, a point of the compact metric space of operators on
$\R^n$.  By uniform continuity of $u\mapsto p_{xy}(u)$ on the compact
space $\Ux_x$, the mark of a candidate converges to the mark of its
projection with error $o(1)$ uniformly on the critical region
$\{\Delta_x\le t\}$; this is the quantitative content of the passage
from $\Pi_{x,N}$ to the kernel-marked process in the proof of
Lemma~5.2 of the main paper.

\section{Uniform Integrability and Truncation}

This section collects the uniform integrability argument used in
Proposition~4.2 (moment convergence (4.10)) and in
Lemma~5.2 (convergence of expectations (5.14)).

\begin{lemmasupp}[Uniform integrability of rescaled order statistics]
\label{lem:ui}
Assume Assumption~3.1.  For each $x\in\X$ and each fixed
$j\ge 1$, the family of random variables
\begin{equation}
\bigl\{ N^{1/\alpha_x} D_{x,j:N} \bigr\}_{N\ge j}
\label{eq:ui-family}
\end{equation}
is uniformly integrable.
\end{lemmasupp}

\begin{proof}
By Proposition~3.2 and the two-sided bound
(3.13), there exist $0<c<C<\infty$ and $t_0>0$ such that
$c\,t^{\alpha_x}\le F_x(t) := \Prob\{\Delta_x(U)\le t\}\le
C\,t^{\alpha_x}$ for $0<t\le t_0$.  Fix $x_0\in(0,t_0)$ and set
$z_N := N^{1/\alpha_x}x_0$.  For
$z\le z_N$ (so that $t = zN^{-1/\alpha_x}\le x_0\le t_0$),
\begin{equation}
\Prob\bigl(N^{1/\alpha_x}D_{x,1:N} > z\bigr)
\;=\;
\left[1 - \Prob(\Delta_x(U)\le zN^{-1/\alpha_x})\right]^N
\;\le\;
e^{-c z^{\alpha_x}}.
\label{eq:ui-tail}
\end{equation}
For $z>z_N$, use $D_{x,1:N}>zN^{-1/\alpha_x} \ge x_0 \Rightarrow$ all
candidates have $\Delta_x(U)>x_0$:
\begin{equation}
\Prob\bigl(N^{1/\alpha_x}D_{x,1:N} > z\bigr)
\;\le\;
\Prob(D_{x,1:N} > x_0)
\;=\;
\left[1 - \Prob(\Delta_x(U)\le x_0)\right]^N
\;\le\;
e^{-c' N},
\label{eq:ui-tail2}
\end{equation}
where $c' = \Prob(\Delta_x(U)\le x_0) > 0$ (the volume law is
positive at $x_0$).  Moreover $\Delta_x$ is bounded on the compact
space $\Ux_x$ (it is continuous), say $\Delta_x\le \Delta_{\max}$, so
$D_{x,1:N}\le\Delta_{\max}$ and the tail in
\eqref{eq:ui-tail2} vanishes for
$z > \bar z_N := N^{1/\alpha_x}\Delta_{\max}$.  Hence, for every
$a>0$, writing $Y_N := N^{1/\alpha_x}D_{x,1:N}$ and using the
layer-cake formula
$\E[Y_N\one\{Y_N>a\}] = a\Prob(Y_N>a) + \int_a^\infty
\Prob(Y_N>z)\,dz$,
\begin{align}
& \E\Bigl[ N^{1/\alpha_x}D_{x,1:N}
\, \one\bigl\{N^{1/\alpha_x}D_{x,1:N} > a\bigr\} \Bigr] \notag\\
&\le\;
a e^{-c a^{\alpha_x}}
+ \int_a^\infty e^{-c z^{\alpha_x}}\, dz
+ \int_{\max(a,z_N)}^{\bar z_N} e^{-c'' N}\, dz \notag\\
&\le\;
a e^{-c a^{\alpha_x}}
+ \int_a^\infty e^{-c z^{\alpha_x}}\, dz
+ \int_{\max(a,z_N)}^{\bar z_N}
e^{-c'' (z/\Delta_{\max})^{\alpha_x}}\, dz,
\label{eq:ui-integral}
\end{align}
where $c'' = c x_0^{\alpha_x}>0$ (the sandwich at $t=x_0$), the
endpoint term $a\Prob(Y_N>a)$ is bounded by $a e^{-c a^{\alpha_x}}$ for
$a\le z_N$ by \eqref{eq:ui-tail} and by $a e^{-c''(a/\Delta_{\max})^{\alpha_x}}$
for $z_N < a \le \bar z_N$ (then $\Prob(Y_N>a)\le
\Prob(Y_N>z_N)\le e^{-c''N}$), vanishing for $a>\bar z_N$, and the last
inequality in \eqref{eq:ui-integral} uses $z\le\bar z_N$.  All three terms on the right of
\eqref{eq:ui-integral} tend to $0$ as $a\to\infty$ uniformly in $N$
(the middle one by integrability of $e^{-c z^{\alpha_x}}$ on
$[a,\infty)$, the third by the same exponential integrability after the
change of variables $z\mapsto z/\Delta_{\max}$); this is the definition
of uniform integrability.

For each fixed $j\ge 2$ the minimum tail \eqref{eq:ui-tail} does not
control the $j$-th order statistic (order statistics move only
upward), so the intermediate region $z\le z_N$ needs the exact tail
\begin{equation}
\Prob(D_{x,j:N} > t)
\;=\;
\sum_{k=0}^{j-1} \binom{N}{k} F(t)^k \bigl(1-F(t)\bigr)^{N-k},
\qquad F(t) = \Prob(\Delta_x(U)\le t),
\label{eq:orderstat-tail}
\end{equation}
$\{D_{x,j:N}>t\}$ being exactly the event that fewer than $j$
candidates satisfy $\Delta_x(U)\le t$.  For $z\in[1,z_N]$, so that
$t = zN^{-1/\alpha_x}\le t_0$, the two-sided sandwich gives
$c z^{\alpha_x}/N \le F(t) \le C z^{\alpha_x}/N$; each term of
\eqref{eq:orderstat-tail} with $k\ge 1$ then satisfies, for
$N\ge 2(j-1)$ and a constant $K_j$ depending only on $j$,
\begin{equation}
\binom{N}{k} F(t)^k \bigl(1-F(t)\bigr)^{N-k}
\;\le\;
\frac{(C z^{\alpha_x})^k}{k!}\, e^{-c z^{\alpha_x}/2}
\;\le\;
K_j\, z^{\alpha_x(j-1)} e^{-c z^{\alpha_x}/2},
\label{eq:orderstat-middle}
\end{equation}
because $\binom{N}{k}\le N^k/k!$, $(1-F)^{N-k}\le e^{-F(N-k)}$ and
$F(t)\ge c z^{\alpha_x}/N$ ($N-k\ge N/2$); the $k=0$ term is at most
$e^{-c z^{\alpha_x}}$, and summing over $k$ yields, for
$1\le z\le z_N$ and all large $N$,
\begin{equation}
\Prob\bigl(N^{1/\alpha_x}D_{x,j:N} > z\bigr)
\;\le\;
K_j\, \bigl(1 + z^{\alpha_x(j-1)}\bigr)\, e^{-c z^{\alpha_x}/2}.
\label{eq:orderstat-intermediate}
\end{equation}
For $z>z_N$ the event $D_{x,j:N} > zN^{-1/\alpha_x}\ge x_0$ forces
fewer than $j$ candidates below $x_0$: a Binomial($N,p_0$) tail with
$p_0 = \Prob(\Delta_x(U)\le x_0)>0$, at most
$e^{-c' N}\,\operatorname{poly}(N)$, uniformly in $z\le\bar z_N$, and
$D_{x,j:N}\le\Delta_{\max}$ as for the minimum.  The layer-cake
identity applied to $Y_N := N^{1/\alpha_x}D_{x,j:N}$ now splits the
integral exactly as above, with \eqref{eq:orderstat-intermediate}
replacing \eqref{eq:ui-tail} in the intermediate stretch and the
endpoint term bounded by the same integrals: all pieces tend to $0$ as
$a\to\infty$, uniformly in $N$ (the intermediate one by
$\int_1^\infty (1+z^{\alpha_x(j-1)}) e^{-c z^{\alpha_x}/2}\,dz<\infty$,
the tail one by
$N^{1/\alpha_x}\,\operatorname{poly}(N)\,e^{-c' N}\to 0$).
\end{proof}

\begin{lemmasupp}[Uniform integrability with bounded perturbations]
\label{lem:ui2}
Let $(\zeta_N)$ be uniformly integrable and let $(\xi_N)$ be a family
with $\sup_N|\xi_N - \zeta_N|\le R<\infty$ almost surely.  Then
$(\xi_N)$ is uniformly integrable.
\end{lemmasupp}

\begin{proof}
For $a>R$, on the event $\{|\xi_N|>a\}$,
$|\zeta_N|\ge |\xi_N| - R > a - R$, so
\begin{align}
\E\bigl[|\xi_N|\,\one\{|\xi_N|>a\}\bigr]
&\le\;
\E\bigl[|\zeta_N|\,\one\{|\zeta_N|>a-R\}\bigr]
+ R\,\Prob(|\xi_N|>a) \notag\\
&\le\;
\E\bigl[|\zeta_N|\,\one\{|\zeta_N|>a-R\}\bigr]
+ R\,\Prob(|\zeta_N|>a-R),
\label{eq:ui2}
\end{align}
and both terms tend to zero as $a\to\infty$ uniformly in $N$ by the
uniform integrability of $(\zeta_N)$ and Markov's inequality.
\end{proof}

In the proof of Lemma~5.2 of the main paper the bounded perturbation
is $\xi_N = m_N(h)$, $\zeta_N = m_N$, with $|m_N(h) - m_N|\le\bbeta R$;
Lemma~\ref{lem:ui2} supplies the uniform integrability to pass from
distributional convergence to convergence of expectations in (5.14).

\section{Tubular Coordinates and the Local Volume Law}
\label{app:tubular}

This section records the tubular-coordinate calculation underlying
Proposition~3.2 of the main paper and the marked volume identity
(4.1) of the main paper.

\subsection{Tubular Neighborhoods of Compact Submanifolds}
\label{app:tubular:setup}

Let $\calM\subseteq\R^d$ be a compact $C^2$ submanifold of dimension
$k$ without boundary.  By the tubular neighborhood theorem there exist
$\rho_0>0$ and a $C^1$ diffeomorphism
$\Theta:\{(z,v): z\in\calM,\ v\in N_z\calM,\ \norm{v}<\rho_0\}\longrightarrow U_{\rho_0}(\calM)$
onto the tubular neighborhood $U_{\rho_0}(\calM)=\{u:\dist(u,\calM)<\rho_0\}$, where $N_z\calM$ is the normal space.  The map is
$u = \Theta(z,v) = z + v + o(\norm{v})$; the Jacobian satisfies
$\det D\Theta(z,v) = 1 + O(\norm{v})$ uniformly in $(z,v)$, so the
Euclidean volume element is
$du = (1+O(\norm{v}))\, dv\, d\calH^k(z)$, where $\calH^k$ is the
$k$-dimensional Hausdorff measure on $\calM$; the radius $\rho_0$, the
implicit constants, and the bounds below are uniform over $z\in\calM$
by compactness.

\subsection{Change of Variables}
\label{app:tubular:change}

Fix a stratum $\calM_{xj}$ with codimension $m = d - k$.  By the
separation in Assumption~3.1 the tubular neighborhoods
$U_{\rho_0}(\calM_{x\ell})$ are pairwise disjoint, and since
$\Delta_x$ is positive and continuous off the strata, $t_0$ is chosen
small enough that the level set $\{\Delta_x\le t_0\}$ is
contained in their union.  For $t\in(0,t_0)$, restricted to
$U_{\rho_0}(\calM_{xj})$, write
$u = \Theta_{xj}(z,v)$, $v = \rho\theta$ with
$\rho = \norm{v}\in[0,\rho_0)$ and $\theta$ a unit normal vector.  The
deficiency satisfies, by (3.4) of the main paper,
$\Delta_x(\Theta_{xj}(z,\rho\theta)) = \rho^{r_{xj}} q_{xj,z}(\theta) + o(\rho^{r_{xj}})$
uniformly in $(z,\theta)$, and the density satisfies
(3.5) of the main paper.  The rescaled change of variables
$w = t^{-1/r_{xj}} v$, $v = t^{1/r_{xj}} w$,
maps the tube $\{\norm{v}<\rho_0\}$ onto
$\{\norm{w}<t^{-1/r_{xj}}\rho_0\}$, which contains the ball
$\{\norm{w}\le c^{-1/r_{xj}}\}$ of (3.13) of the main paper for all $t$ small.
Under this rescaling, for any Borel $B\subseteq\calM_{xj}$ whose
boundary has zero $\sigma_{xj}$-measure,
\begin{align}
& t^{-\alpha_{xj}}\,
\mu_x\Bigl\{u\in\Theta_{xj}(\text{tube}): \Delta_x(u)\le t,\;
\pi_x(u)\in B\Bigr\} \notag\\
&\qquad \longrightarrow
\int_B \int_{N_z\calM_{xj}}
\one\bigl\{q_{xj,z}(w)\le 1\bigr\}
\, g_{xj}\!\left(z,\tfrac{w}{\norm{w}}\right)
\, \norm{w}^{\eta_{xj}}\, dw\, d\calH^k(z),
\label{eq:local-volume-A}
\end{align}
which is the stratumwise form of the marked identity (4.1); taking
$B=\calM_{xj}$ recovers (3.7).  The proof is identical to that of
Proposition~3.2, the limit
justified by dominated convergence (uniformity of the remainders;
finiteness of the limiting integral, checked there).  For the fixed
finite collection of continuity sets used below, the convergence in
\eqref{eq:local-volume-A} holds simultaneously.

\subsection{Localized Uniformity}
\label{app:tubular:uniformity}

For the pointwise convergence of the Laplace functionals
(4.6) of the main paper, the following form of dominated convergence
is used.
Fix $L<\infty$ and partition $[0,L]$ into intervals $I_k$ and
$\mathsf{A}_x^\star$ into Borel pieces $B_\ell$ whose
$\sigma_x$-boundaries vanish.  Applying \eqref{eq:local-volume-A} on
each cell gives, for every such partition,
\begin{equation}
\sup_{k,\ell}\;
\left| N\mu_x\bigl\{u : N^{1/\alpha_x}\Delta_x(u)\in I_k,\;
\pi_x(u)\in B_\ell\bigr\} - \Lambda_x(I_k\times B_\ell) \right|
\;\longrightarrow\; 0,
\qquad N\to\infty.
\label{eq:uniformity-A}
\end{equation}
For $f$ continuous with compact support in
$[0,L]\times\mathsf{A}_x^\star$, the integrand $1-e^{-f}$ is uniformly
continuous, so on a sufficiently fine partition it is constant on each
cell up to $\varepsilon$; replacing it by its cellwise values and
applying \eqref{eq:uniformity-A} cell by cell, the Riemann sums
$\sum_{k,\ell}(1-e^{-f})\Lambda_x(I_k\times B_\ell)$ converge to
$\int(1-e^{-f})\,d\Lambda_x$ as the mesh tends to zero, the dominated
convergence passage in (4.6) of the main paper.

%% file: tables_py/tab_ex1.tex
\begin{tabular}{lccccc}
\toprule
configuration & $\widehat p_N(10^4)$ & $p_\infty$ & $\widehat\alpha$ & $\alpha = 1/r$ \\
\midrule
equal exponents     & $0.6672 \pm 0.0033$ & $2/3$ & $0.507 \pm 0.008$ & $1/2$ \\
different exponents & $0.9999 \pm 0.0001$ & $1$ & $0.261 \pm 0.010$ & $1/4$ \\
\bottomrule
\end{tabular}

%% file: tables_py/tab_ex2.tex
\begin{tabular}{lcl}
\toprule
quantity & value & \\
\midrule
static volume weight $C_2/(C_1+C_2)$
  & $1/2$ & \\
finite-$N$ frequency $\widehat p_N$ at $10^4$ (se)
  & $0.7566$ & $(0.0030)$ \\
dynamic limit $p_{1,1}$ (Theorem 6.5) & $0.753$ & \\
fixed point $h^\star$ & $(0.4716, 0.100)$ & \\
empirical leading gap $N\,\mathrm{gap}_1$ at $10^4$
  & $0.4677$ & \\
resolvent, ``exit to state 2'' policy (Corollary~5.5)
  & $0.595$ & \\
resolvent, ``self-loop'' policy (Corollary~5.5)
  & $10.0$ & \\
\bottomrule
\end{tabular}

%% file: tables_py/tab_ex2_grid.tex
\begin{tabular}{lcccc}
\toprule
$N$ & frequency $\widehat p_N$ & (se) & $N\,\mathrm{gap}_1$ & $N\,\mathrm{gap}_2$ \\
\midrule
10 & 0.7431 & (0.0031) & 0.4321 & 0.0913 \\
30 & 0.7487 & (0.0031) & 0.4578 & 0.0979 \\
100 & 0.7483 & (0.0031) & 0.4673 & 0.0990 \\
300 & 0.7609 & (0.0030) & 0.4701 & 0.0990 \\
1000 & 0.7501 & (0.0031) & 0.4707 & 0.0994 \\
3000 & 0.7550 & (0.0030) & 0.4697 & 0.0983 \\
10000 & 0.7566 & (0.0030) & 0.4677 & 0.1001 \\
\bottomrule
\end{tabular}

%% file: tables_py/tab_ex3_finiten.tex
\begin{tabular}{lcccc}
\toprule
& $N=10^2$ & $N=10^3$ & $N=10^4$ & $h_\xi(1)$ \\
\midrule
$\xi = -4$ & $4.3686$ $(0.0008)$ & $4.3725$ $(0.0008)$ & $4.3717$ $(0.0008)$ & 4.3730 \\
$\xi = -2$ & $2.4276$ $(0.0008)$ & $2.4312$ $(0.0008)$ & $2.4324$ $(0.0009)$ & 2.4327 \\
$\xi = +0$ & $0.4677$ $(0.0009)$ & $0.4707$ $(0.0009)$ & $0.4701$ $(0.0009)$ & 0.4716 \\
$\xi = +1$ & $0.5765$ $(0.0014)$ & $0.5811$ $(0.0014)$ & $0.5837$ $(0.0014)$ & 0.5814 \\
$\xi = +2$ & $0.5886$ $(0.0015)$ & $0.5933$ $(0.0015)$ & $0.5916$ $(0.0015)$ & 0.5932 \\
$\xi = +4$ & $0.5876$ $(0.0015)$ & $0.5947$ $(0.0016)$ & $0.5931$ $(0.0016)$ & 0.5950 \\
\midrule
$\max_\xi |g_N^\xi(1) - h_\xi(1)|$ & $0.0073$ & $0.0015$ & $0.0022$ & \\
\bottomrule
\end{tabular}